\newtheorem*{lemma*}{Lemma}
\newtheorem*{theorem*}{Theorem}
\newcommand{\Pf}{\mathrm{Pf}\,}
\newcommand{\Tr}{\mathrm{Tr}}
\renewcommand{\Re}{\mathrm{Re}}
\renewcommand{\Im}{\mathrm{Im}}
\renewcommand{\vec}[1]{\underline{#1}}
\renewcommand{\i}{\mathrm{i}}
\newcommand{\id}{\mathbbm{1}}
\newcommand{\sgn}{\mathrm{sgn}\,}
\newcommand{\scL}{\mathcal{L}}
\newcommand{\scS}{\mathcal{S}}
\newcommand{\scW}{\mathcal{W}}
\newcommand{\scD}{\mathcal{D}}
\newcommand{\scN}{\mathcal{N}}
\newcommand{\scO}{\mathcal{O}}
\newcommand{\scZ}{\mathcal{Z}}
\newcommand{\scw}{\mathfrak{w}}
\newcommand{\scs}{\mathfrak{s}}
\newcommand{\scC}{\mathcal{C}}
\newcommand{\scP}{\mathcal{P}}
\newcommand{\ut}{\tilde{u}}
\newcommand{\bt}{\tilde{b}}
\newcommand{\ct}{\tilde{c}}
\newcommand{\Wt}{\widetilde{W}}
\newcommand{\wt}{\widetilde{w}}
\newcommand{\Gammat}{\widetilde{\Gamma}}
\newcommand{\Dt}{\widetilde{D}}
\newcommand{\ch}{\mathrm{ch}}
\newcommand{\dsL}{\mathbb{L}}
\newcommand{\dsA}{\mathbb{A}}
\newcommand{\dsD}{\mathbb{D}}
\newcommand{\dsVL}{\mathbb{V}_\mathrm{L}}
\newcommand{\dsVR}{\mathbb{V}_\mathrm{R}}
\newcommand{\tA}{\widetilde{\dsA}}
\newcommand{\tL}{\widetilde{\dsL}}
\newcommand{\setV}{\mathfrak{V}}
\newcommand{\setU}{\mathfrak{U}}
\newcommand{\setI}{\mathfrak{I}}
\newcommand{\bX}{\overline{X}}
\newcommand{\bZ}{\overline{Z}}
\newcommand{\bO}{\overline{O}}
\newcommand{\lt}{\tilde{\ell}}
\newcounter{ls}
\begin{document}

\title{Exactly solvable dissipative dynamics and one-form strong-to-weak spontaneous symmetry breaking in interacting two-dimensional spin systems}

\author{Lucas S\'a}
\affiliation{TCM Group, Cavendish Laboratory, Ray Dolby Centre, University of Cambridge, JJ Thomson Avenue, Cambridge, CB3 0US UK}

\author{Benjamin B\'eri}
\affiliation{TCM Group, Cavendish Laboratory, Ray Dolby Centre, University of Cambridge, JJ Thomson Avenue, Cambridge, CB3 0US UK}
\affiliation{DAMTP, University of Cambridge, Wilberforce Road, Cambridge, CB3 0WA, UK}

\begin{abstract}
We study the dissipative dynamics of a class of interacting ``gamma-matrix'' spin models coupled to a Markovian environment. For spins on an arbitrary graph, we construct a Lindbladian that maps to a non-Hermitian model of free Majorana fermions hopping on the graph with a background classical $\mathbb{Z}_2$ gauge field. We show, analytically and numerically, that the steady states and relaxation dynamics are qualitatively independent of the choice of the underlying graph, in stark contrast to the Hamiltonian case. We also show that the exponentially many steady states provide a concrete example of mixed-state topological order, in the sense of strong-to-weak spontaneous symmetry breaking of a one-form symmetry. While encoding only classical information, the steady states still exhibit long-range quantum correlations. Afterward, we examine the relaxation processes toward the steady state by numerically computing decay rates, which we generically find to be finite, even in the dissipationless limit. However, we identify symmetry sectors where fermion-parity conservation is enhanced to fermion-number conservation, where we can analytically bound the decay rates and prove that they vanish in the limits of both infinitely weak and infinitely strong dissipation. Finally, we show that while the choice of coherent dynamics is very flexible, exact solvability strongly constrains the allowed form of dissipation. Our work establishes an analytically tractable framework to explore nonequilibrium quantum phases of matter and the relaxation mechanisms toward them.
\end{abstract}

\maketitle

\section{Introduction}

Interacting spin systems in two dimensions (2D) are a fertile ground for fundamental and applied physics. Among the many phases of matter that 2D spin systems can host, quantum spin liquids (QSLs)~\cite{anderson1973,anderson1987} stand out due to their exotic properties, such as the absence of local magnetic ordering, fractionalized excitations, and often topological order~\cite{kalmeyer1987,rokhsar1988,wen1991,senthil2000,moessner2001,kitaev2006AnnPhys}. These properties also make QSLs and other topologically ordered states potential candidates to realize fault-tolerant quantum computers and topological quantum memories~\cite{dennis2002JMP,kitaev2001Anyons,nayak2008RMP}. These fundamental and potentially technological features motivate intense theoretical and experimental research into QSLs~\cite{lee2008,savary2017RPP,zhou2017RMP,takagi2019NatRev,broholm2020Sci,chamorro2020,clark2021AnnuRev,trebst2022,khatua2023PhysRep,hermanns2018AnuuRev,knolle2019annurev}.

In the rich landscape of QSLs, the Kitaev honeycomb model~\cite{kitaev2006AnnPhys}---a model of spins-$1/2$ on the honeycomb lattice with highly anisotropic Heisenberg interactions---occupies a special place, as it provides a rare example of an exactly solvable model with both gapped and gapless QSL phases. The fractionalization of the spins into free itinerant Majorana fermions and static classical $\mathbb{Z}_2$ gauge fields yields not only a detailed knowledge of the ground states but also key insights into the dynamics and signatures of the anyonic excitations~\cite{knolle2014PRL,halasz2014PRB,knolle2015PRB,nasu2015PRB,nasu2016NatPhys,nasu2017PRL,gohlke2017PRL,hermanns2018AnuuRev,knolle2019annurev}.
Several variants of the Kitaev model that preserve exact solvability while hosting distinct types of QSLs have been proposed~\cite{yao2007PRL,yao2009PRL,chua2011PRB,wu2009PRB,willans2011PRB}.
However, imperfections in devices or samples and external influences are, to a certain extent, unavoidable in realistic systems, including in current-term noisy quantum hardware. It is thus natural to ask what the fate of topological order and the QSL state is once the system is explicitly coupled to the environment: which, if any, exactly solvable QSL models can still be constructed, and what is their phenomenology?

In this paper, we show that Kitaev's approach can be used to construct an exactly solvable interacting model on an \emph{arbitrary} two-dimensional graph, while including coupling to a Markovian bath that leads to Lindbladian dynamics. 
The local degrees of freedom of these models are ``gamma matrices''~\cite{wu2009PRB,yao2009PRL,chua2011PRB}, replacing Pauli matrices of spin-1/2 models by elements of a suitable Clifford algebra. 
In stark contrast to the closed-system limit, where qualitatively different states arise for different choices of 2D spin models~\cite{kitaev2006AnnPhys,yao2007PRL,yao2009PRL,chua2011PRB}, we show that the qualitative properties of these models are largely insensitive to details of the coherent interactions (e.g., graph geometry or relative coupling strengths). In particular, the steady-state manifold is completely determined by symmetry and has the same structure for any graph or parameters of the model. The transient dynamics are also qualitatively similar. 
Two special cases of our model on the square lattice were studied in Refs.~\onlinecite{shackleton2024,gidugu2024,dai2023PRB}. There, the authors identified the steady states, reduced the Lindbladian to its free-fermion representation inside fixed symmetry sectors, and numerically studied the dominant relaxation timescales. Other graphs have not been investigated; most prominently missing is a free-fermion dissipative honeycomb model (the spin-$1/2$ honeycomb model of Ref.~\onlinecite{hwang2024Quantum}, which, while sharing the same steady states, is not free-fermion-reducible; see also Ref.~\onlinecite{pocklington2025}). 

In addition to providing key insights into the dynamics of dissipative QSLs, the gamma-matrix model is also particularly useful to study the intensely debated nature of mixed-state topological order~\cite{lee2025quantum,wang2024SciPost,wang2025PRXQ,chen2024PRL,dai2025PRB,sohal2025PRXQ,sang2024PRL,li2024arXiv,wang2025PRXQanalogEE,lessa2025arXiv,sang2024PRX,li2025PRB,ellison2025PRXQ}. In the ground-state physics of pure states, the characteristic features of topological order, such as robust ground-state degeneracy and long-range correlations, go hand-in-hand with the ability to store and manipulate quantum information coherently. On the other hand, it has been understood that mixed states are richer and allow exotic phases of matter where a quantum memory is degraded to a classical memory but still exhibits nontrivial long-range quantum correlations~\cite{bao2023arXiv,fan2024PRXQ,li2025PRB,ellison2025PRXQ,kim2024arXiv}. It was also recognized~\cite{lee2023PRXQ} that the transition between a coherent ground-state phase and an incoherent mixed-state phase can happen sharply at a critical decoherence strength (or, in a continuous-time description, dynamically at a critical time), when a strong symmetry of the system is spontaneously broken~\cite{lee2023PRXQ,ma2023arXiv,sala2024PRB,lessa2025PRXQ,zhang2025PRB,ma2025PRXQ,moharramipour2024PRXQ,gu2024arXiv,liu2024arXiv,weinstein2025PRL,ogunnaike2023PRL,xu2025PRB,huang2025PRB,zhang2024arXiv,guo2024arXiv,kim2024arXiv,ando2024arXiv,chen2025PRB,sun2024arXiv,orito2025PRB,lu2024arXiv,feng2025arXiv} to a weak symmetry. If the spontaneously broken symmetry is a higher-form symmetry~\cite{zhang2025PRB}, i.e., whose symmetry operators and charged operators are extended objects, then the resulting mixed state exhibits a sense of mixed-state topological order. We show that the gamma-matrix models give examples of such strong-to-weak spontaneous symmetry breaking (SW-SSB) and that its steady-state manifold forms a classical memory with such quantum mixed-state topological order.

Finally, we can ask where our work fits in the broader landscape of exactly solvable open quantum systems, which has steadily grown in recent years to include Lindbladian dynamics of quadratic systems~\cite{prosen2008,prosen2008prl,prosen2010jstat,prosen2010njp,zunkovivc2010,prosen2010jphysa,znidaric2010jphysa,prosen2011ilievski,zunkovic2014}, free Hamiltonians with dephasing~\cite{znidaric2010,znidaric2011,temme2012,eisler2011}, steady states of boundary-driven interacting spin chains~\cite{prosen2011a,prosen2011b,prosen2013prl,karevski2013,prosen2014,popkov2015,prosen2015REVIEW,buca2018,popkov2020prl}, interacting Lindbladians that are mapped to Bethe-ansatz integrable non-Hermitian Hamiltonians~\cite{medvedyeva2016,rowlands2018,shibata2019a,shibata2019b,ziolkowska2020SciPost}, collective spin models~\cite{ribeiro2019}, triangular Lindbladians~\cite{torres2014PRA,nakagawa2020PRL,buca2020NJP}, and quantum-channel circuits built from integrable Trotterizations~\cite{sa2021PRB,deleeuw2021PRL,deleeuw2022ARXIV,su2022PRB,znidaric2024b}. Our model stands apart from these examples because it displays features expected from generic open quantum systems, e.g., parametric separation of timescales~\cite{wang2020PRL,sommer2021PRR,li2022PRB,hartmann2024,popkov2021PRL,nakagawa2020PRL,Zhou2021PRR}, anomalous relaxation~\cite{sa2022PRR,garcia2023PRD2,shackleton2024,mori2024,yoshimura2024,yoshimura2025}, and SW-SSB~\cite{lee2023PRXQ,ogunnaike2023PRL,ma2023arXiv,ma2025PRXQ,sala2024PRB,lessa2025PRXQ,xu2025PRB,moharramipour2024PRXQ,gu2024arXiv,huang2025PRB,zhang2024arXiv,zhang2025PRB,liu2024arXiv,guo2024arXiv,weinstein2025PRL,kim2024arXiv,ando2024arXiv,chen2025PRB,sun2024arXiv,orito2025PRB,lu2024arXiv,feng2025arXiv},
while retaining analytic tractability and being amenable to efficient large-scale numerical treatment.

\begin{figure*}[t]
    \centering
    \includegraphics[width=0.97\textwidth]{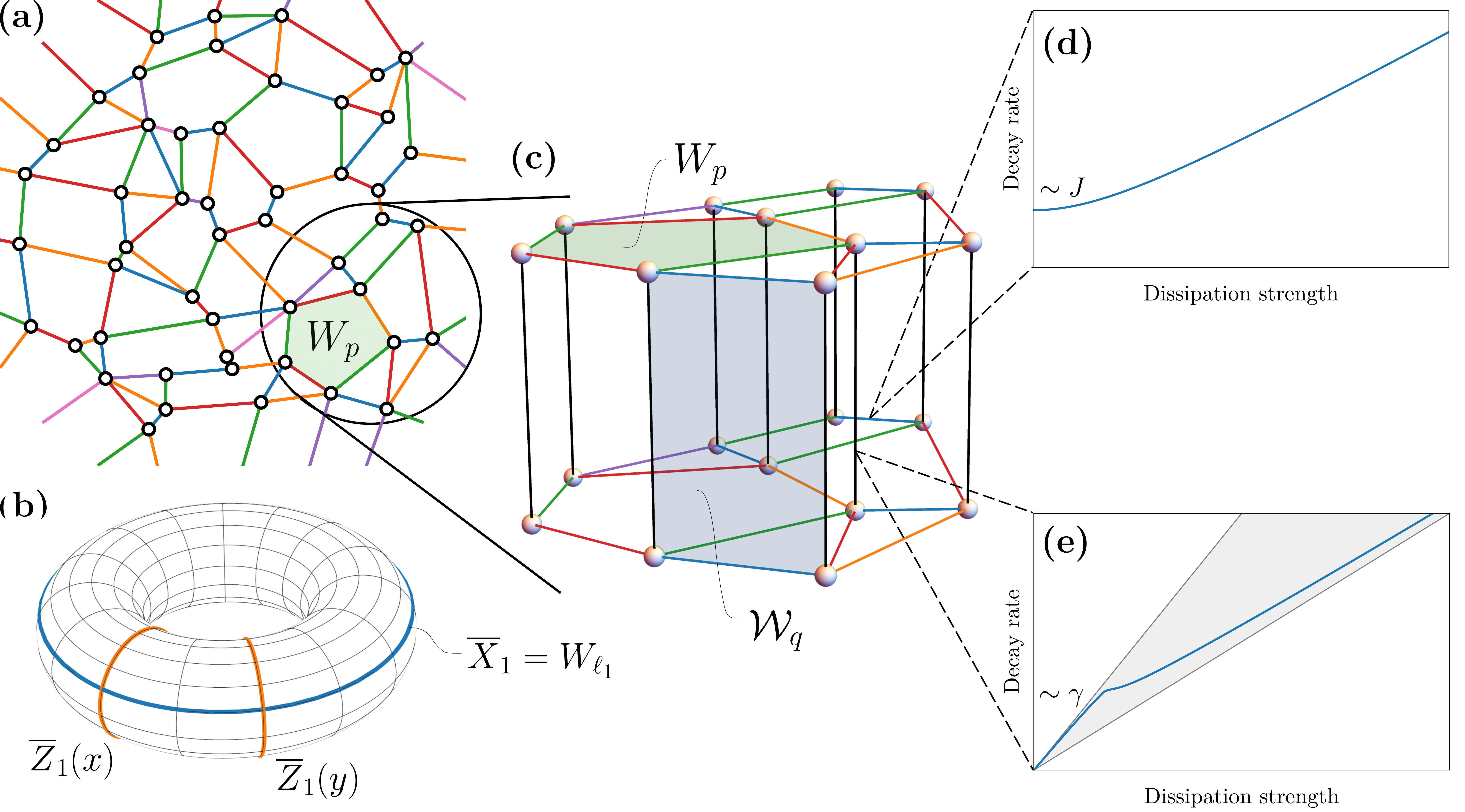}
    \caption{Schematic representation of the main results. 
    (a) A ``proper edge coloring'' of a graph (i.e., such that no two edges of the same color meet at any vertex) defines a gamma-matrix Hamiltonian with Heisenberg interactions set by the edge colors. The gamma-matrix ``spins'' and itinerant Majorana fermions (resulting from the gamma-matrix-to-Majorana mapping of Sec.~\ref{sec:free-fermion}) live on the vertices and gauge fields $u_{ij}$ (resulting from the same mapping) live on the edges. The gauge fields give rise to fluxes $W_p$ through elementary plaquettes $p$. 
    (b) When the system is realized on a topologically nontrivial surface like a torus, fluxes can thread noncontractible loops. One can define logical operators $\bX_a$ and $\bZ_a$ for each noncontractible loop. In the steady state, the product of two $\bZ$ operators acquires a nonzero correlation even when the two operators are arbitrarily far apart. Consequently, the $\bX$ operators are spontaneously broken from a strong to a weak one-form symmetry.
    (c) Vectorization: doubling of degrees of freedom to represent the open system. In the free-fermion solvable case, two copies (layers) of the graph are connected by interlayer gauge fields $v_j$ (black vertical lines) arising from dephasing.
    The symmetry sectors of the Lindbladian are labeled by $W_p$, $\bX_a$, and interlayer fluxes $\scW_q$, and each has its own fermionic vacuum. The steady states are the fermionic vacua in sectors with $\scW_q=1$ for all interlayer plaquettes $q$, i.e., without any gauge flips ($\scW_q=-1)$, and they are labeled by their flux patterns.
    The decaying states of the Lindbladian are the vacua with gauge flips and fermionic excitations.
    (d) When intralayer gauge fields are flipped, the excitations conserve only fermionic parity, which is determined by a Pfaffian method we describe. For weak dissipation, their decay rate is proportional to the Hamiltonian coupling strength $J$ when the thermodynamic limit is taken first (anomalous relaxation).
    (d) If only interlayer gauge fields $v_j$ are flipped, the excitations are labeled by a conserved fermion number, and their decay rates can be analytically bounded with Bendixson inequalities (shaded region). At weak dissipation, the decay rate of all excitations is linear in the dissipation strength $\gamma$.}
    \label{fig:MainResults}
\end{figure*}

\subsection*{Main results}

Our main results are summarized in Fig.~\ref{fig:MainResults}. We consider a Lindbladian gamma matrix model on an arbitrary graph constrained only by a \emph{color rule}, Fig.~\ref{fig:MainResults}(a).
We identify certain closed and open strings of gamma matrices as the strong and weak symmetries~\cite{buca2012} of the model, respectively.
Accordingly, the Lindbladian has an exponentially large number of steady states, labeled by the eigenvalues of closed strings (fluxes).

If the graph is embedded on a topologically nontrivial surface, closed strings can wrap around noncontractible loops, Fig.~\ref{fig:MainResults}(b). Their eigenvalues in different steady states encode \emph{classical} information~\cite{li2025PRB,ellison2025PRXQ}, which can be manipulated with logical operators constructed from closed gamma-matrix strings on the dual graph. Despite being a classical memory, the steady states display nontrivial topological order intrinsic to mixed states, which we detect by the SW-SSB~\cite{lee2023PRXQ,ogunnaike2023PRL,ma2023arXiv,ma2025PRXQ,sala2024PRB,lessa2025PRXQ,xu2025PRB,moharramipour2024PRXQ,gu2024arXiv,huang2025PRB,zhang2024arXiv,zhang2025PRB,liu2024arXiv,guo2024arXiv,weinstein2025PRL,kim2024arXiv,ando2024arXiv,chen2025PRB,sun2024arXiv,orito2025PRB,lu2024arXiv,feng2025arXiv} of the noncontractible fluxes. 

To go beyond the general symmetry considerations, we restrict our attention to the subset of jump operators for which the Lindbladian can be reduced to a bilayer non-Hermitian Hamiltonian of free Majorana fermions hopping on a background of classical $\mathbb{Z}_2$ gauge fields (both inside each layer and between the two layers), Fig.~\ref{fig:MainResults}(c).
The steady states of this bilayer operator are the vacua of interlayer (adjoint) fermions (pairings of one Majorana from each layer), distinguished by different flux patterns. 
Open gamma-matrix strings carry fractionalized operators at their endpoints, which can be of three types, corresponding to intralayer and interlayer gauge flips (leading to additional vacua that are not steady states) and Majorana fermions that populate the vacua. Each type gives parametrically distinct relaxation timescales.

In the presence of intralayer gauge flips, the dynamics conserve only the parity of adjoint fermions, Fig.~\ref{fig:MainResults}(d). We derive an analytical constraint on the parity of the excitation based on the sign of the Pfaffian of the single-particle Lindbladian. The spectral gap becomes finite at zero dissipation if the thermodynamic limit is taken first, a phenomenon dubbed anomalous relaxation~\cite{garcia2023PRD2,mori2024,yoshimura2024,yoshimura2025}.

In the absence of intralayer gauge flips, the dynamics conserve adjoint-fermion number, Fig.~\ref{fig:MainResults}(e). We show that the ensuing relaxation rates are largely independent of the microscopic details of the graph structure and provide analytic bounds. Furthermore, we prove that the decay rates always vanish in the limits of dissipation going to zero and, in some sectors, also to infinity.

\section{Dissipative gamma-matrix models on arbitrary graphs}
\label{sec:model}

\subsection{The Kitaev honeycomb model}

We start by recalling Kitaev's original honeycomb model~\cite{kitaev2006AnnPhys}. Consider the honeycomb lattice with spin-$1/2$ degrees of freedom living on the lattice sites described by Pauli matrices $\sigma^\mu_i$, $i\in\{1,\dots,N\}$ and $\mu\in\{x,y,z\}$. The Hamiltonian of the model describes a highly anisotropic Heisenberg interaction of the spins,
\begin{equation}
\label{eq:H_Kitaev}
    H_\mathrm{K}=
    -J_x\sum_{x\text{-edges}}\sigma^x_i\sigma^x_j
    -J_y\sum_{y\text{-edges}}\sigma^y_i\sigma^y_j
    -J_z\sum_{z\text{-edges}}\sigma^z_i\sigma^z_j,
\end{equation}
where the $x$-, $y$-, and $z$-edges connect sites $i$ and $j$ along the three different directions of the honeycomb lattice.

The Pauli matrices describing spin-$1/2$ degrees of freedom are the simplest (two-dimensional) Clifford algebra: the two generators $\sigma^x_i=(\sigma^x_i)^\dagger$ and $\sigma^y_i=(\sigma^y_i)^\dagger$ satisfy $\{\sigma^x_i,\sigma^y_i\}=0$ and $(\sigma^x_i)^2=(\sigma^y_i)^2=1$, and $\sigma^z_i=-\i \sigma^x_i\sigma^y_i$ is the chiral element of the algebra (i.e., an operator that anticommutes with all its elements). The crucial feature of Eq.~(\ref{eq:H_Kitaev}) is that there is a single $x$-, $y$-, and $z$-type link emanating from each site, which allowed Kitaev to reduce the Hamiltonian to that of free Majorana fermions hopping on a background gauge field~\cite{kitaev2006AnnPhys}, after fractionalization of the spins into Majorana fermions.
(Some variants of the Kitaev model can also be solved without recourse to fractionalization~\cite{gamayun2022,terentenkov2024}.)
For lattices with a higher valence (site coordination number), the same construction is still possible if one chooses the local degrees of freedom as gamma matrices in a higher-dimensional Clifford algebra. 
Hamiltonians of this type were studied in Refs.~\onlinecite{wu2009PRB,yao2009PRL,chua2011PRB}. 

\subsection{The gamma-matrix model}

We consider a model defined on an arbitrary graph $G$ with $N$ vertices $i$ and $E$ edges $\langle ij \rangle$, with $z=z(i)$ the local vertex valence (number of incident edges at vertex $i$). The local degrees of freedom are Euclidean gamma matrices $\Gamma_j^\mu=(\Gamma_j^\mu)^\dagger$, $j\in\{1,\dots,N\}$ and $\mu\in\{1,\dots,2k\}$, of dimension $2^k\times 2^k$ that satisfy $\{\Gamma_j^\mu,\Gamma_j^\nu\}=2\delta_{\mu\nu}$ and $[\Gamma^\mu_i,\Gamma^\nu_j]=0$ for $i\neq j$. Here, $2k=z$ ($2k=z+1$) if $z$ is even (odd).
Moreover, for convenience, we denote $\Gamma_j^0=\id$ (noting that $\Gamma^0_j$ is \emph{not} a member of the Clifford algebra and does \emph{not} satisfy the anticommutation relations).
The set of all possible products of gamma matrices forms a basis of operators of the local Hilbert space; of particular importance are 
the $\mu\neq\nu$ bilinears,
\begin{equation}
    \Gamma^{\mu\nu}_j=\i \Gamma^\mu_j\Gamma^\nu_j=\frac{\i}{2}[\Gamma^\mu_j,\Gamma^\nu_j],\quad \mu,\nu\in\{1,\ldots, 2k\},
\end{equation}
the chiral element,
\begin{equation}
	\label{eq:Gamma_chiral}
	\Gamma^\mathrm{ch}_j=\Gamma_j^{2k+1}:=\i^k \Gamma_j^1\cdots \Gamma_j^{2k},
\end{equation}
and the string operators defined in the next subsection.

\begin{figure}[tp]
    \centering
    \includegraphics[width=0.7\columnwidth]{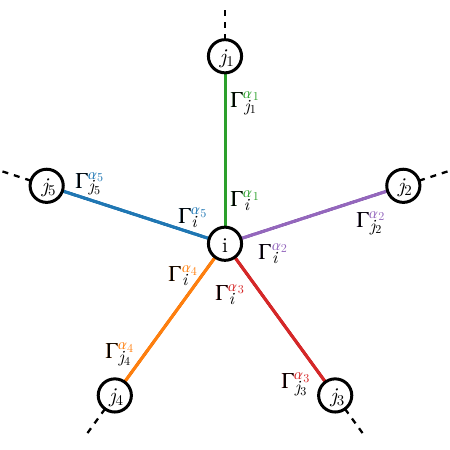}
    \caption{Spin Hamiltonians defined by the color rule.
    Each edge of the graph is assigned a color, and two gamma matrices with that color at the two endpoints of the edge interact in a term of the Hamiltonian. We illustrate the general construction for a vertex $i$ of valence $z(i)=5$.}
    \label{fig:HamiltonianIllustration}
\end{figure}

We assign a \emph{color} $\alpha=\alpha(\langle ij\rangle)\in\{1,\dots,1+\max_i z(i)\}$ to each edge, such that no two edges of the same color meet at any vertex of the graph---the \emph{color rule}, see Fig.~\ref{fig:MainResults}(a) for an example. That is, we assign a \emph{proper edge coloring} to the graph. 

The Hamiltonian of the model is then given by Heisenberg interactions of the gamma matrices determined by the colors of the edges, 
\begin{equation}
\label{eq:H_gamma}
    H=-\sum_{\langle ij\rangle} J_{ij}\Gamma^{\alpha(\langle ij\rangle)}_i\Gamma^{\alpha(\langle ij\rangle)}_j
    =:-\sum_{\langle ij\rangle} J_{ij}K_{ij},
\end{equation}
with real $J_{ij}=J_{ji}$, see Fig.~\ref{fig:HamiltonianIllustration}. Because of the color rule, each gamma matrix appears in exactly one edge operator $K_{ij}$, which, therefore, satisfy
\begin{equation}
\label{eq:def_H}
    K_{ij}K_{i'j'}=(-1)^{\delta_{ii'}}(-1)^{\delta_{jj'}}K_{i'j'}K_{ij}.
\end{equation}
This property will be the key to free-fermion reducibility.
Special cases of this Hamiltonian support different types of QSL ground states, namely, Kitaev's original model on the honeycomb lattice~\cite{kitaev2006AnnPhys} [Eq.~(\ref{eq:H_Kitaev})] and its generalizations to the square~\cite{yao2009PRL}, decorated honeycomb~\cite{yao2007PRL}, and kagome~\cite{chua2011PRB} lattices.

\subsection{Symmetries of the gamma-matrix Hamiltonian}
\label{sec:strings_fluxes}

The symmetries of the Hamiltonian are constructed from the string operators
\begin{equation}\label{eq:def_W}
\begin{split}
W_{\scP}^{\mu\nu}&=(-\i)^{|\scP|}
\Gamma_{j_1}^{\mu \alpha_{1,2}}
\Gamma_{j_2}^{\alpha_{1,2}\alpha_{2,3}} \cdots
\Gamma_{j_{|\scP|}}^{\alpha_{|\scP|-1,|\scP|} \nu}
\\ &=
\Gamma^\mu_{j_1}
K_{j_1j_2} K_{j_2j_3}\cdots K_{j_{|\scP|}-1,j_{|\scP|}}
\Gamma_{j_{|\scP|}}^\nu,
\end{split}
\end{equation}
where $\scP=\{j_1,\dots,j_{|\scP|}\}$ is a self-avoiding path on the graph of length $|\scP|$ passing through the sites $j_1,\dots,j_{|\scP|}$, $\alpha_{m,m+1}:=\alpha(\langle j_m j_{m+1}\rangle)$ matches the edge color, and $\mu,\nu\in\{0,1,\dots,2k+1\}$ such that $\mu\neq\alpha(\langle j_1,j_2\rangle)$ and $\nu\neq\alpha(\langle j_{|\scP|-1}j_{|\scP|}\rangle)$. (Recall that $\Gamma_j^0=\id$ by convention.)
From the second line in Eq.~(\ref{eq:def_W}), we see that $W_{\scP}^{\mu\nu}$ corresponds to a pair of gamma matrices, $\Gamma^\mu_{j_1}$ and $\Gamma_{j_{|\scP|}}^\nu$, at the endpoints of a string of terms of the Hamiltonian, $K_{ij}$. 

Throughout this work, some special cases of Eq.~(\ref{eq:def_W}) will prove important and we collect them here for later reference.
If the string terminates at the identity at both ends ($\mu=\nu=0$), then it corresponds to a pure string of Hamiltonian terms and we denote it by
\begin{equation}\label{eq:WP}
W_\scP=
\prod_{\langle ij\rangle\in\scP} K_{ij}.
\end{equation}
If the string terminates at a chiral element at both ends ($\mu=\nu=2k+1$), we denote it by
\begin{equation}
\label{eq:Wch}
     W^\ch_\scP=\Gamma_{j_1}^\ch K_{j_1j_2} K_{j_2j_3}\cdots K_{j_{|\scP|}-1,j_{|\scP|}} \Gamma_{j_{|\scP|}}^\ch
\end{equation}
and call it a chiral (open) string.
Strings can also be defined on a closed path $\scC$ if $j_1=j_{|\scC|}$;
in that case, the string operator, corresponding to a ``flux'' through the surface delimited by $\scC$, is given by
\begin{equation}
\label{eq:WC}
    W_\scC=\prod_{\langle ij\rangle\in\scC} K_{ij}.
\end{equation}

Not all strings of the form (\ref{eq:def_W}) commute with the Hamiltonian.
Since each $K_{ij}$ in $H$ anticommutes with precisely zero or two $K_{i'j'}$ in $W_\scC$, we find $[H,W_\scC]=0$ for all closed $\scC$. 
On the other hand, the $K_{ij}$ at the endpoint of an open string $W_\scP$ anticommutes with all $K_{i'j'}$ that share that endpoint but have a different color; hence, $[H,W_\scP]\neq0$ for all open $\scP$.
The addition of a gamma matrix $\Gamma_j^\mu$ with $\mu\in\{1,\dots,2k\}$ at the end of the string leads to commutation with all $K_{i'j'}$ except the one with color $\mu$; hence, also $[H,W_\scP^{\mu\nu}]\neq0$ for all open $\scP$ and $\mu,\nu\in\{1,\dots,2k\}$.  
Finally, open string operators commute with all $K_{ij}$, and hence the Hamiltonian, if and only if they terminate at a chiral element at both ends; hence, $[H,W_\scP^\ch]=0$.

A generic Hamiltonian of the form~(\ref{eq:def_H}) does not have any other symmetries besides $W_\scC$ and $W_\scP^\ch$. Moreover, not all $W_\scC$ and $W_\scP^\ch$ commute among themselves: $W_\scC W_{\scC'}=W_{\scC'} W_\scC$ and $W_\scC W^{\ch}_{\scP}=W^{\ch}_{\scP} W_\scC$, for all $\scC$, $\scC'$, and $\scP$, but $W^{\ch}_{\scP}W^{\ch}_{\scP'}=-W^{\ch}_{\scP'}W^{\ch}_{\scP}$ if and only if $\scP$ and $\scP'$ share an odd number of endpoints (they commute otherwise); hence, the eigenstates of $H$ are not simultaneous eigenstates of all $W_\scC$ and $W_\scP$, and only the former define the quantum numbers of $H$.

\subsection{Lindbladian dynamics}

We now couple the system to its environment. The time evolution of the system's density matrix $\rho_t$ is given by a quantum master equation, $\dot{\rho}_t=\scL(\rho_t)$, where the Liouvillian superoperator of Lindblad form~\cite{belavin1969,lindblad1976,gorini1976} (Lindbladian for short) is:
\begin{equation}
\label{eq:Lindblad}
	\mathcal{L}(\rho_t)
		=	-\i [H,\rho_t]+\sum_j \left(L_j \rho_t L_j^\dagger-\frac{1}{2}\{L_j^\dagger L_j, \rho_t\}\right).
\end{equation}

Our main focus will be on free-fermion reducible Lindbladians (see Sec.~\ref{sec:free-fermion}), which, as we shall show, requires single-site generalized ``dephasing'' jump operators (see Appendix~\ref{app:nontrivial_SS} for further details):
\begin{equation}
\label{eq:jumpops}
	L_j=\sqrt{\gamma_j}\,\Gamma^{\ch}_j,
\end{equation}
where $\gamma_j$ is the on-site dissipation strength.\footnote{If $z(j)$ is odd, $\Gamma_j^{2k}$ enters neither the Hamiltonian nor the jump operators. We can either regard it as an inert gamma matrix only necessary to complete the Hilbert space---which replicates the Lindbladian, leading to two degenerate copies that can be treated individually; or add a jump operator $L'_j=\sqrt{\gamma'_j}\Gamma^{2k}_j$---the formulas below must then be modified accordingly but the qualitative picture remains intact.} (We use the terms dissipation and dephasing interchangeably in this work.)
The jump operators~(\ref{eq:jumpops}) generalize the standard dephasing mechanism in spin-$1/2$ systems ($L_j=\sqrt{\gamma_j}\sigma^z_j$) to gamma-matrix ``spins''.
More generally, the symmetry analysis of Secs.~\ref{sec:symm_lindblad} and \ref{sec:TO} holds for any jump operator that is a single string of gamma matrices and conserves all fluxes $W_\scC$, i.e., $[L_m,W_\scC]=0$ for all $m$ and $\scC$. An example of such a jump operator is the edge operator $L_{\langle ij\rangle}\propto K_{ij}$~\cite{hwang2024Quantum}. 
This choice does not, however, preserve the free-fermion description of the model (see Appendix~\ref{app:nontrivial_SS}). Simultaneous flux conservation and free-fermion solvability reduce the choice of jump operators to Eq.~(\ref{eq:jumpops}). We will show below that, once this choice is made, the dynamics of the open system are largely independent of the choice of graph (i.e., of Hamiltonian); in contrast, for closed systems, different choices of graph lead to vastly different physics~\cite{kitaev2006AnnPhys,yao2007PRL,yao2009PRL,chua2011PRB,wu2009PRB,willans2011PRB}.

\subsection{Strong and weak symmetries}

Unitary symmetries of open quantum systems come in two flavors: strong and weak~\cite{buca2012} (sometimes also called exact and average, respectively), the properties of which we now briefly review. Let $S$ be a unitary operator in the physical Hilbert space. $S$ is a \emph{weak} symmetry of $\scL$ if $\scL(S\rho S^\dagger)=S\scL(\rho)S^\dagger$, for any operator $\rho$. Introducing the superoperator $\scS$ that acts by conjugation with $S$ (i.e., the adjoint action of $S$ on the space of operators), $\scS(\rho)=S\rho S^\dagger$, a weak symmetry is equivalent to the superoperator commutation relation $[\scL,\scS]=0$. 
On the other hand, $S$ is a \emph{strong} symmetry if $S$ commutes with all elements of the set $\{H,L_j,L_j^\dagger\}$ ($j\in\{1,\dots,N\}$). It follows that the Lindbladian superoperator commutes with both $\scS_+$ and $\scS_-$, where $\scS_+$ is the left action of $S$, $\scS_+(\rho)=S\rho$, and $\scS_-$ is the right action, $\scS_-(\rho)=\rho S^\dagger$. Clearly, a strong symmetry is also weak, but the converse is not true. A simple example of a weak symmetry that is not strong (and that occurs in our model) is if $S$ commutes with $H$ but anticommutes with (some of) the jump operators.
	
Let the distinct eigenvalues of $S$ be denoted as $s_\alpha=e^{\i \theta_\alpha}$ ($\alpha\in\{1,\dots,n\}$). The distinct eigenvalues of the weak symmetry superoperator $\scS$ are hence $\mathfrak{s}_{\alpha\beta}=s_\alpha s_\beta^*=e^{\i (\theta_\alpha-\theta_\beta)}$. Since the Lindbladian and the symmetry operator commute, they can be simultaneously diagonalized, with sectors labeled by the eigenvalues of the symmetry. As such, weak symmetry sectors are labeled by $\mathfrak{s}_{\alpha\beta}$ and strong symmetry sectors by a pair $(s_\alpha,s_\beta)$, where $s_\alpha$ ($s_\beta^*$) is the eigenvalue of $\scS_+$ ($\scS_-$). Traceful eigenoperators, namely, all steady states, belong to diagonal symmetry sectors~\cite{buca2012}, i.e., the $n$ symmetry sectors with $s_\alpha=s_\beta$ and $\mathfrak{s}_{\alpha\beta}=1$. Generically, each such sector contains one steady state (with fine-tuning, there could be multiple steady states per sector, but this does not occur here).

\subsection{Symmetries of the gamma-matrix Lindbladian}
\label{sec:symm_lindblad}

The strong and weak symmetries of the Lindbladian~(\ref{eq:Lindblad}) are given by the string operators defined in Eqs.~(\ref{eq:Wch}) and (\ref{eq:WC}).
\begin{itemize}
    \item \textit{Strong symmetries.} Since any $W_\scC$ commutes both with the Hamiltonian and all the jump operators, it is a strong symmetry of $\mathcal{L}$. 
    By a slight abuse of notation, we denote the left-action superoperator as $\scW_{\scC+}=W_\scC$ and the right-action superoperator as $\scW_{\scC-}=\Wt_\scC$.
    \item \textit{Weak symmetries.} Open chiral strings $W^\ch_{\scP}$ commute with the Hamiltonian but \emph{anticommute} with the jump operators located at the ends of the string. Consequently, $W^\ch_\scP$ are weak Liouvillian symmetries. Their superoperator action is the adjoint action of the chiral string operator,
    \begin{equation}
    \label{eq:scWch}
        \scW_\scP^\ch(\rho)=W^\ch_\scP \rho W^\ch_\scP.
    \end{equation}
\end{itemize} 

Not all $W_\scC$, $\Wt_\scC$, and $\scW_\scP^\ch$ are independent. As worked out in detail in Appendix~\ref{app:fluxes}, a minimal set---which labels the symmetry sectors---is $\{W_p,W_\ell,\scW_q\}_{p,\ell,q}$. Here, $W_p$ are $W_\scC$ for $\scC=p$, with $p$ the elementary plaquettes (or faces) of the graph. If the graph is embedded on a closed surface, the total flux through that surface is $\prod_p W_p=+1$ identically; this reduces the number of independent plaquettes by one. The flux through any other contractible loop is then the product of the fluxes through the plaquettes contained in the loop. Additionally, the surface on which we embed the graph may have noncontractible loops $\ell$ and there is an additional flux $W_{\ell}$ along each such loop (of which there are $b_1$---the first Betti number of the surface). There is also a weak symmetry $\scW_q:=\scW^{\ch}_{\langle ij\rangle}$ along each of the edges of the graph; in a doubled (i.e., vectorized) representation that we discuss in Sec.~\ref{sec:vectorization}, the weak symmetry corresponds to the flux through an \emph{interlayer plaquette} $q$ [see also Fig.~\ref{fig:MainResults}(c)]. Any string $\scW_\scP^\ch$ is then given by an appropriate product of elementary fluxes $\scW_q$.
As detailed in Appendix~\ref{app:fluxes}, the fluxes $\Wt_p$ and $\Wt_\ell$ are not independent as they are constrained by $W_p$, $W_\ell$ and $\scW_q$. In total, we have $2E-N+1$ independent conserved fluxes: $E-N+1$ strong fluxes $W_p$ and $W_\ell$ and $E$ weak fluxes $\scW_q$.
The fluxes have eigenvalues $\pm1$ ($\pm\i$) if the length of their boundary is even (odd), as follows from the Hermiticity of the gamma matrices and the prefactor $\i^{|\scP|}$ in Eq.~(\ref{eq:def_W})---this choice makes the antiunitary symmetries of the model manifest~\cite{kitaev2006AnnPhys} (see also Sec.~\ref{sec:excitations_number}). 
$\scW_q$ always has an even number of edges and thus always squares to $+1$.
We denote the eigenvalues of $W_p$, $W_\ell$, $\Wt_p$, $\Wt_\ell$, and $\scW_q$ by $w_p$, $w_\ell$, $\wt_p$, $\wt_\ell$, and $\scw_q$, respectively. We further denote a strong-flux configuration (i.e., an assignment $w_p=\pm1$ or $\pm\i$ and $w_\ell=\pm1$ or $\pm\i$ for all $p$ and $\ell$) by $\bm{w}$ and a weak-flux configuration (i.e., an assignment $\scw_q=\pm1$ for all $q$) by $\bm{\scw}$; there are $2^{E-N+1}$ strong-flux configurations and $2^E$ weak-flux configurations.

\section{Mixed-state topological order}
\label{sec:TO}

\begin{figure*}
    \centering
    \includegraphics[width=0.7\textwidth]{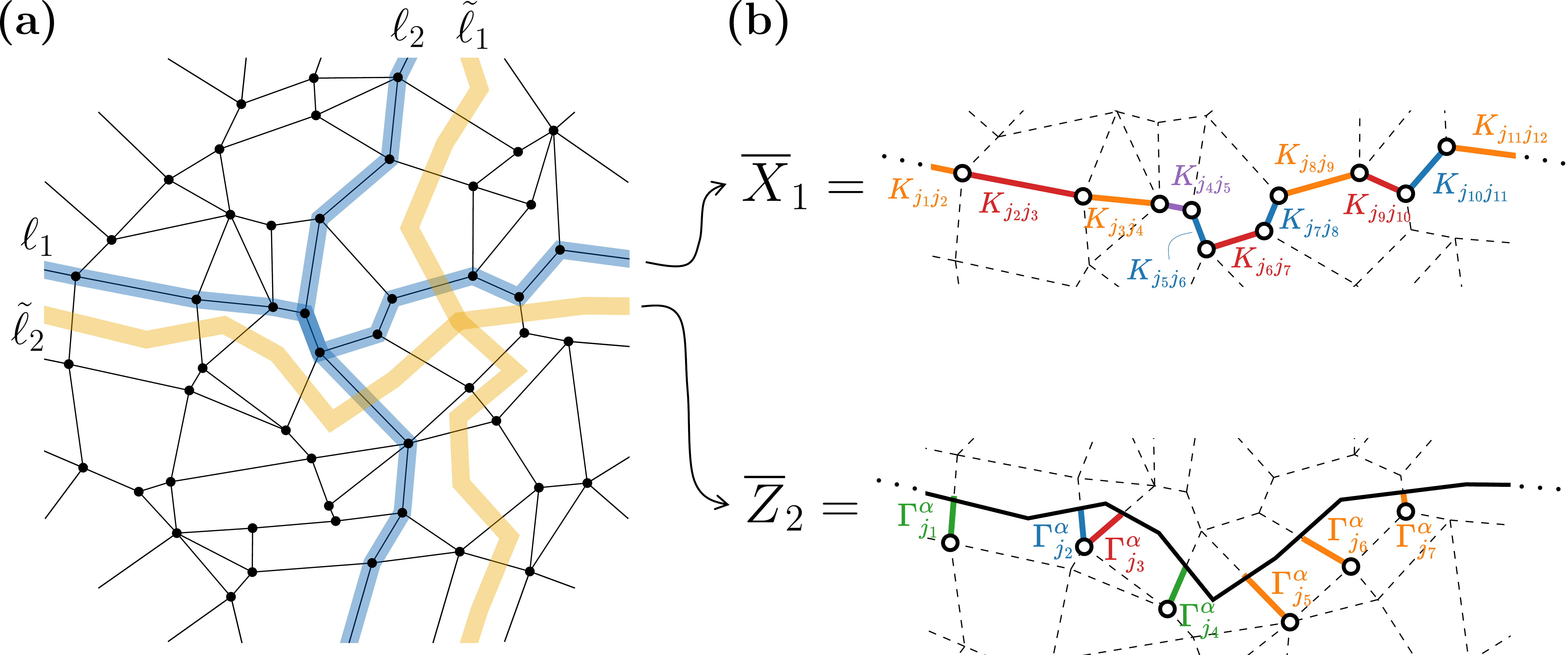}
    \caption{Logical operators of the Lindbladian gamma-matrix model. (a) Two noncontractible loops $\ell_1$ and $\ell_2$ for a random graph on the torus and two noncontractible loops $\lt_1$ and $\lt_2$ on the dual graph. $\ell_a$ and $\lt_b$ intersect and odd (even) number of times if $a=b$ ($a\neq b$). The same construction holds for higher-genus surfaces and any graph. (b) Construction of the logical operations in terms of gamma matrices.}
    \label{fig:logicals}
\end{figure*}

\subsection{Steady states}
\label{sec:steady}

The steady states $\rho_\mathrm{ss}$ of $\mathcal{L}$ satisfy $\dot\rho_\mathrm{ss}=\mathcal{L}(\rho_\mathrm{ss})=0$, i.e., they are eigenoperators of $\mathcal{L}$ with zero eigenvalue. As discussed above, there is one such steady state for each eigenvalue of a strong symmetry, i.e., for each strong-flux configuration. Hence, we have $2^{E-N+1}$ distinct steady states, which we can build explicitly. 

Because jump operators are Hermitian, the identity matrix (the fully-mixed state) is annihilated by $\scL$ and, because of the strong symmetry property, so is any product of strong fluxes. However, the identity and such products are not eigenstates under left multiplication by other flux combinations; therefore, they do not correspond to physical steady states. Instead, the latter are given by the infinite-temperature state projected into the sectors with fixed eigenvalue of the strong fluxes:
\begin{equation}
\label{eq:ss}
    \rho_{\bm{w}}
    =\frac{1}{\mathcal{Z}}P_{\bm{w}}.
\end{equation}
Here,
\begin{equation}
\label{eq:def_projector}
    P_{\bm{w}}=\prod_{\ell}\frac{\id+w_\ell W_\ell}{2}
    \prod_p \frac{\id+w_{p}W_{p}}{2}
\end{equation}
is the projector into the symmetry sector labeled by the strong-flux configuration $\bm{w}$, and $\mathcal{Z}=2^{N-1}$ is the dimension of the symmetry sectors.\footnote{This can be shown as follows. The total Hilbert space dimension is $\prod_i 2^{z(i)/2}=2^{\sum_i z(i)/2}=2^E$, where in the last step we used the handshaking lemma. Assuming all the sectors have the same dimension, we divide the total dimension by the number $2^{E-N+1}$ of distinct steady states to find $\scZ=2^{N-1}$. We justify our assumption once we introduce Majorana fermions in Sec.~\ref{sec:free-fermion}.
}
Using Kitaev-model terminology~\cite{kitaev2006AnnPhys}, we say there is a vortex at plaquette $p$ if $w_p=-1$ and, therefore, a steady state is labeled by its vortex pattern.
Because of the commutativity of all fluxes, we have $\scW_q(\rho_{\bm{w}})=\rho_{\bm{w}}$, as required from a steady state under a weak symmetry.

In closed quantum systems, a hallmark of topological order is topological ground-state degeneracy, with distinct ground states labeled by eigenvalues of noncontractible string operators~\cite{wen1990,wen1995,wen2003PRL,kitaev2001Anyons,levin2005PRB,nayak2008RMP,wen2017RMP,simon2023book}. This degeneracy signals a generalized form of symmetry breaking, namely that of one-form symmetries furnished by these noncontractible strings~\cite{hastings2005PRB,nussinov2009,gaiotto2015JHEP,mcgreevy2023Annu}.
In our gamma-matrix model, $W_\ell$ are such operators and their eigenvalues $w_\ell$ label steady states; these are thus topologically degenerate in this sense. 
Hence, a natural question is to what extent $\rho_{\bm{w}}$ exemplifies mixed-state topological order~\cite{wang2024SciPost,wang2025PRXQ,chen2024PRL,sang2024PRX,dai2025PRB,li2025PRB,ma2025PRXQ,sohal2025PRXQ,sang2024PRL,ellison2025PRXQ,li2024arXiv,sala2024PRB,lessa2025PRXQ,wang2025PRXQanalogEE,moharramipour2024PRXQ,gu2024arXiv,zhang2025PRB,lessa2025arXiv}. We show that $\rho_{\bm{w}}$ displays topological order in the sense of exhibiting SW-SSB~\cite{lee2023PRXQ,ogunnaike2023PRL,ma2023arXiv,ma2025PRXQ,sala2024PRB,lessa2025PRXQ,xu2025PRB,moharramipour2024PRXQ,gu2024arXiv,huang2025PRB,zhang2024arXiv,zhang2025PRB,liu2024arXiv,guo2024arXiv,weinstein2025PRL,kim2024arXiv,ando2024arXiv,chen2025PRB,sun2024arXiv,orito2025PRB,lu2024arXiv,feng2025arXiv} of the one-form symmetries~\cite{zhang2025PRB} $W_\ell$.
These states can, however, encode only classical information (akin to the decohered toric code~\cite{lee2023PRXQ,sang2024PRX,bao2023arXiv,fan2024PRXQ,ellison2025PRXQ,zhang2025PRB,lessa2025arXiv,li2025PRB}), as one would naturally expect from their almost fully-mixed nature.

\subsection{Symmetries and order parameters}

We view $\rho_{\bm{w}}$ as logical states, and to perform logical operations, we define (see also Fig.~\ref{fig:logicals}):
\begin{align}
    &\bX_a=W_{\ell_a}=\prod_{\langle ij\rangle\in\ell_a}K_{ij},
    \\
    &\bZ_a=\prod_{\langle ij\rangle\in\lt_a}\Gamma_i^{\alpha},
\end{align}
where $a\in\{1,\dots,b_1\}$ and $\lt_a$ is a path on the dual graph
such that $\ell_a$ and $\lt_b$ intersect an odd (even) number of times if $a=b$ ($a\neq b$). The index $i$ of $\Gamma_i^\alpha$ in $\bZ_a$ can be chosen as either endpoint of each edge of the original graph crossed by $\lt_a$, and $\alpha=\alpha(\langle ij\rangle)$ is the color set by the Hamiltonian. These logical operators satisfy
\begin{equation}
\label{eq:logical_algebra}
    [\bX_a,\bX_b]=[\bZ_a,\bZ_b]=0
    \quad \text{and} \quad
    \bX_a \bZ_b = (-1)^{\delta_{ab}}\bZ_b \bX_a.
\end{equation}
The logical operators can be deformed without changing the commutation relations (\ref{eq:logical_algebra}) by multiplying them with products of $K_{ij}$ and $\Gamma_j^\ch$. In particular, the two choices of index $i$ for $\Gamma_i^\alpha$ above differ by $K_{ij}$ on the edge $\langle ij \rangle$ crossed by $\lt_a$. The path of $\bX_a$ is deformable by the multiplication by a $W_p$ (a direct-graph plaquette) adjacent to $\ell_a$; similarly, the path of $\bZ_a$ on the dual graph is deformable by the multiplication by $\Gamma_j^\ch$ (a dual-graph plaquette) on a site $j$ adjacent to $\lt_a$.
The logical operators act on $\rho_{\bm{w}}$ as
\begin{align}
    \label{eq:logical_Z}
    &\bX_a \rho_{\bm{w}} = \rho_{\bm{w}}\bX_a = w_{\ell_a}\rho_{\bm{w}},
    \\
    \label{eq:logical_X}
    &\bZ_a \rho_{\bm{w}} = \rho_{\bm{w'}}\bZ_a,
\end{align}
with $\bm{w'}$ such that $w'_{p}=w_p$ for all $p$ and $w'_{\ell_b}=(-1)^{\delta_{ab}}w_{\ell_b}$.

\subsection{One-form strong-to-weak spontaneous symmetry breaking}

Above, we discussed the strong and weak symmetries of the Lindbladian. They can also be defined directly for mixed states. A state $\rho$ has a strong symmetry if $\scS_+(\rho)=S\rho=e^{\i\theta}\rho$ for some $\theta$, while it has a weak symmetry if instead only $\scS(\rho)=S \rho S^\dagger =\rho$, where, as before, $S$ is a unitary symmetry operator. Let us consider a local operator $O$ that is charged under $S$ such that $S^\dagger O S = e^{\i q_O}O$, $q_O\notin 2\pi\mathbb{Z}$. Then, it follows that if $\rho$ has a weak symmetry, $\Tr[\rho O]=0$; if $\rho$ has a strong symmetry, besides the previous condition, also $\Tr[\rho O \rho O^\dagger]=0$. 

If upon substituting $O\to O(x) \bO(y)$, where $O(x)$ and $\bO(y)$ are operators with charge $q_O$ and $-q_O$ and localized near $x$ and $y$, respectively, the correlators acquire a finite value when $|x-y|\to\infty$, then the respective symmetry is spontaneously broken by $\rho$~\cite{lee2023PRXQ,ma2023arXiv,sala2024PRB,lessa2025PRXQ,zhang2025PRB,liu2024arXiv,weinstein2025PRL,gu2024arXiv}. More concretely, if the standard correlator (which can be viewed as a R\'enyi-1 correlator)
\begin{equation}
    \Tr[\rho O(x)\bO(y)]\sim O(1),
\end{equation}
then the symmetry (either strong or weak) is completely broken. If, instead, the R\'enyi-1 correlator vanishes but the R\'enyi-2 correlator 
\begin{equation}
    \frac{\Tr[\rho O(x)\bO(y) \rho \bO(y)^\dagger O(x)^\dagger]}{\Tr[\rho^2]}\sim O(1),
\end{equation}
then the strong symmetry is spontaneously broken to a weak one; there is SW-SSB. 
(This is because $\scO(\rho)=\bO\rho \bO^\dagger$ is charged under the strong symmetry $\scS_+$, $\scO\scS_+ =e^{\i q_O}\scS_+\scO$, but neutral under the weak symmetry $\scS$, $ \scO\scS = \scS\scO$.)
If both correlators vanish, the symmetry is unbroken.

We now turn to our gamma-matrix model. The logical operator $\bZ_a$ in Eq.~(\ref{eq:logical_Z}) is charged under the noncontractible flux $W_{\ell_a}$: $W_{\ell_a} \bZ_a W_{\ell_a} = -\bZ_a$. In any steady state $\rho_{\bm{w}}$, the R\'enyi correlators are given by
\begin{align}
    \label{eq:renyi-1_gamma}
    &\Tr[\rho_{\bm{w}} \bZ_a(x)\bZ_a(y)]=0,
    \\
    \label{eq:renyi-2_gamma}
    &\frac{\Tr[\rho_{\bm{w}} \bZ_a(x)\bZ_a(y) \rho_{\bm{w}} \bZ_a(y)\bZ_a(x)]}{\Tr[\rho_{\bm{w}}^2]}=1,
\end{align}
for any $x\neq y$, where $\bZ(x)$ is defined on a path that can be continuously deformed to $\lt_a$ and passes through site $x$. 
To arrive at Eq.~(\ref{eq:renyi-1_gamma}) we used that $\bZ_a(x)$ and $\bZ_a(y)$ differ by a string of $K_{ij}$ and $\Gamma_j^\ch$, and the product of a steady state and any string $R$ of gamma matrices that is algebraically independent of $W_p$ and $W_\ell$ (hence effectively acts as a tensor product $\rho_{\bm{w}}\otimes R$) is traceless.
In Eq.~(\ref{eq:renyi-2_gamma}), the string of gamma matrices arising from the left action commutes with $\rho_{\bm{w}}$ and cancels the string resulting from the right action; the numerator thus equals $\Tr[\rho_{\bm{w}}^2]$ and the result follows.
Because the charged operators are extended one-dimensional objects, the noncontractible fluxes are one-form symmetries~\cite{hastings2005PRB,nussinov2009,gaiotto2015JHEP,mcgreevy2023Annu} and the steady states exemplify one-form SW-SSB~\cite{zhang2025PRB}. 

Finally, we note that an alternative, nonequivalent definition of SW-SSB exists~\cite{lessa2025PRXQ}, in which the R\'enyi-2 correlator is replaced by the fidelity correlator $\Tr\sqrt{\sqrt{\rho}O(x)\bO(y) \rho \bO(y)^\dagger O(x)^\dagger\sqrt{\rho}}$. In our case, at least in the steady state, the two correlators coincide.

\subsection{Classical memory}

While $\rho_{\bm{w}}$ is a logical state, it is only a classical memory, encoding $b_1$ classical bits. Indeed, an initial state with finite overlaps $\pi_{\bm{w}}$ with the steady states $\rho_{\bm{w}}$ (where $\sum_{\bm{w}}\pi_{\bm{w}}=1$), evolves, at infinitely-long times, into the statistical mixture (i.e., incoherent superposition) $\sum_{\bm{w}}\pi_{\bm{w}}\rho_{\bm{w}}$. This idea can be formalized~\cite{li2025PRB,ellison2025PRXQ} by identifying $\rho_{\bm{w}}$ as \emph{isolated} extremal points in the convex set of steady states.

This classical memory is inherently robust against local perturbations, by an argument similar to the ``cleaning lemma''~\cite{bravyi2009NJP}. Any perturbation to the Hamiltonian or any jump operator that does not commute with the strong fluxes leads to an infinite-temperature steady state in the subspace labeled by those fluxes. However, if the perturbation does not commute with a noncontractible flux but is local, we can deform the noncontractible loop around the perturbation and use this deformed logical operator instead. The robustness holds as long as the perturbation's support does not wind around any noncontractible loop (if it does, then it removes the corresponding encoded bits from the memory). 

Here we addressed topological order and the robustness of the classical memory only at times sufficiently long such that the state can be smoothly connected to the steady state. Studying these properties at finite times remains an interesting open question for future work.

\section{Fractionalization and free-fermion representation}
\label{sec:free-fermion}

\subsection{Majorana representation}
\label{sec:Majoranamap}

The Lindbladian~(\ref{eq:Lindblad}) becomes exactly solvable (i.e., reducible to exponentially many single-particle Lindbladians) if we fractionalize the degrees of freedom. To this end, we introduce $2k+2$ flavours of Majorana fermions per site, $b^\mu_j$ ($\mu\in\{1,\dots,2k\}$), $b_j^\ch=b_j^{2k+1}$, and $c_j$, and write the gamma matrices as
\begin{equation}
\label{eq:frac_gamma}
    \Gamma^\mu_j=\i b^\mu_j c_j.
\end{equation} 
The Majoranas are Hermitian and satisfy the Clifford algebra $\{b^\mu_i,b^\nu_j\}=2\delta_{ij}\delta_{\mu\nu}$, $\{c_i,c_j\}=2\delta_{ij}$, and $\{b_i^\mu,c_j\}=0$. The inclusion of two additional Majoranas (i.e., an ordinary fermion) per site doubles the Hilbert space dimension, but because the chiral element is defined from the other gamma matrices through Eq.~(\ref{eq:Gamma_chiral}), we have the constraint 
\begin{equation}
\label{eq:D_projection}
	D_j:=\i^{k+1} b_j^1\cdots b_j^{2k+1}c_j=1,
\end{equation}
i.e., only states in the subspace of even (local) Majorana parity $D_j=1$ on each site $j$ are mapped to valid states in the gamma-matrix Hilbert space.
For any physical state $\rho$, $D_j$ thus acts as a strong symmetry:
\begin{equation}\label{eq:physical_states}
    D_j\rho =\rho D_j=\rho.
\end{equation}

\subsection{Vectorization}
\label{sec:vectorization}

To proceed, we represent the Lindbladian superoperator as an operator on the space of density matrices, a procedure known as vectorization~\cite{prosen2008}. The vectorized density matrix is given by $|\rho\rangle=\sum_{mn}\langle m|\rho|n\rangle |m\rangle|n\rangle$, where $\{|n\rangle\}$ forms a basis of the original Hilbert space and $\{|m\rangle|n\rangle\}$ is a basis of the doubled (vectorized) Hilbert space. The identity matrix is mapped to the (unnormalized) infinite-temperature thermofield double state $|0\rangle:=\sum_m |m\rangle |
m\rangle$. For each Majorana fermion $c_j$ ($b^\mu_j$) in the original Hilbert space, we introduce two Majoranas in the doubled Hilbert space, $c_j$ and $\ct_j$ ($b^\mu_j$ and $\bt^\mu_j$); the former represent left action on the density matrix, while the latter represent (a phase times) right action. When acting on the identity matrix, left and right multiplication should coincide, leading to $c_j|0\rangle=e^{\i\theta}\ct_j|0\rangle$. The phase $e^{\i\theta}$ is fixed by demanding anticommutation of Majoranas from different copies: acting on both sides of the equality with $c_j$ we have $|0\rangle=-e^{\i\theta}\ct_j c_j|0\rangle=-e^{2\i\theta}|0\rangle$, whence $e^{\i\theta}=\pm \i$. The choice of sign is arbitrary and we fix
\begin{equation}
\label{eq:c_fiducial}
    c_j|0\rangle =-\i \ct_j|0\rangle
    \qquad \text{and} \qquad
    b^\mu_j|0\rangle =-\i \bt^\mu_j|0\rangle.
\end{equation}

Using Eq.~(\ref{eq:c_fiducial}), one can then vectorize the left and right action of $b^\mu_j$ ($c_j$) on an arbitrary operator as follows~\cite{prosen2008}. For $M$ a fermion monomial with $DM=(-1)^{\eta_M}MD$, $D=\prod_jD_j$, we map $c_j M\to c_j|M\rangle$ and $Mc_j\to(-1)^{\eta_M}(-\i\ct_j)|M\rangle$ ($b^\mu_j M\to b^\mu_j|M\rangle$ and $Mb^\mu_j\to(-1)^{\eta_M}(-\i\bt^\mu_j)|M\rangle$).\footnote{The inclusion of the fermionic parity $(-1)^{\eta_M}$ ensures that the product $c_i M c_j$ maps to the same expression regardless of whether the left or right action is vectorized first: $(c_iM)c_j=(-1)^{\eta_{c_iM}}(-\i\ct_j)|c_iM\rangle=(-1)^{\eta_{c_iM}}(-\i\ct_j)c_i|M\rangle=(-1)^{\eta_{M}}c_i(-\i\ct_j)|M\rangle=c_i (Mc_j)$, where in the second equality we used $|c_jM\rangle=c_j|M\rangle$ and in the third the anticommutation of $c_i$ and $\ct_j$ and $(-1)^{\eta_{c_iM}}=-(-1)^{\eta_M}$.}
A gamma matrix acting from the right gets mapped to $\Gammat^\mu_j=-\i \bt^\mu_j \ct_j$ [note the relative minus sign compared with Eq.~(\ref{eq:frac_gamma})].
With this mapping, the Lindbladian reads
\begin{equation}
\label{eq:Lind_quad_Majorana}
	\mathcal{L}=
	\sum_{\langle ij\rangle}J_{ij}\left(
	u_{ij}c_ic_j+\ut_{ij}\ct_i\ct_j
	\right)
	+\sum_j \gamma_j \left(v_j\i c_j\ct_j-\id\right),
\end{equation}
where the edge operators (referred to as \emph{gauge fields}) $u_{ij}$, $\ut_{ij}$, and $v_j$ are defined as
\begin{equation}
\label{eq:edge_operators_fermionic}
    u_{ij} = \i b^\alpha_i b_j^\alpha, \qquad
    \ut_{ij} = -\i \bt^\alpha_i \bt^\alpha_j, \qquad
    v_j = \i b^{\ch}_j \bt^{\ch}_j,
\end{equation}
with $\alpha=\alpha(\langle ij\rangle)$.
It proves convenient to regard the Lindbladian as a bilayer system with two identical layers with Hamiltonian couplings of opposite sign and imaginary interlayer couplings, see Fig.~\ref{fig:MainResults}(c). The Majorana fermions $c_j$ and $\ct_j$ live on the vertices of the upper and lower layers, respectively, the gauge fields $u_{ij}$ and $\ut_{ij}$ live on the edges of the upper and lower layers, and the gauge fields $v_j$ live on the interlayer edges.

All other operators can also be rewritten in terms of Majorana fermions and gauge fields. In this representation, the strong symmetry (\ref{eq:WC}) reads
\begin{equation}
\label{eq:W_u}
    W_\scC=(-\i)^{|\scC|}\prod^\circlearrowleft_{\langle ij \rangle \in \scC} u_{ij},
    \quad 
    \Wt_\scC=\i^{|\scC|}\prod^\circlearrowleft_{\langle ij \rangle \in \scC} \ut_{ij},
\end{equation}
for its left and right actions, respectively. The weak symmetry, i.e., conjugation with $W_q^\ch$ [Eq.~(\ref{eq:scWch})], becomes
\begin{equation}
\label{eq:interlayerflux_gaugefields}
    \scW_q =-v_i u_{ij} v_j \ut_{ji}=u_{ij}\ut_{ij}v_iv_j.
\end{equation}
This is a \emph{closed} string around an interlayer plaquette. 
Finally, the right action of the constraint Eq.~(\ref{eq:D_projection}) maps to
\begin{equation}
\label{eq:Dtilde_projection}
	\Dt_j:=(-\i)^{k+1} \bt_j^1\cdots \bt_j^{2k+1}\ct_j=1.
\end{equation}

It proves convenient to define the adjoint (interlayer) complex fermions~\cite{prosen2008}
\begin{equation}
	\label{eq:def_fFermions}
	f_j=\frac12(c_j+\i\ct_j),
	\qquad
	f_j^\dagger=\frac12(c_j-\i\ct_j),
\end{equation}
which satisfy the canonical anticommutation relation $\{f_i,f_j^\dagger\}=\delta_{ij}$. Because of Eq.~(\ref{eq:c_fiducial}), the fiducial state $|0\rangle$ is the adjoint fermionic vacuum, satisfying
\begin{equation}
\label{eq:fuv0}
   f_j|0\rangle=0,
   \quad
   u_{ij}|0\rangle=\ut_{ij}|0\rangle,
   \quad \text{and} \quad  
   v_i|0\rangle=-|0\rangle.
\end{equation}
Inverting Eq.~(\ref{eq:def_fFermions}) and inserting into Eq.~(\ref{eq:Lind_quad_Majorana}), we obtain
\begin{equation}
\label{eq:Lind_quad_Fermion}
\begin{split}
    &\scL=\sum_{\langle ij\rangle}J_{ij}\left[
    (u_{ij}+\ut_{ij})(f_i^\dagger f_j+f_if_j^\dagger)
    +(u_{ij}-\ut_{ij})\right.
    \\
    &\left.\times(f_i^\dagger f_j^\dagger+f_if_j)
    \right]+2
    \sum_j \gamma_j v_j f_j^\dagger f_j
    -\sum_j\gamma_j(1+v_j).
\end{split}
\end{equation}

\subsection{Gauge invariance}

Because of the color rule, the edge operators (\ref{eq:edge_operators_fermionic}) all commute with each other and with the Lindbladian. Moreover, they satisfy $u^2_{ij}=\ut^2_{ij}=v_j^2=1$. Hence, they can be viewed as classical $\mathbb{Z}_2$ gauge fields, $u_{ij},\ut_{ij},v_i\in\{-1,+1\}$, and we can replace the operators by these values in the Lindbladian.
Since for each $\pm1$ value of $u_{ij}$, $\ut_{ij}$, and $v_i$, Eq.~(\ref{eq:Lind_quad_Majorana}) [or Eq.~(\ref{eq:Lind_quad_Fermion})] is quadratic in fermionic operators, the model is now exactly solvable. The Lindbladian block-diagonalizes into sectors where the gauge fields act as multiples of the identity:
\begin{equation}
	\label{eq:Liouv_blocks_uv}
	\scL = \bigoplus_{\{u_{ij},\ut_{ij},v_j\}}\scL_{u_{ij},\ut_{ij},v_j}.
\end{equation}

Although all gauge fields commute with the Lindbladian, they do not commute with $D_j$ and $\Dt_j$ [Eqs.~\eqref{eq:D_projection} and \eqref{eq:Dtilde_projection}] and cannot label physical states; the physical subspace on which $\mathcal{L}$ acts [defined by Eq.~(\ref{eq:physical_states})] is a superposition of these sectors.
Conjugating by $D_j$, any gauge field terminating at site $j$ changes its sign:
\begin{equation}
\label{eq:gauge_transf_fields}
    D_ju_{ij} =-u_{ij}D_j,
    \quad
    D_jv_{j} =-v_{j}D_j
\end{equation}
(and similarly for the second copy); $D_j$ thus effects a $\mathbb{Z}_2$ gauge transformation. Furthermore,
\begin{equation}
\label{eq:gauge_transf_fermions}
    D_j c_j = - c_j D_j
\end{equation}
(and similarly for the second copy), thus $D_j$ also effects the $\mathbb{Z}_2$ gauge transformation of ``matter fields''.

The total parity superoperator $\scD=\prod_j D_j\Dt_j$ is a gauge transformation on every site, hence leaving all gauge fields invariant, and thus providing no gauge-field constraint. However, it constrains the matter fields. Writing 
\begin{equation}
    \scD=(-1)^{\scN_f}\prod_{\langle ij\rangle}u_{ij}\ut_{ij}\prod_j(-v_j),
\end{equation}
where the adjoint fermionic parity is defined as
\begin{equation}
\label{eq:parity_f}
    (-1)^{\scN_f}=\prod_j(-\i c_j \ct_j), 
    \quad
    \scN_f=\sum_j f^\dagger_j f_j,
\end{equation}
shows that the physical constraint $\scD=+1$ can be seen as fixing the adjoint fermion parity to be the same as the ``gauge parity''~\cite{shackleton2024}:\footnote{If $z(j)$ is odd, there are two additional Majoranas $b_j^{2k}$ and $\bt_j^{2k}$ that appear neither in the Lindbladian nor any gauge fields $u_{ij}$, $\ut_{ij}$, or $v_j$, but still enter the definition of $D_j$ and $\Dt_j$ and thus affect the constraint $\scD=1$. Consequently, there is an additional factor in Eq.~\eqref{eq:parity_constraint}: $\prod'_{j} (\i v^{2k}_j)$, where $v^{2k}_j=\i b^{2k}_j\bt^{2k}_j$ (with $(v_j^{2k})^2=1$) and the product $\prod_j'$ is over sites with odd $z(j)$. By the handshaking lemma, the number $2M$ of such sites is always even and hence $\prod'_{j} (\i v^{2k}_j)=(-1)^M \prod'_{j} v^{2k}_j$. Since $v^{2k}_j$ does not enter the Lindbladian but commutes with it and with all gauge fields, we can simply choose an arbitrary gauge $v_j^{2k}=\pm1$ such that $\prod'_{j} v^{2k}_j=(-1)^M$ without changing the physics; Eq.~\eqref{eq:parity_constraint} then holds for odd $z(j)$ as well.
}
\begin{equation}
\label{eq:parity_constraint}
    (-1)^{\scN_f}=\prod_{\langle ij\rangle}u_{ij}\ut_{ij}\prod_j(-v_j).
\end{equation}

Since $D_j$ and $\Dt_j$ are gauge transformations, the physical subspace [defined by Eq.~(\ref{eq:physical_states})] has gauge-invariant states. 
By Eqs.~(\ref{eq:gauge_transf_fields}) and (\ref{eq:gauge_transf_fermions}), the Lindbladian is also gauge invariant, as is its spectrum (because it is preserved by similarity transformations, such as gauge transformations). However, the eigenstates $|\varphi\rangle$ obtained from Eq.~\eqref{eq:Lind_quad_Majorana} [or (\ref{eq:Lind_quad_Fermion})] are not, since we must choose \emph{some} gauge to compute them (we fix this gauge in Sec.~\ref{sec:gauge_fixing}). To obtain physical eigenstates, we project $|\varphi\rangle$ into the physical subspace using the projector~\cite{yao2009PRL,shackleton2024}
\begin{equation}
    \mathcal{Q}=\prod_j \frac{\id+D_j}{2}\frac{\id+\Dt_j}{2}.
\end{equation}
The physical state $\mathcal{Q}|\varphi\rangle$ corresponds to an equal-weight superposition of eigenstates for all gauge-equivalent configurations\footnote{As there are only $N-1$ independent gauge transformations, despite its appearance $\mathcal{Q}$ does not perform $N$ independent projections, since Eq.~(\ref{eq:parity_constraint}) is already implemented (i.e., $\mathcal{Q}$ and $\mathcal{D}$ are not algebraically independent).} $\{u_{ij},\ut_{ij},v_j\}$; these all have the same eigenvalue (which follows from the gauge invariance of the Lindbladian, $\mathcal{L}\mathcal{Q}|\varphi\rangle=\mathcal{Q}\mathcal{L}|\varphi\rangle$).
$\mathcal{Q}|\varphi\rangle$ is the vectorization of $\mathcal{Q}(\varphi)=Q\varphi Q$, where $Q=\prod_j(\id+D_j)/2$; since $D_jQ=QD_j=Q$, $\mathcal{Q}(\varphi)$ satisfies Eq.~(\ref{eq:physical_states}) by construction.

A set of gauge-invariant observables that label physical states is furnished by string operators, but not all string operators: only the closed loops (flux operators) $W_\scC$, $\Wt_\scC$, and $\scW_{\scP}^\ch$ are gauge invariant. 
As mentioned in Sec.~\ref{sec:symm_lindblad} and shown in Appendix~\ref{app:fluxes}, the elementary fluxes are the fluxes $W_p$ through the plaquettes in one layer, the fluxes $W_\ell$ through noncontractible loops in the same layer, and the fluxes $\scW_q$ through interlayer plaquettes; their number totals $2E-N+1$. The fluxes $\Wt_\scC$ in the second layer can be obtained from the preceding ones, $W_\scC$ and $\scW_\scP^\ch$, since the flux through a closed cell is unity.
This reasoning provides an alternative block-decomposition of $\mathcal{L}$ in terms of gauge-invariant fluxes,
\begin{equation}
\label{eq:Ldecomp_fluxes}
    \mathcal{L}=\bigoplus_{\bm{w},\bm{\scw}}\mathcal{L}_{\bm{w},\bm{\scw}},
\end{equation}
with each block labeled by a strong- and a weak-flux configuration $\bm{w}$ and $\bm{\scw}$, respectively.
Since for each vertex $i$ there are $z(i)$ gamma matrices, by the handshaking lemma there are $\sum_i z(i)=2E$ gamma matrices in total. Subtracting the number $2E-N+1$ of (strong and weak) fluxes leaves $N-1$ binary degrees of freedom for the $c$ (and $\ct$) fermions [cf.\ the normalization $\mathcal{Z}$ of a sector with fixed fluxes in Eq.~(\ref{eq:ss})], in agreement with there being one adjoint fermion mode per site and fixed total adjoint fermionic parity [Eq.~(\ref{eq:parity_constraint})].

The fiducial state $|0\rangle$ is not an eigenstate of the strong fluxes $W_p$ and $W_\ell$ and thus does not belong to a single symmetry sector. It is, however, an eigenstate of the weak fluxes $\scW_q$ with eigenvalue one, see Eqs.~(\ref{eq:interlayerflux_gaugefields}) and (\ref{eq:fuv0}). We thus construct $2^{E-N+1}$ distinct vacua by projecting $|0\rangle$ into the symmetry sector labeled by $\bm{w}$, $|0\rangle_{\bm{w},\bm{\scw}=\bm{1}}=P_{\bm{w}}|0\rangle$ [with the fermionized form of Eq.~\eqref{eq:def_projector}]. The projection does not change the quantum numbers $\scw_q=1$ for any $q$; we denoted this weak-flux configuration by $\bm{\scw}=\bm{1}$.
The projected state $|0\rangle_{\bm{w},\bm{\scw}=\bm{1}}$ is an adjoint-fermion vacuum, $f_j|0\rangle_{\bm{w},\bm{\scw}=\bm{1}}=0$, because each $c_j$ and $\ct_j$ commutes with the projector built from $W_p$ and $W_\ell$. Upon projecting into the physical subspace with $\mathcal{Q}$, this vacuum is the vectorization of the (unnormalized) steady states $\rho_{\bm{w}}$.
We can also construct additional (non-steady-state) adjoint-fermion vacua by populating $|0\rangle_{\bm{w},\bm{\scw}=\bm{1}}$ with $b_j^\alpha$ Majoranas. As we will discuss in detail in Sec.~\ref{sec:map_gamma_fermion}, this flips the weak fluxes $\scw_q$ for some $q$ (that depend on the choice of $b_j^\alpha$); for a fixed $\bm{w}$ we have one vacuum for each choice of $\bm{\scw}$. In total, we can thus build $2^{2E-N+1}$ distinct vacua $|0\rangle_{\bm{w},\bm{\scw}}$ defined by
\begin{equation}
\label{eq:vaccum_w}
\begin{alignedat}{2}&f_j|0\rangle_{\bm{w},\bm{\scw}}=0, 
\quad &&W_p|0\rangle_{\bm{w},\bm{\scw}}=w_p |0\rangle_{\bm{w},\bm{\scw}},
\\
&W_\ell|0\rangle_{\bm{w},\bm{\scw}}=w_\ell |0\rangle_{\bm{w},\bm{\scw}},
\quad &&\scW_q|0\rangle_{\bm{w},\bm{\scw}}=\scw_q |0\rangle_{\bm{w},\bm{\scw}}.
\end{alignedat}
\end{equation}

\subsection{Gauge fixing}
\label{sec:gauge_fixing}

To perform actual calculations, we must fix a gauge. As discussed in Sec.~\ref{sec:steady}, there are exponentially many steady states, one for each strong-flux configuration $\bm{w}$ (i.e., vortex pattern) and $\bm{\scw}=\bm{1}$ (weak-symmetry constraint). The strong-symmetry constraint $W_p=\Wt_p$ is then automatically enforced as discussed in Sec.~\ref{sec:symm_lindblad} and Appendix~\ref{app:fluxes}. To satisfy $\scw_q=1$ for all $q$ [cf.\ Eq.~(\ref{eq:interlayerflux_gaugefields})] we make the gauge choice
\begin{equation}
\label{eq:gauge_choice}
    u_{ij}=\ut_{ij} 
    \quad \text{and} \quad
    v_i=-1,
\end{equation}
which is consistent with Eq.~(\ref{eq:fuv0}).\footnote{This is a different gauge choice from that of Ref.~\onlinecite{shackleton2024}, namely, $u_{ij}=-\ut_{ij}$ and staggered $v_i$ (note there is a minus sign difference between our definition of $\ut_{ij}$ and that of Ref.~\onlinecite{shackleton2024}). While the two choices are equivalent for the square lattice, the gauge in Ref.~\onlinecite{shackleton2024} can only be generalized to bipartite graphs, whereas ours works for any graph. Moreover, this gauge choice allows us to completely gauge away the edge variables $v_i$, see below.}
For any $\bm{w}$, $\scL|0\rangle_{\bm{w},\bm{\scw}=\bm{1}}=0$ and, hence, as expected from the discussion at the end of the previous subsection, we identify the adjoint-fermionic vacuum as the vectorization of the steady state $\rho_{\bm{w}}$. 
The excitations inside a steady-state sector are then obtained from the other $2^{N-1}-1$ eigenstates of $\scL$ with $(-1)^{\scN_f}=1$ [to maintain Eq.~(\ref{eq:parity_constraint})].

To go beyond steady-state sectors, i.e., to $|0\rangle_{\bm{w},\bm{\scw}\neq\bm{1}}$, we need $\scw_q=-1$ for some $q$, which by Eq.~(\ref{eq:interlayerflux_gaugefields}) can be achieved by flipping either the interlayer gauge field to $v_i=+1$ on sites $i$ belonging to some set $\setV$
or the intralayer gauge fields to $\ut_{ij}=-u_{ij}$ on the edges $\langle ij \rangle$ belonging to some set $\setU$.
We emphasize that the ``intralayer flips'' refer to flips of the fluxes in the second layer with respect to the first in the bilayer picture, and not to flips of the physical fluxes with respect to some reference configuration (this would correspond to changing to a different steady-state sector).
In non-steady-state sectors, the fermionic vacuum $|0\rangle_{\bm{w},\bm{\scw}\neq\bm{1}}$ is not annihilated by $\scL$ [and is not even a physical state if the number of gauge flips is odd, cf.\ Eq.~(\ref{eq:parity_constraint})]; all states are excited.

For a particular gauge choice $\{u_{ij}\}$ compatible with the background flux configuration $\bm{w}$ and the set of gauge flips $\setU$ and $\setV$, the Lindbladian reads
\begin{align}
    \scL
    &=4\sum_{\langle ij\rangle\in\overline{\setU}} J_{ij} u_{ij} f_i^\dagger f_j
    +2\sum_{\langle ij\rangle\in\setU} J_{ij} u_{ij} (f_i^\dagger f_j^\dagger+f_i f_j)
    \nonumber\\ \label{eq:Lindblad_non_steady}
    &-2\sum_{j\in\overline{\setV}} \gamma_j f_j^\dagger f_j
    -2\sum_{j\in\setV} \gamma_j (1-f_j^\dagger f_j),
\end{align}
with the adjoint fermionic parity fixed to $(-1)^{\scN_f}=(-1)^{|\setU|+|\setV|}$. Here, $\overline{\setU}$ ($\overline{\setV}$) denotes the complement of the set $\setU$ ($\setV$) and $|\setU|$ ($|\setV|$) denotes the cardinality of $\setU$ ($\setV$). 

From Eq.~(\ref{eq:Lindblad_non_steady}), we immediately see that excitations come in two types: if, for a given symmetry sector, there exists no gauge choice without intralayer flips ($\setU\neq\emptyset$), then there are superconducting terms in the Lindbladian and it conserves only adjoint-fermion-parity, $[\scL,(-1)^{\scN_f}]=0$. On the other hand, if a gauge choice exists such that there are only interlayer flips or no gauge flips at all ($\setU=\emptyset$), then the superconducting terms vanish, and the dynamics conserves the adjoint-fermion number, $[\scL,\scN_f]=0$. The projection into these symmetry sectors has more symmetry than the full Lindbladian.

Importantly, we can \emph{always} make the gauge choice $\setV=\emptyset$
and perform all necessary gauge flips as intralayer gauge flips: for each $\scw_q=\scw_{\langle ij \rangle}=-1$, we can simply flip the $\ut_{ij}$ on the corresponding edge while keeping all $v_j=-1$ fixed (i.e., there is a one-to-one correspondence between $\ut_{ij}$ and $\scw_{\langle ij \rangle}$). 
The gauge choice $\setV=\emptyset$ would simplify the Lindbladian by dropping the last term in Eq.~(\ref{eq:Lindblad_non_steady}) and allow for a unified treatment of all sectors. However, as a consequence of their enhanced symmetry, sectors that are gauge equivalent to $\setU=\emptyset$ (i.e., where $W_\scC=\Wt_\scC$ for all $\scC$) have a different phenomenology from those that are not.
For this reason, we shall treat the two cases separately in Secs.~\ref{sec:excitations_parity} and \ref{sec:excitations_number}, respectively.
Before that, in Sec.~\ref{sec:map_gamma_fermion}, we identify the gamma-matrix strings in the original representation that correspond to the fractionalized excitations discussed above.

\subsection{Steady states and excitations from strings}
\label{sec:map_gamma_fermion}

A generic state does not have definite quantum numbers under all conserved fluxes. To determine the relaxation time of such a state, we would need to find all the symmetry sectors with which it overlaps, and then compute the smallest spectral gap therein. This is an intractable task given the exponential number of sectors (for small system sizes, one can devise optimization schemes for this task~\cite{gidugu2024}, but even then, the accessible system sizes are very limited).

An alternative is to compute the relaxation timescales of basis elements of the space of operators: any operator can be written as a linear combination of terms of the form 
\begin{equation}
\label{eq:basis_states}
    \sigma=\Sigma_L\rho_0\Sigma_R,
\end{equation}
where $\Sigma_L$ and $\Sigma_R$ are arbitrary strings of gamma matrices\footnote{By construction, gamma-matrix strings are unitary and, with a judicious choice of prefactor, they can always be made Hermitian.} and $\rho_0$ is a fixed density matrix.\footnote{While $\sigma$ in Eq.~(\ref{eq:basis_states}) is not a proper density matrix as it is traceless unless $\Sigma_L=\Sigma_R=\id$, the set of all possible $\sigma$ with $\rho_0$ fixed forms a basis of the space of operators.} Without loss of generality, we choose $\rho_0$ to be the vortex-free steady-state, i.e., $\rho_0=P_{\bm{w}=\bm{1}}/\mathcal{Z}$. 
After vectorization, the reference state $\rho_0$ is mapped to the vortex-free fermionic vacuum $|0\rangle_{\bm{w}=\bm{1},\bm{\scw}=\bm{1}}$.
If $\Sigma_R=\id$ ($\Sigma_L=\id$), then we say $\sigma$ is constructed by \emph{left (right) action} of $\Sigma_L$ ($\Sigma_R$). If $\Sigma_L=\Sigma_R=\Sigma\neq\id$, we say it is built by the \emph{adjoint action} of $\Sigma$. 

Crucially, any state of the form~(\ref{eq:basis_states}) is fully supported inside a single symmetry sector: $\rho_0$ has fixed and known quantum numbers, namely, $w_p=w_\ell=\scw_q=1$ for all $p$, $\ell$, and $q$; and all strings $\Sigma_L$ and $\Sigma_R$ either commute or anticommute with all conserved fluxes, hence they either preserve or flip these quantum numbers.
More precisely, if we denote
\begin{alignat}{9}
    &\Sigma_L W_p= s_p W_p \Sigma_L
    \quad &&\text{and}\quad
    \Sigma_L W_\ell= s_\ell W_\ell \Sigma_L,
    \\
    &\Sigma_L W^\ch_q = \scs_q^L W^\ch_q \Sigma_L
    \quad &&\text{and}\quad
    \Sigma_R W^\ch_q = \scs_q^R W^\ch_q \Sigma_R,
\end{alignat}
with $s_p,s_\ell,\scs_q^L,\scs_q^R\in\{-1,1\}$, then $\sigma$ belongs to the sector labeled by the quantum numbers $w_p=s_p$, $w_\ell=s_\ell$, and $\scw_q=\scs_q^L\scs_q^R$. 
We do not need to specify the commutation relation of $\Sigma_R$ with the strong fluxes, since $\wt_p$ and $\wt_\ell$ are completely fixed by $w_p$, $w_\ell$ and $\scw_q$.
We then choose a gauge configuration $u_{ij}$ compatible with the strong fluxes and fix the inter- and intralayer gauge flips $\setV$ and $\setU$ (recall that there is gauge freedom in this choice) in accordance with the weak fluxes.

{\setlength{\tabcolsep}{4pt}
\begin{table*}[t]
\caption{Summary of the mapping between states in the spin and fractionalized representations. The numbering of the first column follows that of the main text. In the second and third columns, we list the strings, defined in Eqs.~(\ref{eq:def_W})--(\ref{eq:WC}), that act on the reference state $\rho_0$ and whether the action is left, right, or adjoint.
The fourth column indicates in which symmetry sector the resulting state $\sigma=\Sigma_L\rho_0\Sigma_R$ lives, and whether it is a steady or a transient state. The final four columns specify the gauge and fermionic content of the state to which $\sigma$ is mapped in the fractionalized picture, namely, vortices, intralayer and interlayer gauge excitations, and Majorana fermions $c_j$. States with multiple types of excitations are obtained by composing these elementary operations.}
\label{tab:excitations}
\begin{tabular}{@{}llllllll@{}}
\toprule \toprule
 & Type of string  & Action & Mapped state & Vortices & Intralayer & Interlayer & Fermions
\\ \toprule
(i) & \begin{tabular}[c]{@{}l@{}}$W_\scC$\\$W_\scP$\\$W_\scP^\ch$\end{tabular}                                                        & \begin{tabular}[c]{@{}l@{}}Any\\Adjoint\\Adjoint\end{tabular}                           & Itself                                                                                        & No       & No         & No         & No                                                                          \\ \midrule
(ii) & $W_\scP^{\mu\nu}$                                                    & Adjoint                         & \begin{tabular}[c]{@{}l@{}}Steady state in a different\\symmetry sector\end{tabular}     & Yes      & No         & No         & No                                                                          \\ \midrule
(iii) & $W_\scP^{\mu\nu}$                                                    & Right                      & \begin{tabular}[c]{@{}l@{}}Transient operator in a\\sector with $w_p\neq\wt_p$\end{tabular}  & No      & Yes        & No         & No                                                                          \\ \midrule
(iv) & $W_\scP^\ch$                                                   & Left/right                            & \begin{tabular}[c]{@{}l@{}}Transient operator in a\\sector with $\scw_q=-1$\end{tabular}    & No       & No         & Yes        & No                                                                          \\ \midrule
(v) & $W_\scP$      & Left/right                            & \begin{tabular}[c]{@{}l@{}}Transient operator in\\the same symmetry sector\end{tabular}              & No       & No         & No         & Yes  
\\ \bottomrule \bottomrule
\end{tabular}
\end{table*}
}

Our strategy is to choose strings $\Sigma_L$ and $\Sigma_R$ such that we can construct representative states with each type of excitation described in Sec.~\ref{sec:gauge_fixing}: vortices, intra- and interlayer gauge excitations, and Majorana fermions. To this end, it suffices to consider the strings $W^{\mu\nu}_\scP$, $W_\scP$, $W_\scP^\ch$, and $W_\scC$ defined in Eqs.~(\ref{eq:def_W})--(\ref{eq:WC}), as described in the following and summarized in Table~\ref{tab:excitations} and Fig.~\ref{fig:strings}.
\begin{enumerate}[(i)]
    \item \textit{Closed strings $W_\scC$}. As explained in Sec.~\ref{sec:strings_fluxes}, $W_\scC$ commutes with $W_{\scC'}$ and $W^\ch_\scP$ for all $\scC'$ and $\scP$. Hence, for any action (left, right, or adjoint) of $W_\scC$, we have $s_p=s_\ell=\scs_q^L=\scs_q^R=1$ for all $p$, $\ell$, and $q$, and $\rho_0$ is mapped to itself ($\sigma=\rho_0$). Accordingly, the action of $W_\scC$ does not produce any excitations.
    \item \textit{Open strings $W_\scP^{\mu\nu}$ with $\mu,\nu\in\{1,\dots,2k\}$}. $W_\scP^{\mu\nu}$ is formed by a body $\prod_{\langle ij \rangle\in\scP}K_{ij}$ and two gamma matrices at the endpoints. In the fractionalized picture, this maps to a string of gauge fields, $\prod_{\langle ij \rangle\in\scP}(-\i u_{ij})$, with a Majorana fermion $b_j^\mu$ at the endpoint $j$ of $\scP$ (and similarly for the other endpoint); see Fig.~\ref{fig:strings}(a). We have $\scs_q^{L,R}=-1$ on the edge that emanates from $j$ with color $\mu$, and $s_p=-1$ on the two plaquettes adjacent to it, which in the fractionalized representation reads as the Majorana $b_j^\mu$ flipping the gauge field $u_{ij}$ on the edge with color $\mu$. If the action of $W_\scP^{\mu\nu}$ is adjoint, the $\scs_q^{L,R}=-1$ cancel each other out ($\scw_q=\scs_q^L\scs_q^R=+1$) and the string excites only vortices in the system. Hence, $\sigma=W^{\mu\nu}_\scP\rho_0 W^{\mu\nu}_\scP=P_{\bm{w}}/\mathcal{Z}$ is a different steady state, where $w_p=-1$ on the plaquettes mentioned above and $w_p=+1$ otherwise; in the vectorized representation, $\sigma$ maps to the vacuum $|0\rangle_{\bm{w}\neq\bm{1},\bm{\scw}=\bm{1}}$.
    \item When $W^{\mu\nu}_\scP$ acts on the right, we do not induce vortices ($s_p=+1$ for all $p$), but the $\scs_q^{L,R}$ no longer cancel, leading instead to an intralayer gauge excitation on the same edge ($\ut_{ij}=-u_{ij}$). The mapped state $\sigma=\rho_0W^{\mu\nu}_\scP$ is not a steady state but instead a decaying operator in a symmetry sector with $w_p\neq \wt_p$; its vectorization is the vacuum $|0\rangle_{\bm{w}=\bm{1},\bm{\scw}\neq\bm{1}}$.
    On the other hand, the left action can be interpreted as the composition of the right and adjoint actions. In that case, $\sigma$ has both vortices and intralayer gauge excitations.
    \item \textit{Open chiral strings $W_\scP^\ch$ ($\mu,\nu=2k+1$)}. Similarly to before, these strings have a ``$\mathbb{Z}_2$ gauge body'', but they carry chiral Majoranas $b_j^{\ch}$ at their endpoints; see Fig.~\ref{fig:strings}(b). $W_\scP^\ch$ commutes with all $W_\scC$, thus $s_p=s_\ell=1$ for all $p$ and $\ell$, and chiral strings do not excite vortices. On the other hand, $\scs_q^{L,R}=-1$ for all edges $q$ that emanate from the endpoints of the string and $\scs_q^{L,R}=1$ for all other edges. In the fractionalized picture, it corresponds to the chiral Majorana $b_j^\ch$ flipping the gauge field $v_j$; this is gauge-equivalent to flipping all the gauge fields $u_{ij}$ emanating from site $j$ instead (as can also be seen by direct inspection of the commutation relations of $u_{ij}$, $\ut_{ij}$, $v_j$, and $b^\ch_j$), and we conclude that open chiral strings effect interlayer gauge flips at their endpoints.
    For the adjoint action, the interlayer gauge flips induced in each layer cancel, and no excitations are induced. We thus see that $\sigma=W_\scP^\ch\rho_0=\rho_0W_\scP^\ch$ is a transient operator in a symmetry sector with $\scw_q=-1$ only and maps to a fermionic vacuum $|0\rangle_{\bm{w}=\bm{1},\bm{\scw}\neq\bm{1}}$ with $\bm{\scw}$ as described above. Hence, starting from the vacuum in the reference sector, we can generate all other possible vacua by acting with $W^{\mu\nu}_\scP$ and $W_\scP^\ch$.
    \item \textit{Open strings $W_\scP$ ($\mu,\nu=0$)}. This final type of string carries instead itinerant Majorana fermions $c_j$ at its endpoints; see Fig.~\ref{fig:strings}(c). By the arguments laid out after Eq.~(\ref{eq:WC}), $W_\scP$ does not commute with every $K_{ij}$, but it does commute with all $W_\scC$ and $W_\scP^\ch$. Hence, $s_p=s_\ell=\scs_q^L=\scs_q^R=1$ and there are no gauge excitations. $\sigma=W_\scP\rho_0=\rho_0W_\scP$\footnote{At the vectorized level, the equality of left and right actions translates to Eq.~(\ref{eq:vaccum_w}).} is a decaying operator in the same symmetry sector as $\rho_0$ and is vectorized to $c_ic_j|0\rangle_{\bm{w}=\bm{1},\bm{\scw}=\bm{1}}=f^\dagger_if^\dagger_j|0\rangle_{\bm{w}=\bm{1},\bm{\scw}=\bm{1}}$ [where we used Eqs.~(\ref{eq:def_fFermions}) and (\ref{eq:vaccum_w})], creating a superposition of fermionic excitations. On the other hand, the adjoint action of $W_p$ leaves $\rho_0$ invariant, which in the vectorized picture can be understood as inserting both a $c_j$ and a $\ct_j$ fermion into the system, using the vacuum to reflect the $\ct_j$ to another $c_j$ [Eq.~(\ref{eq:c_fiducial})], and then annihilating it with the first $c_j$.
\end{enumerate}

More general strings, not of the form (\ref{eq:def_W}) can also be considered but lead to no new excitations. For example, closed strings that are not the products of $K_{ij}$ but some other gamma-matrix bilinears, do not commute with the fluxes and thus induce vortices along the body of the string. 
Composing the previous operations, every operator in any symmetry sector can be reached, thus completing the enumeration of possible excitations in both the gamma-matrix and fermion representations.

\section{Fermion-parity-conserving dynamics}
\label{sec:excitations_parity}

We now determine the eigenvalues of the excitations of the free-fermion Lindbladian in a given symmetry sector, which determine the respective relaxation rates. We start with the more general case of fermion-parity conservation, which offers a unified solution for any sector. While for sectors with intralayer gauge flips ($\setU\neq\emptyset$) this is the only procedure available, in sectors with $W_\scC=\Wt_\scC$ for all $\scC$ (i.e., that can be gauged to $\setU=\emptyset$) it is convenient to exploit the enhanced symmetry (fermion-number conservation) to obtain information beyond the general solution. This will be addressed in Sec.~\ref{sec:excitations_number}.

\subsection{Diagonalization of the Lindbladian}

We can write the Majorana representation~(\ref{eq:Lind_quad_Majorana}) of the Lindbladian in block-matrix form as
\begin{equation}
\label{eq:Lindblad_intralayer}
    \scL=\frac{1}{4}\sum_{ij}
    \begin{pmatrix} c_i & \ct_i \end{pmatrix}
    \begin{pmatrix}
    \dsA_{ij} & \i \dsD_{ij} \\ 
    -\i \dsD_{ij} & \tA_{ij}
    \end{pmatrix}
    \begin{pmatrix} c_j \\ \ct_j \end{pmatrix}
    -A_0,
\end{equation}
where $A_0=\sum_j \gamma_j$ and the real $N\times N$ single-particle matrices $\dsA$, $\tA$ and $\dsD$ are given by
\begin{equation}
\label{eq:def_dsA_dsD}
    \dsA_{ij}=4J_{ij}u_{ij}, 
    \quad
    \tA_{ij}=4J_{ij}\ut_{ij}, 
    \quad \text{and} \quad
    \dsD_{ij}=2\delta_{ij}\gamma_i v_i.
\end{equation}
Denoting $\vec{c}=(c_1\ \ct_1\cdots c_N \ \ct_N)^T$, with $(\cdot)^T$ denoting the transpose, and
\begin{equation}
\label{eq:def_A}
    A=P^T\begin{pmatrix}
    \dsA & \i \dsD \\ 
    -\i \dsD & \tA
    \end{pmatrix}P,
\end{equation}
where $P$ is the permutation matrix defined by $P\vec{c}=(c_1 \cdots c_N\,\ct_1\cdots \ct_N)^T$, we write Eq.~(\ref{eq:Lindblad_intralayer}) as 
\begin{equation}
    \scL=\frac{1}{4}\vec{c}^T A \vec{c}-A_0.
\end{equation}

\begin{figure}[t]
    \centering
    \includegraphics[width=\columnwidth]{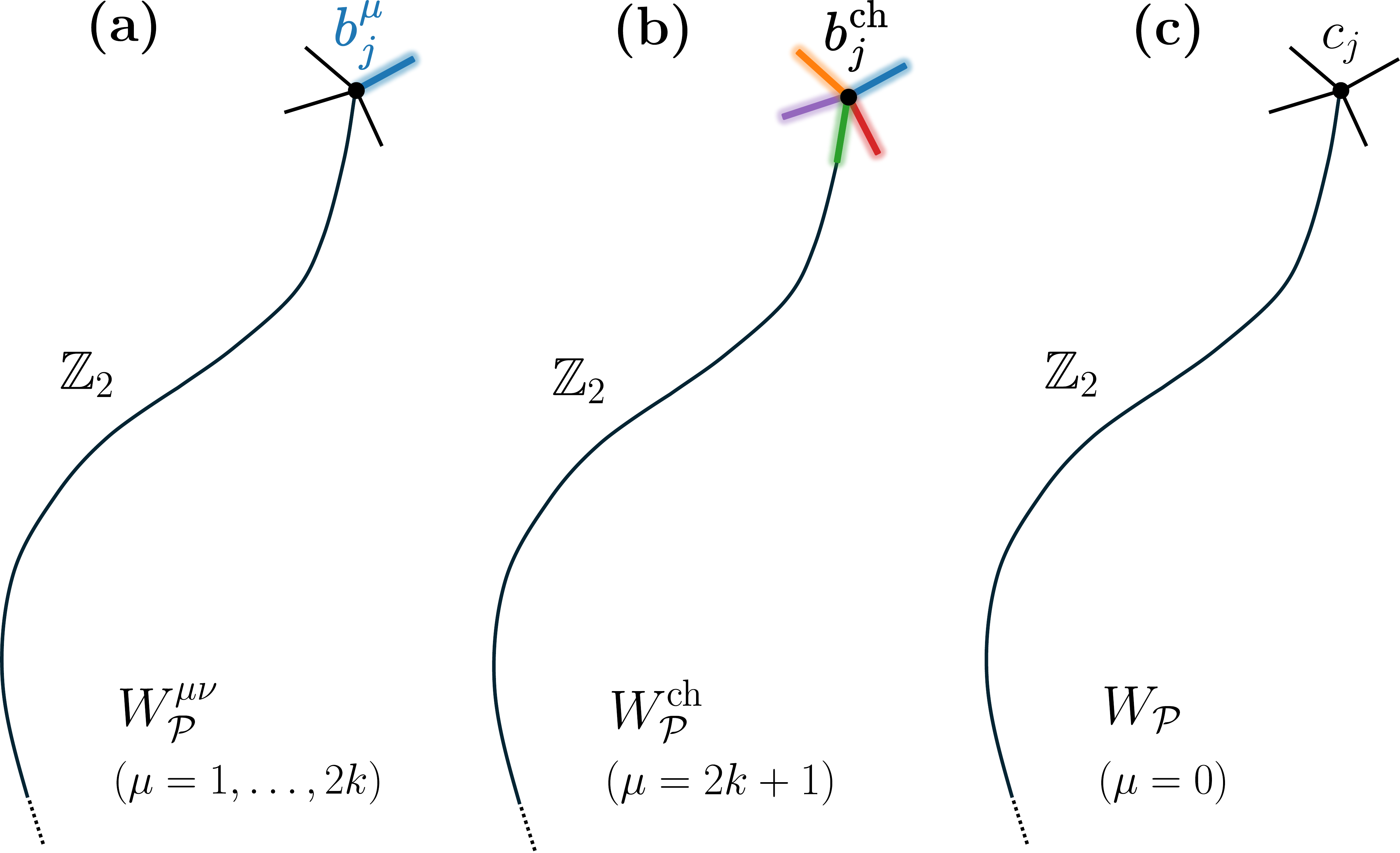}
    \caption{Fractionalized representation of string excitations of the gamma-matrix model. Schematic representation of strings of gamma matrices on an open path $\scP$, defined in Eqs.~(\ref{eq:def_W})--(\ref{eq:Wch}) (only one endpoint $j$ depicted). The body of the string, $\prod_{\langle ij \rangle\in\scP}K_{ij}$, is mapped to a product of $\mathbb{Z}_2$ gauge fields, $\prod_{\langle ij \rangle\in\scP}(-\i u_{ij})$, with Majorana fermions at each end.
    (a) The string $W_\scP^{\mu\nu}$ with $\mu,\nu\in\{1,\dots,2k\}$ carries a Majorana fermion $b_j^\mu$ at its endpoint, which excites intralayer gauge flips on the edge with color $\mu$ that emanates from $j$ (represented by a bold colored edge) and, for the adjoint action, two vortices adjacent to that edge. (b) The string $W_\scP^\ch$ carries a chiral Majorana fermion $b_j^{\ch}$ at its endpoint that effects an interlayer gauge flip at $j$ (which is gauge-equivalent to flipping all intralayer gauge fields emanating from $j$, as depicted by the colored edges). (c) The string $W_\scP$ carries an itinerant Majorana fermion $c_j$ at its endpoint, introducing a superposition of fermionic excitations at $j$.}
    \label{fig:strings}
\end{figure}

Because of the antisymmetry of $A$, its $2N$ eigenvalues come in pairs $(\beta_k,-\beta_k)$, where, without loss of generality, we order $0\leq \Re \beta_1 \leq \cdots \leq \Re \beta_N$; if $\Re\beta_k=0$, then we take $\Im\beta_k>0$.\footnote{Here, we assumed that the Lindbladian is sufficiently generic such that it is diagonalizable and that there are no eigenvalues $\beta_k=0$. The latter would imply there are multiple steady states in a single symmetry sector and lead to the breakdown of the Pfaffian method described below.}
Denoting the right eigenvectors of $A$ by $r$, $Ar_{2k-1}=\beta_kr_{2k-1}$ and $Ar_{2k}=-\beta_k r_{2k}$, we define the $2N\times 2N$ matrix $O$ whose rows are $O_{2k-1}=(r^T_{2k-1}+r^T_{2k})/\sqrt{2}$ and $O_{2k-1}=-\i(r^T_{2k-1}-r^T_{2k})/\sqrt{2}$. Since the eigenvectors of $A$ are complex and biorthogonal, $O$ is a complex orthogonal matrix, $OO^T=O^TO=\id$, and we can use it to reduce $A$ to the canonical form $A=O^T B O$~\cite{kitaev2001Wire,prosen2008}, with
\begin{equation}
    B=\begin{pmatrix}
        0 & -\i \beta_1 & & \\
        \i \beta_1 & 0 & & \\
        & & \ddots & & \\
        & & & 0 & -\i\beta_N \\
        & & & \i\beta_N & 0
    \end{pmatrix}.
\end{equation}
With this change of basis, we can diagonalize $\scL$ as
\begin{equation}
    \scL
    =-\frac{\i}{2}\sum_{k=1}^N \beta_k a_{2k-1}a_{2k}-A_0
    =-\sum_{k=1}^N\beta_k d'_kd_k-B_0,
\end{equation}
with $B_0=A_0-\sum_{k=1}^N \beta_k/2$.
Here, $\vec{a}=(a_1\cdots a_{2N})^T=O \vec{c}$ are rotated Majoranas, which by virtue of $O$ being complex orthogonal, satisfy the anticommutation relation $\{a_i,a_j\}=2\delta_{ij}$ but are non-Hermitian, $a_i\neq a_i^\dagger$. The associated fermionic creation and annihilation operators,
\begin{equation}
\label{eq:def_dFermions}
    d'_k=\frac12(a_{2k-1}-\i a_{2k}),
    \qquad
    d_k=\frac12(a_{2k-1}+\i a_{2k}),
\end{equation}
referred to as normal master modes (NMMs)~\cite{prosen2008}, satisfy
\begin{equation}
    \{d_i,d_j\}=\{d'_i,d'_j\}=0, \quad
    \{d_i,d'_j\}=\delta_{ij}, \quad
    d_i'\neq d_i^\dagger.
\end{equation}
The NMM vacuum $|\Omega\rangle$ is annihilated by all annihilation operators $d_k$, and excitations are built on top of it by the action of the creation operators $d'_k$, $|\bm{\nu}\rangle=\prod_k (d'_k)^{\nu_k}|\Omega\rangle$, with $\nu_k\in\{0,1\}$. The many-body eigenvalues are then 
\begin{equation}
\label{eq:NMM_ev}
\begin{split}
    \Lambda_{\bm{\nu}}
    &=-\sum_{k=1}^N\beta_k\nu_k-B_0
    \\
    &=-\sum_{k=1}^N\beta_k\left(\nu_k-\frac{1}{2}\right)-\sum_{j=1}^N\gamma_j.
\end{split}
\end{equation}

\subsection{Parity of physical states from the Pfaffian}

To proceed, we must determine the states with the correct fermionic parity [cf.\ Eq.~(\ref{eq:parity_constraint})]. To this end, we define the fermion number and parity in the new basis:
\begin{equation}
\label{eq:parity_d}
    (-1)^{\scN_d}=\prod_{k=1}^N(-\i a_{2k-1}a_{2k}),
    \qquad
    \scN_d=\sum_{k=1}^N d_k'd_k.
\end{equation}
The many-body eigenstates have parity $(-1)^{\scN_d}|\bm{\nu}\rangle=\prod_{k}(-1)^{\nu_k}|\bm{\nu}\rangle$, which we can fix using the following:
\begin{lemma*}
A state with a total of $\nu=\sum_k \nu_k$ populated NMMs is physical if and only if
\begin{equation}
\label{eq:parity_det}
    (-1)^\nu=(\det O) (-1)^{|\setV|+|\setU|}.
\end{equation}
Moreover, if there are $|\setI|$ purely imaginary $\beta_k$, then 
\begin{equation}
\label{eq:det_Pf}
    \det O=\sgn\big((-\i)^{|\setI|}\Pf \i A\big),
\end{equation}
where $\Pf$ is the Pfaffian.

\end{lemma*}
\begin{proof}
Performing the change of basis between the original and rotated Majoranas, we have that $(-1)^{\scN_f}=\prod_{j=1}^N(-\i c_j \ct_j)=(-\i)^N\sum_{j_1\cdots j_{2N}}O_{j_11}\cdots O_{j_{2N}2N}a_{j_1}\cdots a_{j_{2N}}$. Terms with any repeated indices do not contribute to the sum: if $j_a=j_b$ for some $a\neq b$, then $\sum_{j_a} O_{j_aa}O_{j_ab}a_{j_a}^2=(OO^T)_{ab}=\delta_{ab}=0$. Thus, only the terms with all indices $j_1,\dots,j_N$ distinct contribute. Since the product $a_{j_1}\cdots a_{j_{2N}}$ is fully antisymmetric in its indices, we can replace it with $\varepsilon_{j_1\cdots j_{2N}}a_1\cdots a_{2N}$, where $\varepsilon_{j_1\cdots j_{2N}}$ is the fully antisymmetric symbol, and it follows that $(-1)^{\scN_f}=(-1)^{\scN_d}\sum_{j_1\cdots j_{2N}}\varepsilon_{j_1\cdots j_{2N}}O_{j_11}\cdots O_{j_{2N}2N}=(-1)^{\scN_d} \det O$, where we used Eq.~(\ref{eq:parity_d}), i.e., the change of basis is parity preserving if and only if $O$ is smoothly connected to the identity, which is parity preserving.
Using Eq.~(\ref{eq:parity_constraint}), Eq.~(\ref{eq:parity_det}) follows. 
Furthermore, $\Pf \i A=\Pf(O^T \i B O)=\det O\, \Pf \i B=\det O \prod_{j=1}^N\beta_j$. Since $A$ is similar to a real matrix ($A=T A^* T^{-1}$ with $T=P(\sigma^z\otimes \id)P^T$), and $\pm\beta_j$ are its eigenvalues, we have three possibilities: (i) $\beta_k\in\mathbb{R}$ and it is, by assumption, positive; (ii) $\beta_k$ come in complex-conjugate pairs, i.e., there exists some $k\neq j$ such that $\beta_k=\beta_j^*\implies\beta_j\beta_k>0$; or (iii) $\beta_k\in\mathrm \i\mathbb{R}$ with, by assumption, $\mathrm{Im}\beta_k>0$. Hence, $\Pf\i B=\i^{|\setI|}R$ for some $R>0$. Since $O$ is orthogonal, $\pm1=\det O=\Pf\i A/\Pf\i B=(-\i)^{|\setI|}\Pf\i A /R$, and taking the sign of both sides, we arrive at Eq.~(\ref{eq:det_Pf}).
\end{proof}

\subsection{Spectral gap and relaxation rates}
\label{sec:parity_gap}

With the prescription of Eq.~(\ref{eq:parity_det}), all physical states and their relaxation rates can in principle be computed. In particular, the spectral gap $\Delta=\min_{\bm{\nu}} |\Re \Lambda_{\bm{\nu}}|$ gives the asymptotic long-time decay rate of states or observables with overlap in that symmetry sector. 

\begin{figure*}[t]
    \centering
    \includegraphics[width=0.8\textwidth]{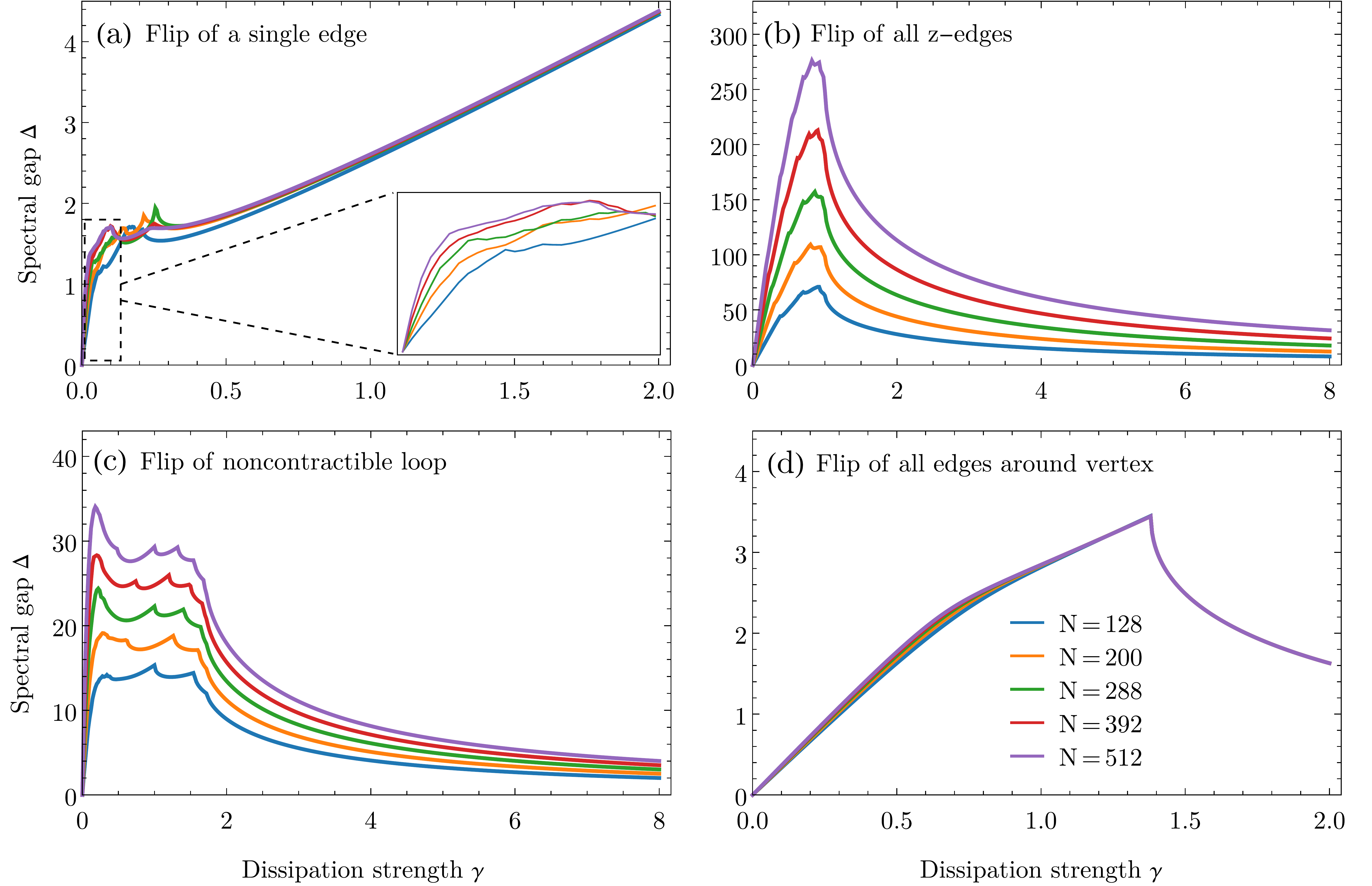}
    \caption{Spectral gap as a function of dissipation strength in fermion-parity-conserving sectors. 
    The gap was computed for the honeycomb lattice with $N=128$--$512$ and periodic boundary conditions on the torus (with the same number of sites along each direction) using the prescription of Sec.~\ref{sec:parity_gap}, with the single-particle eigenvalues $\beta_k$ obtained from exact diagonalization of Eq.~(\ref{eq:def_A}). We set uniform Hamiltonian couplings and gauge fields $J_{ij}=u_{ij}=1$ (which gives uniform 0-flux), uniform dissipation $\gamma_i=\gamma$ for all $i$, and fixed the gauge to $v_i=-1$ for all $i$ (i.e., $\setV=\emptyset$). Each panel shows a different type of intralayer gauge flips $\ut_{ij}=-1$ (i.e., choice of $\setU$): (a) a single arbitrary edge; (b) all edges of a given color; (c) all edges on a noncontractible loop; and (d) all edges emanating from an arbitrary vertex (which is gauge-equivalent to a single interlayer flip on that vertex and $\setU=\emptyset$, with conserved fermion number, cf.\ upper left panel of Fig.~\ref{fig:intergapUniform}).
    }
    \label{fig:intragapHoney}
\end{figure*}

Let us first consider steady-state sectors (i.e., $\dsA=\tA$, $\setU=\emptyset$, $\setV=\emptyset$).
For $\gamma_j\to0$, $\Pf \i A \to\Pf\i\dsA\Pf\i\tA=(\Pf\i\dsA)^2=\det \i\dsA$ and $|\setI|\to N$ (because $A$ becomes real antisymmetric). Thus, $(-\i)^{|\setI|}\Pf\i A\to\det \dsA\geq 0$ (since $\dsA$ is real antisymmetric), hence $\det O\to1$. Then, by continuity of $\det$, the NMM vacuum $|\Omega\rangle$ is physical for small $\gamma_j$ and, by Eq.~(\ref{eq:NMM_ev}), it is the slowest decaying state. Hence, it must coincide with the steady state (i.e., the original adjoint-fermion vacuum $|0\rangle_{\bm{w},\bm{\scw}=\bm{1}}$) and we must have $B_0=0$, i.e., $\sum_k \beta_k/2=\sum_j \gamma_j$~\cite{prosen2008} (this sum rule does not hold outside steady-state sectors). 
The existence of the vacuum steady state for any $\gamma_j$ indicates that the Pfaffian cannot change sign upon increasing $\gamma_j$, and all physical states in these sectors have an even number of fermion excitations, $\prod_j(-1)^{\nu_j}=1$. As discussed above, there is an additional U(1) symmetry in these sectors, and excitations are labeled by the adjoint-fermion number $n$. The gap in this sector is given by $\Delta_n=\sum_{k=1}^n \Re\beta_k$ ($n\in\{0,2,4,\dots, N\}$). This case will be treated in detail in Sec.~\ref{sec:excitations_number}.

Now we turn to non-steady state sectors ($\setU\neq\emptyset$, $\setV\neq\emptyset$, or both). All states, including the NMM vacuum, are decaying. If $\sgn\big((-\i)^{|\setI|}\Pf \i A\big)=(-1)^{|\setV|+|\setU|}$, the NMM vacuum is physical and $\Delta=B_0$. Otherwise, the longest-lived physical state has a single occupied NMM and the gap reads $\Delta=B_0+\Re\beta_1$. Fixing a gauge configuration is not enough to determine whether the vacuum is physical or not, as $\sgn\Pf\i A$ can flip as a function of some parameter of the model (say, one of the $\gamma_j$). Moreover, if $\Re\beta_1=0$, both the first physical and first nonphysical states share the same decay rate, $\Delta=B_0=B_0+\Re\beta_1$.

\subsection{Numerical results}
\label{sec:numerics_parity}

In Fig.~\ref{fig:intragapHoney}, we plot the gap $\Delta$ as a function of dissipation strength $\gamma$ for the honeycomb lattice on the torus for fixed $\setV=\emptyset$ and different choices of $\setU$.
The qualitative features of the gap for the square~\cite{shackleton2024}, honeycomb (Fig.~\ref{fig:intragapHoney}), and triangular (Appendix~\ref{app:additional_numerics}) lattices are similar, indicating that dissipative dynamics are more strongly determined by the structure of the Lindbladian (\ref{eq:Lindblad_intralayer}) than by the microscopic details of the couplings $J_{ij}$ (i.e., of the underlying graph).

Some general features, which can be understood perturbatively in $\gamma$ or $1/\gamma$, are immediately visible at finite $N$: the gap grows linearly as $\gamma\to\infty$ and closes as $\gamma\to0$. At intermediate values of $\gamma$, a rich nonmonotonic behavior occurs due to a succession of crossings of the single-particle eigenvalues $\Re\beta_k$ or exceptional points [i.e., values of $\gamma$ at which a complex-conjugate pair $(\beta_k,\beta_k^*)$ merges into a doubly-degenerate real eigenvalue]. The kinks in the gap that result from these crossings and exceptional points are particularly visible, e.g., in Fig.~\ref{fig:intragapHoney}(c).

At small $\gamma$, the slope of the gap increases with increasing $N$, see Figs.~\ref{fig:intragapHoney}(a)--(c), which suggests that the weak dissipation ($\gamma\to0$) and thermodynamic ($N\to\infty$) limits do not commute. Hence, if the thermodynamic limit is taken first, the gap converges to a finite constant of order $J$ (where $J$ is the typical coupling strength of $H$) in the limit of $\gamma\to0$, signaling a much faster relaxation than expected from perturbation theory. This phenomenon, dubbed anomalous relaxation~\cite{garcia2023PRD2}, signals that internal interactions, not external dissipation, govern the relaxation. It has been recently observed in a variety of systems~\cite{sa2022PRR,garcia2023PRD2,mori2024,prosen2002,prosen2004,prosen2007,shackleton2024,yoshimura2024,yoshimura2025,znidaric2024,znidaric2024b}, and it is proposed to be a generic feature of ergodic many-body quantum dynamics~\cite{mori2024,yoshimura2024,yoshimura2025,jacoby2024,zhang2024} (via a relation to quantum Ruelle-Pollicott resonances~\cite{mori2024,yoshimura2025}).\footnote{The noncommutativity of limits can be understood heuristically as follows~\cite{mori2024}. For an $N$-site system with bulk dissipation $\gamma$, there is a natural effective dissipation strength $\gamma_{\mathrm{eff}}=N\gamma$. It is thus manifest that the two limits do not commute: $\lim_{N\to\infty}\lim_{\gamma\to0}\gamma_{\mathrm{eff}}=0$, while $\lim_{\gamma\to0}\lim_{N\to\infty}\gamma_{\mathrm{eff}}=\infty$.}
Nevertheless, the weak-dissipation dynamics of our gamma-matrix model are not fully mixing because of the emergent integrability arising from the fractionalizing into an exponentially large number of gauge sectors with free-fermion excitations. In every sector gauge-equivalent to $\setU=\emptyset$, the Linbdladian has an additional conserved quantity in the limit of $\gamma\to0$, namely the open gauge strings $W^{\ch}_\scP$ (together with $H$ itself, these exhaust the conservation laws of the Hamiltonian and hence the $\gamma\to0$ limit of the Lindbladian). By virtue of these extra conservation laws, there is an infinite number (in the thermodynamic limit) of gaps that close, and the system as a whole evades anomalous relaxation; see, e.g., Fig.~\ref{fig:intragapHoney}(d). By choosing the gauge where the particle-number conservation is manifest, we can analytically bound the gap in different particle-number sectors and prove the absence of anomalous relaxation; see Sec.~\ref{sec:bounds}. Notwithstanding, the (also extensive) number of sectors with finite gaps further illustrates the ubiquity of anomalous relaxation as a generic phenomenon in strongly interacting many-body dynamics. 

On the other hand, certain symmetry sectors display the quantum Zeno effect---the closing of the spectral gap with increasing dissipation strength---which was reported for homogeneous dissipation on the square lattice in Ref.~\onlinecite{shackleton2024}. For the gap to close in this limit, there must exist an operator $\Sigma_R$ that commutes with all jump operators $L_j\sim\Gamma_j^\ch$. Such operators exist in sectors that either are gauge-equivalent to $\setU=\emptyset$ or have $\setV=\emptyset$ and $(-1)^{|\setU|}=-1$. The former will be discussed in detail in Sec.~\ref{sec:numerics_number}, and here we focus on the latter. For any choice of $\setU$ with even $|\setU|$, we consider a product $\Sigma_R$ of open gauge strings $W_\scP^{\mu\nu}$ ($\mu,\nu\in\{1,\dots,2k\}$) whose endpoints connect a pair of sites from which a gauge flip emanates, such that all flips are connected by one such string. From Sec.~\ref{sec:map_gamma_fermion}, $\sigma=\rho_{\bm{w}}\Sigma_R$ belongs to the sector labeled by $\setU$ and, because $W_\scP^{\mu\nu}$ is built from two gamma matrices on each site, it commutes with all chiral elements $\Gamma_j^\ch$. Hence, $\sigma$ is annihilated by $\scL$ in the limit $\gamma_j\to\infty$. On the other hand, if $|\setU|$ is odd, one of the gauge flips is not paired up. Hence, there exists a site $j$ for which $\Gamma_j^\mu$ appears only once in $\Sigma_R$, which thus anticommutes with $\Gamma_j^\ch$, and the gap diverges. These predictions are confirmed in Fig.~\ref{fig:intragapHoney}.\footnote{At first sight, this reasoning should imply that the gap diverges as $\gamma\to\infty$ in Fig.~\ref{fig:intragapHoney}(d) because $|\setU|$ is odd. However, as discussed in Sec.~\ref{sec:map_gamma_fermion}, the action of $\Gamma_j^\ch$ is not to flip all the $\ut_{ij}$ around a site but to flip the corresponding $v_j$. A representative string in the gauge-equivalent representation with $\setV=\emptyset$ (which we are employing in the numerical calculations) is $\Sigma_R=D_j\Gamma_j^\ch$. This operator, besides flipping the intralayer gauge fields around site $j$, also anticommutes with the fermionic parity $(-1)^{\scN_f}$, leading to the result of Fig.~\ref{fig:intragapHoney}(d), which is also in agreement with following section, cf.\ Fig.~\ref{fig:intergapUniform}.}

\section{Fermion-number-conserving dynamics}
\label{sec:excitations_number}

\subsection{Diagonalization of the Lindbladian}

The Lindbladian of Eq.~(\ref{eq:Lindblad_non_steady}) with $\setU=\emptyset$ reads
\begin{equation}
\label{eq:Lindblad_interlayer}
    \scL=\sum_{ij} f_i^\dagger \dsL_{ij} f_j - L_0, 
\end{equation}
where the real $N\times N$ single-particle matrix $\dsL=\dsA+\dsD$ is the sum of the antisymmetric and diagonal matrices $\dsA$ and $\dsD$ defined in Eq.~(\ref{eq:def_dsA_dsD}) and the constant $L_0=\sum_j (1+v_j)\gamma_j=2\sum_{j\in\setV}\gamma_j$ ensures trace preservation.

Let $\lambda_k$ be the complex eigenvalues of $\dsL$ (single-particle eigenvalues) ordered by real part, $\Re\lambda_1\geq\Re\lambda_2\geq\cdots\geq \Re\lambda_N$. 
We diagonalize $\dsL$ as $\dsL=\dsVR \tL \dsVL^\dagger$, where $\tL=\mathrm{diag}(\lambda_1,\dots,\lambda_N)$ and $\dsVR$ and $\dsVL$ are matrices whose columns are, respectively, the right and left eigenvectors of $\dsL$ (which are distinct because $\dsL$ is non-Hermitian). It follows that 
\begin{equation}
\mathcal{L}=\sum_{i}\lambda_i g_i'g_i-L_0,
\end{equation}
where $\vec{g}=\dsVL^\dagger \vec{f}$ and $\vec{g'}=\vec{f}^\dagger \dsVR$. Similarly to the previous section, $g_j$ and $g_j'$ are NMMs: $g_j$ is an annihilation operator and $g'_j$ is a creation operator that satisfy the canonical commutation relations $\{g_i,g_j'\}=\delta_{ij}$ because of the biorthogonality of the left and right eigenvectors ($\dsVR\dsVL^\dagger=\id$), but $g'_j\neq g_j^\dagger$. Using again the biorthogonality of the eigenvectors, 
\begin{equation}
\scN_f=\sum_j f^\dagger_j f_j=\sum_j g'_j g_j=:\scN_g,
\end{equation}
where $\scN_g$ is the number operator of the NMMs. The change of basis is thus number conserving, as one would expect from the fact that $g_j$ and $g'_j$ do not mix creation and annihilation operators. In particular, the adjoint-fermion vacuum $|0\rangle_{\bm{w},\bm{\scw}}$ is also the vacuum of the NMMs.
The many-body eigenvalues and eigenstates of $\scL$ are then given by $\scL|\bm{n}\rangle=\Lambda_{\bm{n}} |\bm{n}\rangle$. Here,
$|\bm{n}\rangle=\prod_k (g'_k)^{n_k}|0\rangle_{\bm{w},\bm{\scw}}$, where $n_k\in\{0,1\}$ are the eigenvalues of $\scN_g$. The many-body eigenvalues are explicitly given in terms of single-body eigenvalues through 
$\Lambda_{\bm{n}}=\sum_{k} \lambda_k n_k-L_0$. Contrary to the standard situation~\cite{prosen2008}, while all physical many-body states satisfy $\Re\Lambda_{\bm{n}}\leq 0$, the single-particle $\lambda_k$ can have a positive real part if we are not in a steady-state sector. The adjoint-fermion number $n=\sum_k n_k$ is conserved throughout the evolution and is subject to the parity constraint $(-1)^{n}=(-1)^{|\setV|}$. Because of this, we can consider every subsector of fixed $n$ independently and compute a spectral gap for each, which gives 
\begin{equation}
\label{eq:gap_n}
    \Delta_n=L_0-\sum_{k=1}^n \Re\lambda_k.
\end{equation}

\subsection{Analytical bounds on the spectral gap and relaxation rates}
\label{sec:bounds}

The eigenvalues of $\dsL$ and, correspondingly, the spectral gap can, in general, only be determined numerically. 
Nevertheless, we can bound the gap using Bendixson inequalities~\cite{bendixson1902,mirskylinear}, which assert that, for an arbitrary real matrix $M$, the real part of any eigenvalue $z$ is bounded by $x_m\leq \Re z \leq x_M$, where $x_m$ and $x_M$ are the smallest and largest (real) eigenvalues of the symmetric part $(M+M^T)/2$, respectively. For completeness, a simple proof is given in Appendix~\ref{app:Bendixson}. In our case, the bounds are straightforward to establish since the symmetric part of $\dsL$ is diagonal and, hence, its eigenvalues coincide with its entries. 
Accordingly, the spectral gap for $\setV\neq \emptyset$ is bounded by
\begin{equation}
\label{eq:bounds_flips}
\begin{split}
    \max\Big\{0,2\Big(\sum_{j\in\setV}\gamma_j-n&\max_{j\in\setV}\gamma_j\Big)\Big\}
    \leq\Delta_n
    \\
    &\leq2\Big(\sum_{j\in\setV} \gamma_j+n\max_{j\in\setV}\gamma_j\Big).
\end{split}
\end{equation}
For $\setV=\emptyset$, because $(-1)^n=1$ and $L_0=0$, we have, with $n$ even,
\begin{equation}
\label{eq:bounds_noflips}
    2n\min_{j\in\setV}\gamma_j\leq \Delta_n\leq 2n\max_{j\in\setV}\gamma_j.
\end{equation}
The bounds (\ref{eq:bounds_flips}) and (\ref{eq:bounds_noflips}) are independent of any details of the graph on which the system lives. If the dissipation is spatially homogeneous, $\gamma_j=\gamma$, then the bound (\ref{eq:bounds_noflips}) is tight, $\Delta=2n\gamma$ ($2\gamma$ per excited fermion). For $|\setV|>0$, the bound (\ref{eq:bounds_flips}) is also the tightest for homogeneous dissipation, $2(|\setV|-n)\gamma\leq \Delta\leq 2(|\setV|+n)\gamma$ (the lower bound should be replaced by $0$ when $n>|\setV|$). 

In Sec.~\ref{sec:numerics_parity}, we discussed the presence of anomalous relaxation (i.e., a finite gap when the thermodynamic limit is taken before the dissipationless limit) in sectors with $\setU\neq\emptyset$ and its absence in sectors with $\setU=\emptyset$. We now revisit the latter in light of our analytical bounds, as anomalous relaxation is justified by the noncommutativity of the thermodynamic ($N\to\infty$) and dissipationless ($\gamma\to0$) limits, which is ruled out by our bounds. Indeed, by Eq.~(\ref{eq:bounds_flips}), $\Delta$ is bounded from above by an expression that is independent of $N$ (crucially, the sum is over the sites with interlayer gauge flips, not all sites). Hence, taking the limit $\gamma_j\to0$ for all $j$, takes the upper bound to zero (even in the thermodynamic limit), and the spectral gap closes at $\gamma_j=0$.

On the other hand, the lower bound on $\Delta$ as $\gamma_j\to\infty$ is consistent with the quantum Zeno effect. For each sector with $\setU=\emptyset$, there is always a conserved operator in the limit $\gamma\to\infty$, hence the gap goes to zero in this limit.\footnote{Note that the heuristic argument outlined in Sec.~\ref{sec:numerics_parity} is not operative here, as the two limits $N\to\infty$ and $\gamma\to\infty$ have the same effect on the effective dissipation strength $\gamma_{\mathrm{eff}}$.}
Indeed, for any $\setV\neq\emptyset$, consider the operator $\sigma = \rho_{\bm{w}}\prod_{j\in\setV} \Gamma_j^\ch$ (for any steady state $\rho_{\bm{w}}$), which belongs to the sector with flux configuration $\bm{w}$, interlayer gauge flips $\setV$, and intralayer gauge flips $\setU=\emptyset$. $\sigma$ has zero eigenvalue under the dissipative part of $\scL$ (which is the only contribution for $\gamma\to\infty$), thus proving the quantum Zeno effect for $\sigma$. 
We now turn to the lower bound on $\Delta$ given by Eq.~(\ref{eq:bounds_flips}). When we apply the operator $\Gamma_j^\ch$ to $\rho_{\bm{w}}$, we flip $v_j$ but also add a Majorana $c_j$ (we can see $\Gamma_j^\ch$ as a special case of an open string of length zero that has one chiral end and one $\mu=0$ end; cf.\ Fig.~\ref{fig:strings}). Hence, $n=|\setV|$ and, since $\sum_{j\in\setV}\gamma_j-|\setV|\max_{j\in\setV}\gamma_j<0$, the lower bound is zero, consistently with the quantum Zeno effect.

\begin{figure*}[t]
    \centering
    \includegraphics[width=0.8\textwidth]{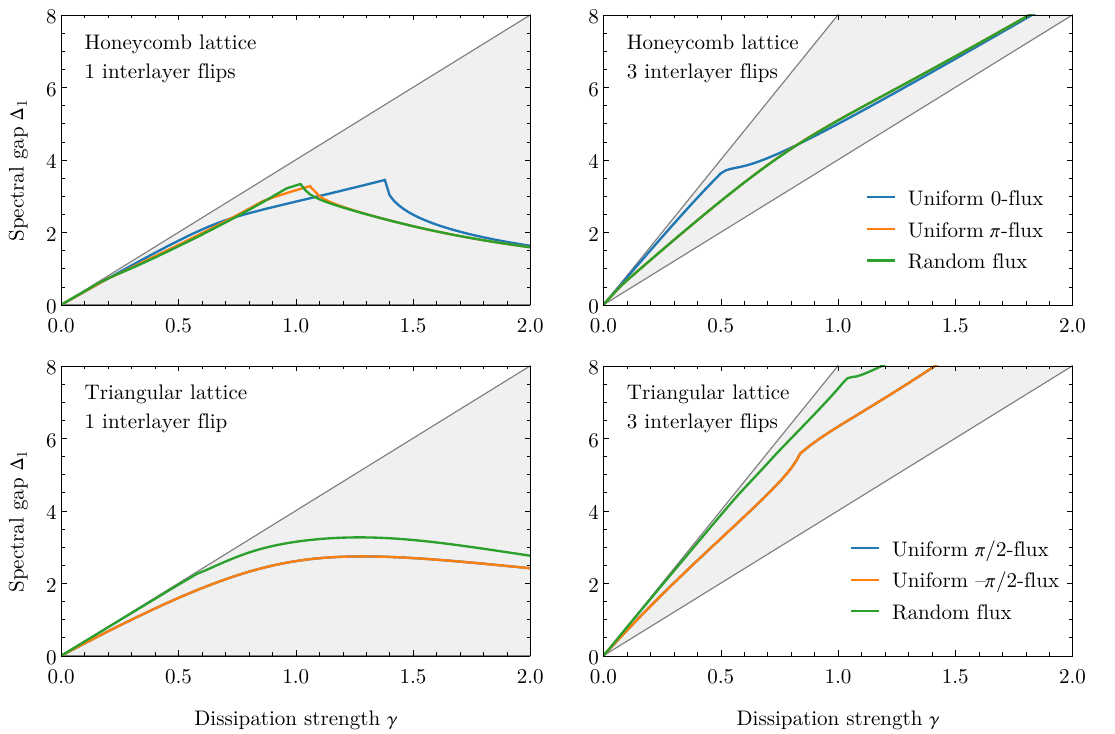}
    \caption{Spectral gap as a function of dissipation strength in fermion-number-conserving sectors.
    We computed the gap for $n=1$ on the honeycomb lattice with $N=512$ [for which the gap has converged, cf.\ Fig.~\ref{fig:intragapHoney}(d)] (top row) and the triangular lattice with $N=400$ (bottom row) for periodic boundary conditions on the torus (with the same number of sites along each direction) using the prescription of Eq.~(\ref{eq:gap_n}), with the single-particle eigenvalues $\lambda_k$ obtained from exact diagonalization of $\dsL=\dsA+\dsD$. We set uniform Hamiltonian couplings $J_{ij}=1$ and dissipation $\gamma_i=\gamma$, and fixed the gauge fields to $u_{ij}=\ut_{ij}$ (i.e., $\setU=\emptyset$). In each panel we considered different flux configurations $u_{ij}$ (colored lines), namely, uniform $0$-flux ($w_p=w_\ell=1$ for all $p$ and $\ell$), uniform $\pi$-flux ($w_p=w_\ell=-1$ for all $p$ and $\ell$), and random flux configurations for the honeycomb lattice (upper row); and uniform $\pi/2$-flux ($w_p=w_\ell=\i$ for all $p$ and $\ell$), uniform $-\pi/2$-flux ($w_p=w_\ell=-\i$ for all $p$ and $\ell$), and random flux configurations for the triangular lattice (lower row). The left and right panels correspond to one and three interlayer gauge flips, respectively (the fermion parity $(-1)^{n}=-1$ restricts us to an odd number of gauge flips). The shaded grey area gives the analytical bounds (\ref{eq:bounds_flips}). In all cases, the spectral gap closes in the dissipationless limit (absence of anomalous dissipation) and for $n=|\setV|=1$ it also closes in the infinite-dissipation limit (quantum Zeno effect).
    }
    \label{fig:intergapUniform}
\end{figure*}

\subsection{Numerical results}
\label{sec:numerics_number}

In Fig.~\ref{fig:intergapUniform}, we plot the spectral gap $\Delta_1$ as a function of dissipation strength $\gamma$ for the honeycomb and triangular lattices. For each lattice, we considered uniform Hamiltonian couplings, $J_{ij}=1$, and dissipation, $\gamma_j=\gamma$, and several choices of flux configurations, namely, uniform $0$-flux, uniform $\pi$-flux, and random flux configurations for the honeycomb lattice (which is bipartite), and uniform $\pm\pi/2$-flux and random flux configurations for the triangular lattice (which is not). For each case, we show both the numerically computed gap (full colored curves) and the analytical bounds established above (shaded grey region). As expected, the numerical results are always inside the analytical bounds, but there can be a large variability for different flux configurations. Qualitatively similar results are found for disordered systems (see Appendix~\ref{app:additional_numerics}).

While the spectral gap for a given flux configuration can be easily computed, the question of which flux configuration minimizes $\Delta$ for fixed $\setV$ remains. 
Given the exponential number of flux configurations, a numerical search among all of them is intractable. In the Hamiltonian system, a theorem by Lieb~\cite{lieb1994} states that the ground state (i.e., the state with the lowest real energy) belongs to the homogeneous $0$-flux sector on the honeycomb lattice and to the homogeneous $\pi$-flux sector on the square lattice. Unfortunately, no similar result holds in the dissipative case. This can be seen, for instance, in the upper left panel of Fig.~\ref{fig:intergapUniform}, where the gap belongs to different flux sectors for different values of $\gamma$.

Finally, we comment on the degeneracy of the gap for the triangular lattice, see the bottom row of Fig.~\ref{fig:intergapUniform}, exemplifying a general feature of nonbipartite graphs~\cite{kitaev2006AnnPhys}. These graphs 
have plaquettes with an odd-length boundary and imaginary flux $w_p=\pm\i$, cf.\ Eq.~(\ref{eq:W_u}). Let $C$ be the antiunitary operator defined by $C\Gamma_j^\mu C^{-1}=-\Gamma_j^\mu$ and $C \i C^{-1}=-\i$ (this operator can always be constructed from an appropriate product of gamma matrices and the complex conjugation operator). Because the Lindbladian and the flux operators are always real sums of products of an even number of gamma matrices, $[C,\scL]=[W_p,C]=0$. Consequently, $W_p(C|\bm{n}\rangle)=CW_p|\bm{n}\rangle=w_p^* (C|\bm{n}\rangle)$ and, similarly, $\scL(C|\bm{n}\rangle)=C\scL|\bm{n}\rangle=\Lambda_{\bm{n}}^* (C|\bm{n}\rangle)$. The states $|\bm{n}\rangle$ and $C|\bm{n}\rangle$ have the same relaxation rate, as their eigenvalues have the same real part. If the graph is bipartite, all plaquettes have an even-length boundary, $w_p^*=w_p$, and the two eigenstates belong to the same flux sector; different flux sectors are unrelated. On the other hand, if the graph is nonbipartite, there are some plaquettes with odd-length boundaries for which $w_p^*=-w_p$, and the two sectors related by flux flips have degenerate relaxation rates. Analogous considerations hold for the fluxes through noncontractible loops.

\section{Conclusions}
\label{sec:conclusions}

We have investigated an exactly solvable dissipative model of interacting spins, focusing on the symmetry and topological properties of the steady states and the relaxation dynamics toward them. With a judicious choice of jump operators, the model is mapped to free Majorana fermions hopping on a graph in the presence of a background $\mathbb{Z}_2$ gauge field. 
The Lindbladian of the system has an exponentially large number of steady states indexed by conserved-flux configurations. 
On topologically nontrivial surfaces, noncontractible fluxes encode classical information but exhibit mixed state topological order, in the sense of SW-SSB of a one-form symmetry. 

We have analyzed the possible excitations of the model, which correspond to Majorana fermions or gauge flips carried by the endpoints of open strings. Fermionic vacua without gauge excitations correspond to the steady states, while vacua with gauge excitations are mapped to decaying states in different symmetry sectors. Having identified the vacua, we then populate them with Majorana fermions to obtain the remaining transient states and the relaxation times of such excitations. When intralayer gauge flips are present, only the parity of adjoint fermions is conserved, with anomalous relaxation emerging in the thermodynamic limit. Without intralayer gauge flips, the adjoint-fermion number is conserved, and we can analytically bound the relevant relaxation rates.

As in the closed Kitaev honeycomb model, our model combines exact solvability with nontrivial topological features. This allowed us to efficiently explore both the symmetry properties of steady states and the relaxation dynamics toward them. However, an important open question remains: how do the topological properties manifest dynamically? The free-fermion representation of the model should enable an efficient characterization of this process. Specifically, starting from a pure state quantum memory, we expect that the SW-SSB occurs sharply at a critical time---before which the decoherence is reversible, and after which the ability to encode quantum information is irreversibly lost. Further investigation is also needed to clarify the relation between this dynamical transition and anomalous relaxation, where the Lindbladian remains gapped at all dissipation strengths. The dissipative gamma-matrix model offers a promising avenue for addressing this question.

Moreover, while we found the relaxation dynamics to be qualitatively similar for different graphs, the degeneracy of the spectral gap in flux sectors related by antiunitary symmetry shows that there can be subtle differences in systems belonging to different symmetry classes (in this case, bipartite versus nonbipartite graphs); similar fine differences exist between bipartite and nonbipartite monitored Kitaev circuits~\cite{klocke2024arXiv}. The non-Hermitian symmetry classification~\cite{bernard2002,kawabata2019PRX,zhou2019PRB} of Lindbladians was recently completed for many-body~\cite{sa2023PRX,kawabata2023PRXQ,garcia2024PRD} and single-particle~\cite{lieu2020PRL,altland2021PRX,kawasaki2022PRB} systems. In this framework, an exhaustive symmetry classification of the gamma-matrix Lindbladian and the dynamical consequences of different symmetries is an interesting direction for future studies.

Finally, throughout the paper, we assumed that dissipation acts on all sites of the graph. However, this is not required by our construction, and we can set $\gamma_j\neq0$ only on a subset of graph edges. A more nontrivial construction exploits the existence of ``dangling edges'' without interactions in the graph. If we add dissipation only to sites with such dangling edges, we do not need to enlarge the local Hilbert space (i.e., consider higher-dimensional gamma matrices) to include dissipation. In particular, in this way we can study the Lindbladian dissipation of spin-$1/2$ models instead of only the more exotic gamma-matrix models. In a system with boundaries, if we only introduce dissipation at sites at the boundary that are connected to fewer than three neighbors, we can use the ``missing'' Pauli matrix going out of these sites as a jump operator. More generically, we can study open versions of any closed Kitaev-like model on an arbitrary graph by cutting certain edges $\langle ij\rangle$ and replacing the coherent interactions on them by incoherent interactions on the two endpoints of the opened edge. 
These additional scenarios preserve the color rule and free-fermion reducibility and, therefore, can be handled with the formalism we developed in this paper. As the ensuing phenomenology can be quite rich and different from that of bulk dissipated models, we defer a detailed analysis of this interesting problem to future work.

\begin{acknowledgments}
This work was supported by a Research Fellowship from the Royal Commission for the Exhibition of 1851 (L.~S.) and EPSRC Grant No.\ EP/V062654/1.
\end{acknowledgments}

\appendix
\renewcommand{\appendixname}{APPENDIX}

\section{\uppercase{Nontrivial steady states: no-go result}}
\label{app:nontrivial_SS}

In this appendix, we show that, although there is considerable freedom in choosing the Hamiltonian contribution to the dynamics (i.e., the construction holds for gamma-matrix models respecting the color rule on arbitrary graphs), free-fermion reducibility constrains the dissipative processes to ``generalized dephasing'' with fully-mixed, infinite-temperature steady states.

We start by noticing that nontrivial steady states follow from at least one jump operator $L$ that satisfies $L^\dagger\neq \eta L$ for any complex $\eta$ (this is a necessary but not sufficient condition and the discussion can be straightforwardly generalized to more than one jump operator). Let us assume the existence of such a jump operator. Since the gamma matrices are Hermitian, $L$ must be a linear superposition of gamma matrices with complex coefficients, $L=\sum_{j,\mu}\ell_{j,\mu}\Gamma_j^\mu$. Following the same steps as in Sec.~\ref{sec:free-fermion}, we can write the term $L\rho L^\dagger$ as
\begin{equation}
    L\rho L^\dagger \to \sum_{ij,\mu\nu}\ell_{i,\mu}\ell^*_{j,\nu}v_{ij}^{\mu\nu}\i c_i \ct_j|\rho\rangle,
\end{equation}
where, analogously to before, $v_{ij}^{\mu\nu}=\i b^\mu_i \bt_j^\nu$. 
Two operators $v_{ij}^{\mu\nu}$ \emph{anticommute} whenever they share an odd number of indices (they also anticommute with certain $u_{ij}$ and $\ut_{ij}$ that arise in the Hamiltonian contribution). Since the full set of edge operators $\{u_{ij},\ut_{ij},v_{ij}^{\mu\nu}\}$ does not commute with itself, the operators cannot be replaced by their eigenvalues $\pm1$, and the fermionized Lindbladian does not split into individual free-fermion sectors. This only occurs when each $L$ is a single gamma matrix, whence a trivial steady state follows.

\begin{figure*}[t]
    \centering
    \includegraphics[width=0.8\textwidth]{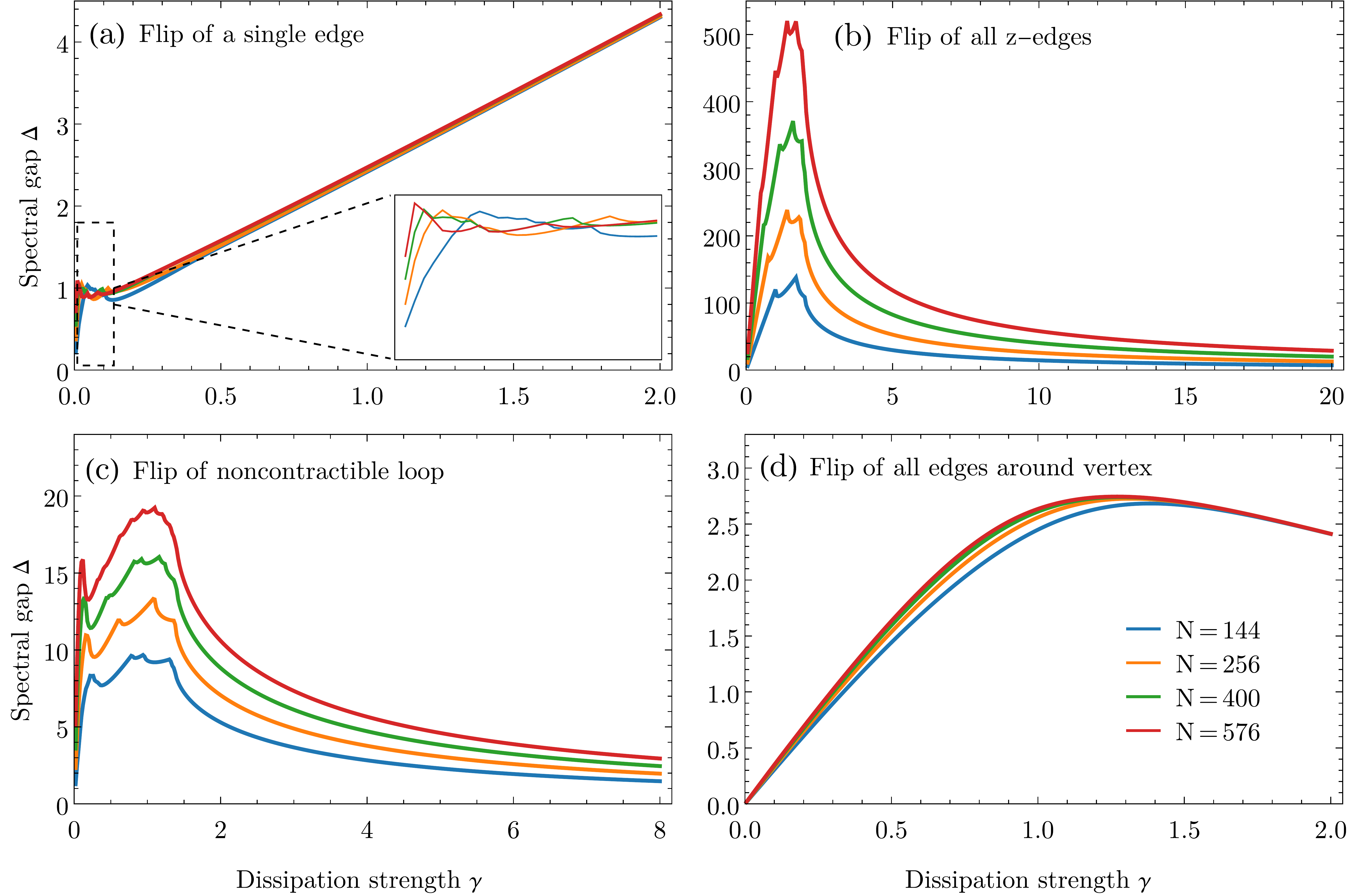}
    \caption{Additional numerical results for the spectral gap as a function of dissipation strength in sectors with fermion-parity conservation.
    The gap was computed for the triangular lattice with $N=144$--$576$ and periodic boundary conditions on the torus, using the prescription of Sec.~\ref{sec:parity_gap}. Different panels correspond to the different choices of intralayer gauge flips (i.e., of $\setU$) described in the caption of Fig.~\ref{fig:intragapHoney}.}
    \label{fig:intragapTri}
\end{figure*}

\begin{figure*}[t]
    \centering
    \includegraphics[width=0.8\textwidth]{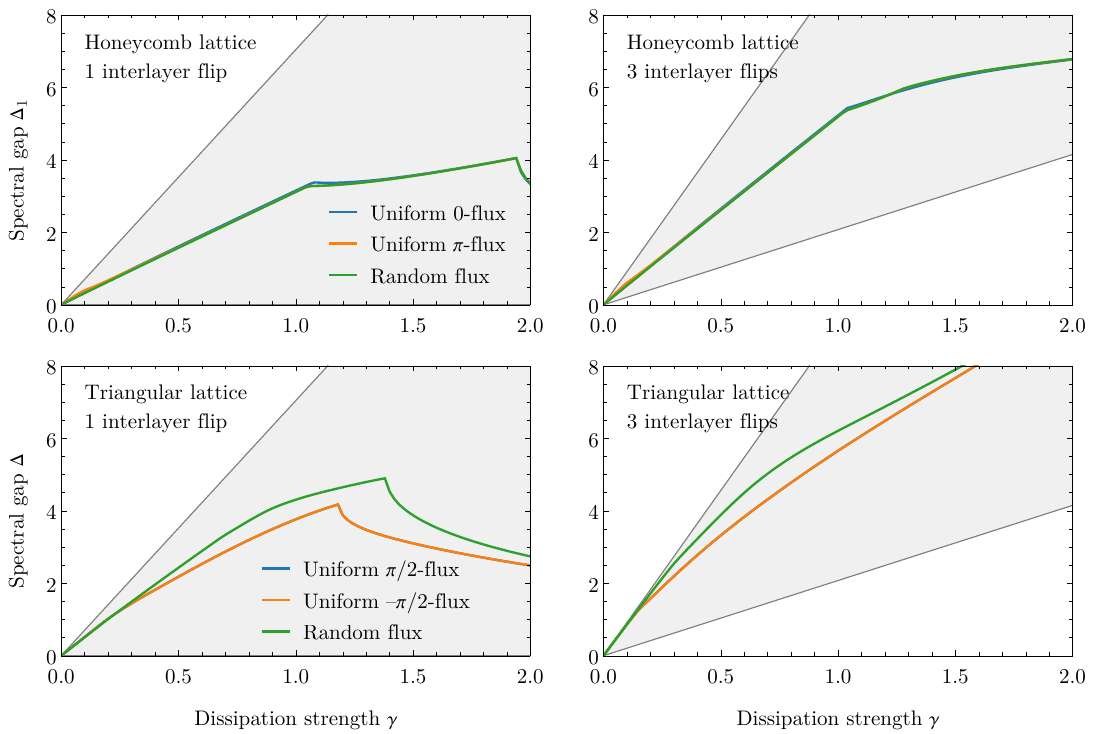}
    \caption{Additional numerical results for the spectral gap as a function of dissipation strength in sectors with fermion-number conservation.`
    We computed the gap for $n=1$ on the honeycomb lattice with $N=512$ (top row) and triangular lattice with $N=400$ (bottom row) for periodic boundary conditions on the torus, using the prescription of Eq.~(\ref{eq:gap_n}). We set $\gamma_j=\gamma X_j$ with $X_j$ independently sampled from a uniform distribution in $[0, 2]$ for each site $j$ and disordered couplings $J_{ij}$ independently sampled from $[-2,2]$. The left panels correspond to a single interlayer gauge flip ($|\setV|=1$) and the right panels to three flips ($|\setV|=3$). In each panel, we show the results for the different background flux configurations described in the caption of Fig.~\ref{fig:intergapUniform}. The shaded gray area gives the analytical bounds (\ref{eq:bounds_flips}), which, as expected, are less tight than in the uniform case.}
    \label{fig:intergapRandom}
\end{figure*}

\section{\uppercase{Counting of independent fluxes}}
\label{app:fluxes}

As mentioned in the main text, not all fluxes $W_\scC$ and $\scW^\ch_\scP$ are independent. In this appendix, we enumerate the independent ones for a graph $G$ with $N$ vertices, $E$ edges, and $F$ faces, which we embed on an arbitrary orientable surface $S$ with genus $g$ and $B$ boundaries in such a way that no edges of $G$ cross (such a surface always exists for sufficiently high genus). For the counting, we need to relate the graph data ($N$, $E$, $F$) to the surface data ($g$, $B$), which can be done through the Euler characteristic $\chi=N-E+F=1-b_1+b_2=2-2g-B$. Here, the first Betti number $b_1=2g+B+b_2-1$ counts the number of noncontractible loops on $S$, while the second Betti number $b_2=1$ if $B=0$ and vanishes otherwise. 

First, we consider the fluxes through the elementary plaquettes $p$ of the graph ($W_p$ and $\Wt_p$), of which there are $F$ in each copy. Additionally, for each copy, we can have $b_1$ noncontractible loops $\ell$ and the associated fluxes $W_{\ell}$ and $\Wt_{\ell}$. There is also a weak flux $\scW_q$ through each interlayer plaquette $q$, of which there is one for each of the $E$ edges. In total, we thus have $2(F+b_1)+E$ elementary fluxes.

The flux through any closed surface is $+1$ because each gauge field, which squares to $+1$, appears twice in the definition of the flux through the surface. Then, if $S$ is closed (i.e., $b_2=1$), the total flux through the surface (i.e., $\prod_p W_p=+1$) is fixed. Moreover, the flux through each of the $F$ interlayer cells (i.e., $W_p\Wt_p \prod_{q\in\partial p}\scW_q=+1$) fixes all the fluxes $\Wt_p$ since there is one $\Wt_p$ per interlayer cell. (The flux $\prod_p \Wt_p$ is already fixed by the previous conditions and, therefore, does not amount to an additional constraint.) Finally, $W_{\ell}\Wt_{\ell}=\prod_q \scW_q$, where the product is taken over interlayer plaquettes $q$ that connect the noncontractible loops $\ell$ in the two copies; this fixes all fluxes $\Wt_\ell$ and gives $b_1$ further constraints.

In total, we thus have $F+b_1+b_2$ constraints, and the number of independent fluxes (gauge-invariant degrees of freedom) is $F+E+b_1-b_2$: $F-b_2$ plaquette fluxes $W_p$, $b_1$ noncontractible-loop fluxes $W_{\ell}$, and $E$ interlayer weak fluxes $\scW_q$, in agreement with Eq.~(\ref{eq:Ldecomp_fluxes}). 
Using the Euler characteristic, the number of independent fluxes can be written as $2E-N+1$.

\section{\uppercase{Additional numerical results}}
\label{app:additional_numerics}

In this appendix, we present additional numerical results to substantiate the claims that the spectral gap as a function of dissipation strength is qualitatively the same for (i) the honeycomb (Fig.~\ref{fig:intragapHoney} in the main text) and triangular (Fig.~\ref{fig:intragapTri} in this appendix) lattices in symmetry sectors with fermion-parity conservation; and (ii) the honeycomb and triangular lattices with uniform (Fig.~\ref{fig:intergapUniform} in the main text) and disordered (Fig.~\ref{fig:intergapRandom} in this appendix) Hamiltonian and dissipative couplings in symmetry sectors with fermion-number conservation.

\section{\uppercase{Bendixson inequalities}}
\label{app:Bendixson}

\begin{theorem*}[Bendixson inequalities~\cite{bendixson1902}]
Let $M$ be an arbitrary real matrix, $z$ any one of its eigenvalues, and $x_m$ and $x_M$, respectively, the smallest and largest eigenvalues of $(M+M^T)/2$, where ${M}^T$ denotes the transpose of $M$. Then,
\begin{equation}
    x_m\leq \Re z \leq x_M.
\end{equation}
\end{theorem*}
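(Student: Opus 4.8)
The plan is to use the elementary fact that the real part of any eigenvalue of $M$ can be written as a Rayleigh quotient of the symmetric part $S:=(M+M^T)/2$, after which the bound follows immediately from the spectral theorem for real symmetric matrices.

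Concretely, I would first fix an eigenvalue $z$ of $M$ together with a corresponding eigenvector $v$, normalized so that $v^\dagger v=1$, where $v^\dagger$ denotes the conjugate transpose (note that $v$ is in general complex even though $M$ is real). From $Mv=zv$ one has $z=v^\dagger M v$. Conjugating this scalar identity and using that $M$ is real, so $M^*=M$ and hence $M^\dagger=M^T$, gives $\bar z=(v^\dagger M v)^*=v^\dagger M^T v$. Averaging the two identities yields
\begin{equation}
\Re z=\frac{z+\bar z}{2}=v^\dagger\,\frac{M+M^T}{2}\,v=v^\dagger S v,
\end{equation}
which is automatically real because $S$ is Hermitian.

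It then remains to bound $v^\dagger S v$ for a unit vector $v$. Since $S$ is real and symmetric, the spectral theorem provides an orthonormal eigenbasis $\{w_k\}$ with real eigenvalues $x_m=\lambda_1\le\dots\le\lambda_n=x_M$; expanding $v=\sum_k\alpha_k w_k$ with $\sum_k|\alpha_k|^2=1$ gives $v^\dagger S v=\sum_k\lambda_k|\alpha_k|^2$, a convex combination of the $\lambda_k$, whence $x_m\le v^\dagger S v\le x_M$ (this is just the Rayleigh--Ritz, or Courant--Fischer, bound). Combining with the previous display proves $x_m\le\Re z\le x_M$. There is no genuine obstacle here; the only point requiring care is to work with the conjugate transpose rather than the transpose throughout, since the eigenvector of a real matrix need not be real, and to note that $v^\dagger S v$ is real precisely because $S$ is symmetric. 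In the application of Sec.~\ref{sec:bounds} one takes $M=\dsL=\dsA+\dsD$, whose symmetric part is the diagonal matrix $\dsD$, so $x_m$ and $x_M$ are simply its extremal diagonal entries.
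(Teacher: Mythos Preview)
Your proof is correct and follows essentially the same approach as the paper: both identify $\Re z$ with the Rayleigh quotient $v^\dagger S v/v^\dagger v$ of the symmetric part $S=(M+M^T)/2$ and then bound it by the extremal eigenvalues of $S$. The only cosmetic difference is that the paper shifts $M\mapsto M-x_m\id$ so that the symmetric part becomes positive semidefinite and the lower bound reads as $0\le\langle v,\tfrac{B+B^T}{2}v\rangle=\Re(z-x_m)\langle v,v\rangle$, whereas you invoke the Rayleigh--Ritz bound directly via the spectral decomposition of $S$; the content is identical.
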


\begin{proof}
Let $B=M-x_m \id$ and $v$ be one of its eigenvectors with eigenvalue $y=z-x_m$, $Bv=yv$. We thus have to prove that $\Re y\geq 0$. Since $(B+B^T)/2$ is positive semidefinite by construction, we have
\begin{equation}
    0\leq \langle v, \frac{B+B^T}{2}v\rangle 
    =\frac{1}{2}\langle v, Bv\rangle + \frac{1}{2}\langle Bv, v \rangle
    =\Re y \ \langle v,v \rangle.
\end{equation}
Because $\langle v,v \rangle \geq 0$, we arrive at the desired result. The upper bound on $\Re y$ can be established in the same way by considering the matrix $C=x_M \id - M$.
\end{proof}

\bibliography{bib}

\begin{thebibliography}{160}%
\makeatletter
\providecommand \@ifxundefined [1]{%
 \@ifx{#1\undefined}
}%
\providecommand \@ifnum [1]{%
 \ifnum #1\expandafter \@firstoftwo
 \else \expandafter \@secondoftwo
 \fi
}%
\providecommand \@ifx [1]{%
 \ifx #1\expandafter \@firstoftwo
 \else \expandafter \@secondoftwo
 \fi
}%
\providecommand \natexlab [1]{#1}%
\providecommand \enquote  [1]{``#1''}%
\providecommand \bibnamefont  [1]{#1}%
\providecommand \bibfnamefont [1]{#1}%
\providecommand \citenamefont [1]{#1}%
\providecommand \href@noop [0]{\@secondoftwo}%
\providecommand \href [0]{\begingroup \@sanitize@url \@href}%
\providecommand \@href[1]{\@@startlink{#1}\@@href}%
\providecommand \@@href[1]{\endgroup#1\@@endlink}%
\providecommand \@sanitize@url [0]{\catcode `\\12\catcode `\$12\catcode `\&12\catcode `\#12\catcode `\^12\catcode `\_12\catcode `\%12\relax}%
\providecommand \@@startlink[1]{}%
\providecommand \@@endlink[0]{}%
\providecommand \url  [0]{\begingroup\@sanitize@url \@url }%
\providecommand \@url [1]{\endgroup\@href {#1}{\urlprefix }}%
\providecommand \urlprefix  [0]{URL }%
\providecommand \Eprint [0]{\href }%
\providecommand \doibase [0]{https://doi.org/}%
\providecommand \selectlanguage [0]{\@gobble}%
\providecommand \bibinfo  [0]{\@secondoftwo}%
\providecommand \bibfield  [0]{\@secondoftwo}%
\providecommand \translation [1]{[#1]}%
\providecommand \BibitemOpen [0]{}%
\providecommand \bibitemStop [0]{}%
\providecommand \bibitemNoStop [0]{.\EOS\space}%
\providecommand \EOS [0]{\spacefactor3000\relax}%
\providecommand \BibitemShut  [1]{\csname bibitem#1\endcsname}%
\let\auto@bib@innerbib\@empty
\bibitem [{\citenamefont {Anderson}(1973)}]{anderson1973}%
  \BibitemOpen
  \bibfield  {author} {\bibinfo {author} {\bibfnamefont {P.~W.}\ \bibnamefont {Anderson}},\ }\bibfield  {title} {\bibinfo {title} {{Resonating valence bonds: A new kind of insulator?}},\ }\href {https://doi.org/https://doi.org/10.1016/0025-5408(73)90167-0} {\bibfield  {journal} {\bibinfo  {journal} {Mat. Res. Bull.}\ }\textbf {\bibinfo {volume} {8}},\ \bibinfo {pages} {153} (\bibinfo {year} {1973})}\BibitemShut {NoStop}%
\bibitem [{\citenamefont {Anderson}(1987)}]{anderson1987}%
  \BibitemOpen
  \bibfield  {author} {\bibinfo {author} {\bibfnamefont {P.~W.}\ \bibnamefont {Anderson}},\ }\bibfield  {title} {\bibinfo {title} {{The Resonating Valence Bond State in La$_2$CuO$_4$ and Superconductivity}},\ }\href {https://doi.org/10.1126/science.235.4793.1196} {\bibfield  {journal} {\bibinfo  {journal} {Science}\ }\textbf {\bibinfo {volume} {235}},\ \bibinfo {pages} {1196} (\bibinfo {year} {1987})}\BibitemShut {NoStop}%
\bibitem [{\citenamefont {Kalmeyer}\ and\ \citenamefont {Laughlin}(1987)}]{kalmeyer1987}%
  \BibitemOpen
  \bibfield  {author} {\bibinfo {author} {\bibfnamefont {V.}~\bibnamefont {Kalmeyer}}\ and\ \bibinfo {author} {\bibfnamefont {R.~B.}\ \bibnamefont {Laughlin}},\ }\bibfield  {title} {\bibinfo {title} {{Equivalence of the resonating-valence-bond and fractional quantum Hall states}},\ }\href {https://doi.org/10.1103/PhysRevLett.59.2095} {\bibfield  {journal} {\bibinfo  {journal} {Phys. Rev. Lett.}\ }\textbf {\bibinfo {volume} {59}},\ \bibinfo {pages} {2095} (\bibinfo {year} {1987})}\BibitemShut {NoStop}%
\bibitem [{\citenamefont {Rokhsar}\ and\ \citenamefont {Kivelson}(1988)}]{rokhsar1988}%
  \BibitemOpen
  \bibfield  {author} {\bibinfo {author} {\bibfnamefont {D.~S.}\ \bibnamefont {Rokhsar}}\ and\ \bibinfo {author} {\bibfnamefont {S.~A.}\ \bibnamefont {Kivelson}},\ }\bibfield  {title} {\bibinfo {title} {{Superconductivity and the Quantum Hard-Core Dimer Gas}},\ }\href {https://doi.org/10.1103/PhysRevLett.61.2376} {\bibfield  {journal} {\bibinfo  {journal} {Phys. Rev. Lett.}\ }\textbf {\bibinfo {volume} {61}},\ \bibinfo {pages} {2376} (\bibinfo {year} {1988})}\BibitemShut {NoStop}%
\bibitem [{\citenamefont {Wen}(1991)}]{wen1991}%
  \BibitemOpen
  \bibfield  {author} {\bibinfo {author} {\bibfnamefont {X.~G.}\ \bibnamefont {Wen}},\ }\bibfield  {title} {\bibinfo {title} {{Mean-field theory of spin-liquid states with finite energy gap and topological orders}},\ }\href {https://doi.org/10.1103/PhysRevB.44.2664} {\bibfield  {journal} {\bibinfo  {journal} {Phys. Rev. B}\ }\textbf {\bibinfo {volume} {44}},\ \bibinfo {pages} {2664} (\bibinfo {year} {1991})}\BibitemShut {NoStop}%
\bibitem [{\citenamefont {Senthil}\ and\ \citenamefont {Fisher}(2000)}]{senthil2000}%
  \BibitemOpen
  \bibfield  {author} {\bibinfo {author} {\bibfnamefont {T.}~\bibnamefont {Senthil}}\ and\ \bibinfo {author} {\bibfnamefont {M.~P.~A.}\ \bibnamefont {Fisher}},\ }\bibfield  {title} {\bibinfo {title} {{${Z}_{2}$ gauge theory of electron fractionalization in strongly correlated systems}},\ }\href {https://doi.org/10.1103/PhysRevB.62.7850} {\bibfield  {journal} {\bibinfo  {journal} {Phys. Rev. B}\ }\textbf {\bibinfo {volume} {62}},\ \bibinfo {pages} {7850} (\bibinfo {year} {2000})}\BibitemShut {NoStop}%
\bibitem [{\citenamefont {Moessner}\ and\ \citenamefont {Sondhi}(2001)}]{moessner2001}%
  \BibitemOpen
  \bibfield  {author} {\bibinfo {author} {\bibfnamefont {R.}~\bibnamefont {Moessner}}\ and\ \bibinfo {author} {\bibfnamefont {S.~L.}\ \bibnamefont {Sondhi}},\ }\bibfield  {title} {\bibinfo {title} {{Resonating Valence Bond Phase in the Triangular Lattice Quantum Dimer Model}},\ }\href {https://doi.org/10.1103/PhysRevLett.86.1881} {\bibfield  {journal} {\bibinfo  {journal} {Phys. Rev. Lett.}\ }\textbf {\bibinfo {volume} {86}},\ \bibinfo {pages} {1881} (\bibinfo {year} {2001})}\BibitemShut {NoStop}%
\bibitem [{\citenamefont {Kitaev}(2006)}]{kitaev2006AnnPhys}%
  \BibitemOpen
  \bibfield  {author} {\bibinfo {author} {\bibfnamefont {A.}~\bibnamefont {Kitaev}},\ }\bibfield  {title} {\bibinfo {title} {{Anyons in an exactly solved model and beyond}},\ }\href {https://doi.org/https://doi.org/10.1016/j.aop.2005.10.005} {\bibfield  {journal} {\bibinfo  {journal} {Ann. Phys.}\ }\textbf {\bibinfo {volume} {321}},\ \bibinfo {pages} {2} (\bibinfo {year} {2006})}\BibitemShut {NoStop}%
\bibitem [{\citenamefont {Dennis}\ \emph {et~al.}(2002)\citenamefont {Dennis}, \citenamefont {Kitaev}, \citenamefont {Landahl},\ and\ \citenamefont {Preskill}}]{dennis2002JMP}%
  \BibitemOpen
  \bibfield  {author} {\bibinfo {author} {\bibfnamefont {E.}~\bibnamefont {Dennis}}, \bibinfo {author} {\bibfnamefont {A.}~\bibnamefont {Kitaev}}, \bibinfo {author} {\bibfnamefont {A.}~\bibnamefont {Landahl}},\ and\ \bibinfo {author} {\bibfnamefont {J.}~\bibnamefont {Preskill}},\ }\bibfield  {title} {\bibinfo {title} {{Topological quantum memory}},\ }\href {https://doi.org/10.1063/1.1499754} {\bibfield  {journal} {\bibinfo  {journal} {J. Math. Phys.}\ }\textbf {\bibinfo {volume} {43}},\ \bibinfo {pages} {4452} (\bibinfo {year} {2002})}\BibitemShut {NoStop}%
\bibitem [{\citenamefont {Kitaev}(2003)}]{kitaev2001Anyons}%
  \BibitemOpen
  \bibfield  {author} {\bibinfo {author} {\bibfnamefont {A.~Y.}\ \bibnamefont {Kitaev}},\ }\bibfield  {title} {\bibinfo {title} {{Fault-tolerant quantum computation by anyons}},\ }\href {https://doi.org/https://doi.org/10.1016/S0003-4916(02)00018-0} {\bibfield  {journal} {\bibinfo  {journal} {Ann. Phys.}\ }\textbf {\bibinfo {volume} {303}},\ \bibinfo {pages} {2} (\bibinfo {year} {2003})}\BibitemShut {NoStop}%
\bibitem [{\citenamefont {Nayak}\ \emph {et~al.}(2008)\citenamefont {Nayak}, \citenamefont {Simon}, \citenamefont {Stern}, \citenamefont {Freedman},\ and\ \citenamefont {Das~Sarma}}]{nayak2008RMP}%
  \BibitemOpen
  \bibfield  {author} {\bibinfo {author} {\bibfnamefont {C.}~\bibnamefont {Nayak}}, \bibinfo {author} {\bibfnamefont {S.~H.}\ \bibnamefont {Simon}}, \bibinfo {author} {\bibfnamefont {A.}~\bibnamefont {Stern}}, \bibinfo {author} {\bibfnamefont {M.}~\bibnamefont {Freedman}},\ and\ \bibinfo {author} {\bibfnamefont {S.}~\bibnamefont {Das~Sarma}},\ }\bibfield  {title} {\bibinfo {title} {{Non-Abelian anyons and topological quantum computation}},\ }\href {https://doi.org/10.1103/RevModPhys.80.1083} {\bibfield  {journal} {\bibinfo  {journal} {Rev. Mod. Phys.}\ }\textbf {\bibinfo {volume} {80}},\ \bibinfo {pages} {1083} (\bibinfo {year} {2008})}\BibitemShut {NoStop}%
\bibitem [{\citenamefont {Lee}(2008)}]{lee2008}%
  \BibitemOpen
  \bibfield  {author} {\bibinfo {author} {\bibfnamefont {P.~A.}\ \bibnamefont {Lee}},\ }\bibfield  {title} {\bibinfo {title} {{An End to the Drought of Quantum Spin Liquids}},\ }\href {https://doi.org/10.1126/science.1163196} {\bibfield  {journal} {\bibinfo  {journal} {Science}\ }\textbf {\bibinfo {volume} {321}},\ \bibinfo {pages} {1306} (\bibinfo {year} {2008})}\BibitemShut {NoStop}%
\bibitem [{\citenamefont {Savary}\ and\ \citenamefont {Balents}(2016)}]{savary2017RPP}%
  \BibitemOpen
  \bibfield  {author} {\bibinfo {author} {\bibfnamefont {L.}~\bibnamefont {Savary}}\ and\ \bibinfo {author} {\bibfnamefont {L.}~\bibnamefont {Balents}},\ }\bibfield  {title} {\bibinfo {title} {{Quantum spin liquids: a review}},\ }\href {https://doi.org/10.1088/0034-4885/80/1/016502} {\bibfield  {journal} {\bibinfo  {journal} {Rep. Prog. Phys.}\ }\textbf {\bibinfo {volume} {80}},\ \bibinfo {pages} {016502} (\bibinfo {year} {2016})}\BibitemShut {NoStop}%
\bibitem [{\citenamefont {Zhou}\ \emph {et~al.}(2017)\citenamefont {Zhou}, \citenamefont {Kanoda},\ and\ \citenamefont {Ng}}]{zhou2017RMP}%
  \BibitemOpen
  \bibfield  {author} {\bibinfo {author} {\bibfnamefont {Y.}~\bibnamefont {Zhou}}, \bibinfo {author} {\bibfnamefont {K.}~\bibnamefont {Kanoda}},\ and\ \bibinfo {author} {\bibfnamefont {T.-K.}\ \bibnamefont {Ng}},\ }\bibfield  {title} {\bibinfo {title} {Quantum spin liquid states},\ }\href {https://doi.org/10.1103/RevModPhys.89.025003} {\bibfield  {journal} {\bibinfo  {journal} {Rev. Mod. Phys.}\ }\textbf {\bibinfo {volume} {89}},\ \bibinfo {pages} {025003} (\bibinfo {year} {2017})}\BibitemShut {NoStop}%
\bibitem [{\citenamefont {Takagi}\ \emph {et~al.}(2019)\citenamefont {Takagi}, \citenamefont {Takayama}, \citenamefont {Jackeli}, \citenamefont {Khaliullin},\ and\ \citenamefont {Nagler}}]{takagi2019NatRev}%
  \BibitemOpen
  \bibfield  {author} {\bibinfo {author} {\bibfnamefont {H.}~\bibnamefont {Takagi}}, \bibinfo {author} {\bibfnamefont {T.}~\bibnamefont {Takayama}}, \bibinfo {author} {\bibfnamefont {G.}~\bibnamefont {Jackeli}}, \bibinfo {author} {\bibfnamefont {G.}~\bibnamefont {Khaliullin}},\ and\ \bibinfo {author} {\bibfnamefont {S.~E.}\ \bibnamefont {Nagler}},\ }\bibfield  {title} {\bibinfo {title} {{Concept and realization of Kitaev quantum spin liquids}},\ }\href {https://www.nature.com/articles/s42254-019-0038-2} {\bibfield  {journal} {\bibinfo  {journal} {Nat. Rev. Phys.}\ }\textbf {\bibinfo {volume} {1}},\ \bibinfo {pages} {264} (\bibinfo {year} {2019})}\BibitemShut {NoStop}%
\bibitem [{\citenamefont {Broholm}\ \emph {et~al.}(2020)\citenamefont {Broholm}, \citenamefont {Cava}, \citenamefont {Kivelson}, \citenamefont {Nocera}, \citenamefont {Norman},\ and\ \citenamefont {Senthil}}]{broholm2020Sci}%
  \BibitemOpen
  \bibfield  {author} {\bibinfo {author} {\bibfnamefont {C.}~\bibnamefont {Broholm}}, \bibinfo {author} {\bibfnamefont {R.~J.}\ \bibnamefont {Cava}}, \bibinfo {author} {\bibfnamefont {S.~A.}\ \bibnamefont {Kivelson}}, \bibinfo {author} {\bibfnamefont {D.~G.}\ \bibnamefont {Nocera}}, \bibinfo {author} {\bibfnamefont {M.~R.}\ \bibnamefont {Norman}},\ and\ \bibinfo {author} {\bibfnamefont {T.}~\bibnamefont {Senthil}},\ }\bibfield  {title} {\bibinfo {title} {Quantum spin liquids},\ }\href {https://doi.org/10.1126/science.aay0668} {\bibfield  {journal} {\bibinfo  {journal} {Science}\ }\textbf {\bibinfo {volume} {367}},\ \bibinfo {pages} {eaay0668} (\bibinfo {year} {2020})}\BibitemShut {NoStop}%
\bibitem [{\citenamefont {Chamorro}\ \emph {et~al.}(2020)\citenamefont {Chamorro}, \citenamefont {McQueen},\ and\ \citenamefont {Tran}}]{chamorro2020}%
  \BibitemOpen
  \bibfield  {author} {\bibinfo {author} {\bibfnamefont {J.~R.}\ \bibnamefont {Chamorro}}, \bibinfo {author} {\bibfnamefont {T.~M.}\ \bibnamefont {McQueen}},\ and\ \bibinfo {author} {\bibfnamefont {T.~T.}\ \bibnamefont {Tran}},\ }\bibfield  {title} {\bibinfo {title} {{Chemistry of Quantum Spin Liquids}},\ }\href {https://doi.org/10.1021/acs.chemrev.0c00641} {\bibfield  {journal} {\bibinfo  {journal} {Chem. Rev.}\ }\textbf {\bibinfo {volume} {121}},\ \bibinfo {pages} {2898} (\bibinfo {year} {2020})}\BibitemShut {NoStop}%
\bibitem [{\citenamefont {Clark}\ and\ \citenamefont {Abdeldaim}(2021)}]{clark2021AnnuRev}%
  \BibitemOpen
  \bibfield  {author} {\bibinfo {author} {\bibfnamefont {L.}~\bibnamefont {Clark}}\ and\ \bibinfo {author} {\bibfnamefont {A.~H.}\ \bibnamefont {Abdeldaim}},\ }\bibfield  {title} {\bibinfo {title} {{Quantum Spin Liquids from a Materials Perspective}},\ }\href {https://doi.org/https://doi.org/10.1146/annurev-matsci-080819-011453} {\bibfield  {journal} {\bibinfo  {journal} {Annu. Rev. Condens. Matter Phys.}\ }\textbf {\bibinfo {volume} {51}},\ \bibinfo {pages} {495} (\bibinfo {year} {2021})}\BibitemShut {NoStop}%
\bibitem [{\citenamefont {Trebst}\ and\ \citenamefont {Hickey}(2022)}]{trebst2022}%
  \BibitemOpen
  \bibfield  {author} {\bibinfo {author} {\bibfnamefont {S.}~\bibnamefont {Trebst}}\ and\ \bibinfo {author} {\bibfnamefont {C.}~\bibnamefont {Hickey}},\ }\bibfield  {title} {\bibinfo {title} {{Kitaev materials}},\ }\href {https://doi.org/https://doi.org/10.1016/j.physrep.2021.11.003} {\bibfield  {journal} {\bibinfo  {journal} {Phys. Rep.}\ }\textbf {\bibinfo {volume} {950}},\ \bibinfo {pages} {1} (\bibinfo {year} {2022})}\BibitemShut {NoStop}%
\bibitem [{\citenamefont {Khatua}\ \emph {et~al.}(2023)\citenamefont {Khatua}, \citenamefont {Sana}, \citenamefont {Zorko}, \citenamefont {Gomilšek}, \citenamefont {Sethupathi}, \citenamefont {Rao}, \citenamefont {Baenitz}, \citenamefont {Schmidt},\ and\ \citenamefont {Khuntia}}]{khatua2023PhysRep}%
  \BibitemOpen
  \bibfield  {author} {\bibinfo {author} {\bibfnamefont {J.}~\bibnamefont {Khatua}}, \bibinfo {author} {\bibfnamefont {B.}~\bibnamefont {Sana}}, \bibinfo {author} {\bibfnamefont {A.}~\bibnamefont {Zorko}}, \bibinfo {author} {\bibfnamefont {M.}~\bibnamefont {Gomilšek}}, \bibinfo {author} {\bibfnamefont {K.}~\bibnamefont {Sethupathi}}, \bibinfo {author} {\bibfnamefont {M.~R.}\ \bibnamefont {Rao}}, \bibinfo {author} {\bibfnamefont {M.}~\bibnamefont {Baenitz}}, \bibinfo {author} {\bibfnamefont {B.}~\bibnamefont {Schmidt}},\ and\ \bibinfo {author} {\bibfnamefont {P.}~\bibnamefont {Khuntia}},\ }\bibfield  {title} {\bibinfo {title} {{Experimental signatures of quantum and topological states in frustrated magnetism}},\ }\href {https://doi.org/https://doi.org/10.1016/j.physrep.2023.09.008} {\bibfield  {journal} {\bibinfo  {journal} {Phys. Rep.}\ }\textbf {\bibinfo {volume} {1041}},\ \bibinfo {pages} {1} (\bibinfo {year} {2023})}\BibitemShut {NoStop}%
\bibitem [{\citenamefont {Hermanns}\ \emph {et~al.}(2018)\citenamefont {Hermanns}, \citenamefont {Kimchi},\ and\ \citenamefont {Knolle}}]{hermanns2018AnuuRev}%
  \BibitemOpen
  \bibfield  {author} {\bibinfo {author} {\bibfnamefont {M.}~\bibnamefont {Hermanns}}, \bibinfo {author} {\bibfnamefont {I.}~\bibnamefont {Kimchi}},\ and\ \bibinfo {author} {\bibfnamefont {J.}~\bibnamefont {Knolle}},\ }\bibfield  {title} {\bibinfo {title} {{Physics of the Kitaev Model: Fractionalization, Dynamic Correlations, and Material Connections}},\ }\href {https://doi.org/https://doi.org/10.1146/annurev-conmatphys-033117-053934} {\bibfield  {journal} {\bibinfo  {journal} {Annu. Rev. Condens. Matter Phys.}\ }\textbf {\bibinfo {volume} {9}},\ \bibinfo {pages} {17} (\bibinfo {year} {2018})}\BibitemShut {NoStop}%
\bibitem [{\citenamefont {Knolle}\ and\ \citenamefont {Moessner}(2019)}]{knolle2019annurev}%
  \BibitemOpen
  \bibfield  {author} {\bibinfo {author} {\bibfnamefont {J.}~\bibnamefont {Knolle}}\ and\ \bibinfo {author} {\bibfnamefont {R.}~\bibnamefont {Moessner}},\ }\bibfield  {title} {\bibinfo {title} {{A Field Guide to Spin Liquids}},\ }\href {https://doi.org/https://doi.org/10.1146/annurev-conmatphys-031218-013401} {\bibfield  {journal} {\bibinfo  {journal} {Annu. Rev. Condens. Matter Phys.}\ }\textbf {\bibinfo {volume} {10}},\ \bibinfo {pages} {451} (\bibinfo {year} {2019})}\BibitemShut {NoStop}%
\bibitem [{\citenamefont {Knolle}\ \emph {et~al.}(2014)\citenamefont {Knolle}, \citenamefont {Kovrizhin}, \citenamefont {Chalker},\ and\ \citenamefont {Moessner}}]{knolle2014PRL}%
  \BibitemOpen
  \bibfield  {author} {\bibinfo {author} {\bibfnamefont {J.}~\bibnamefont {Knolle}}, \bibinfo {author} {\bibfnamefont {D.~L.}\ \bibnamefont {Kovrizhin}}, \bibinfo {author} {\bibfnamefont {J.~T.}\ \bibnamefont {Chalker}},\ and\ \bibinfo {author} {\bibfnamefont {R.}~\bibnamefont {Moessner}},\ }\bibfield  {title} {\bibinfo {title} {{Dynamics of a Two-Dimensional Quantum Spin Liquid: Signatures of Emergent Majorana Fermions and Fluxes}},\ }\href {https://doi.org/10.1103/PhysRevLett.112.207203} {\bibfield  {journal} {\bibinfo  {journal} {Phys. Rev. Lett.}\ }\textbf {\bibinfo {volume} {112}},\ \bibinfo {pages} {207203} (\bibinfo {year} {2014})}\BibitemShut {NoStop}%
\bibitem [{\citenamefont {Hal\'asz}\ \emph {et~al.}(2014)\citenamefont {Hal\'asz}, \citenamefont {Chalker},\ and\ \citenamefont {Moessner}}]{halasz2014PRB}%
  \BibitemOpen
  \bibfield  {author} {\bibinfo {author} {\bibfnamefont {G.~B.}\ \bibnamefont {Hal\'asz}}, \bibinfo {author} {\bibfnamefont {J.~T.}\ \bibnamefont {Chalker}},\ and\ \bibinfo {author} {\bibfnamefont {R.}~\bibnamefont {Moessner}},\ }\bibfield  {title} {\bibinfo {title} {{Doping a topological quantum spin liquid: Slow holes in the Kitaev honeycomb model}},\ }\href {https://doi.org/10.1103/PhysRevB.90.035145} {\bibfield  {journal} {\bibinfo  {journal} {Phys. Rev. B}\ }\textbf {\bibinfo {volume} {90}},\ \bibinfo {pages} {035145} (\bibinfo {year} {2014})}\BibitemShut {NoStop}%
\bibitem [{\citenamefont {Knolle}\ \emph {et~al.}(2015)\citenamefont {Knolle}, \citenamefont {Kovrizhin}, \citenamefont {Chalker},\ and\ \citenamefont {Moessner}}]{knolle2015PRB}%
  \BibitemOpen
  \bibfield  {author} {\bibinfo {author} {\bibfnamefont {J.}~\bibnamefont {Knolle}}, \bibinfo {author} {\bibfnamefont {D.~L.}\ \bibnamefont {Kovrizhin}}, \bibinfo {author} {\bibfnamefont {J.~T.}\ \bibnamefont {Chalker}},\ and\ \bibinfo {author} {\bibfnamefont {R.}~\bibnamefont {Moessner}},\ }\bibfield  {title} {\bibinfo {title} {{Dynamics of fractionalization in quantum spin liquids}},\ }\href {https://doi.org/10.1103/PhysRevB.92.115127} {\bibfield  {journal} {\bibinfo  {journal} {Phys. Rev. B}\ }\textbf {\bibinfo {volume} {92}},\ \bibinfo {pages} {115127} (\bibinfo {year} {2015})}\BibitemShut {NoStop}%
\bibitem [{\citenamefont {Nasu}\ \emph {et~al.}(2015)\citenamefont {Nasu}, \citenamefont {Udagawa},\ and\ \citenamefont {Motome}}]{nasu2015PRB}%
  \BibitemOpen
  \bibfield  {author} {\bibinfo {author} {\bibfnamefont {J.}~\bibnamefont {Nasu}}, \bibinfo {author} {\bibfnamefont {M.}~\bibnamefont {Udagawa}},\ and\ \bibinfo {author} {\bibfnamefont {Y.}~\bibnamefont {Motome}},\ }\bibfield  {title} {\bibinfo {title} {{Thermal fractionalization of quantum spins in a Kitaev model: Temperature-linear specific heat and coherent transport of Majorana fermions}},\ }\href {https://doi.org/10.1103/PhysRevB.92.115122} {\bibfield  {journal} {\bibinfo  {journal} {Phys. Rev. B}\ }\textbf {\bibinfo {volume} {92}},\ \bibinfo {pages} {115122} (\bibinfo {year} {2015})}\BibitemShut {NoStop}%
\bibitem [{\citenamefont {Nasu}\ \emph {et~al.}(2016)\citenamefont {Nasu}, \citenamefont {Knolle}, \citenamefont {Kovrizhin}, \citenamefont {Motome},\ and\ \citenamefont {Moessner}}]{nasu2016NatPhys}%
  \BibitemOpen
  \bibfield  {author} {\bibinfo {author} {\bibfnamefont {J.}~\bibnamefont {Nasu}}, \bibinfo {author} {\bibfnamefont {J.}~\bibnamefont {Knolle}}, \bibinfo {author} {\bibfnamefont {D.~L.}\ \bibnamefont {Kovrizhin}}, \bibinfo {author} {\bibfnamefont {Y.}~\bibnamefont {Motome}},\ and\ \bibinfo {author} {\bibfnamefont {R.}~\bibnamefont {Moessner}},\ }\bibfield  {title} {\bibinfo {title} {{Fermionic response from fractionalization in an insulating two-dimensional magnet}},\ }\href {https://doi.org/10.1038/nphys3809} {\bibfield  {journal} {\bibinfo  {journal} {Nat. Phys.}\ }\textbf {\bibinfo {volume} {12}},\ \bibinfo {pages} {912} (\bibinfo {year} {2016})}\BibitemShut {NoStop}%
\bibitem [{\citenamefont {Nasu}\ \emph {et~al.}(2017)\citenamefont {Nasu}, \citenamefont {Yoshitake},\ and\ \citenamefont {Motome}}]{nasu2017PRL}%
  \BibitemOpen
  \bibfield  {author} {\bibinfo {author} {\bibfnamefont {J.}~\bibnamefont {Nasu}}, \bibinfo {author} {\bibfnamefont {J.}~\bibnamefont {Yoshitake}},\ and\ \bibinfo {author} {\bibfnamefont {Y.}~\bibnamefont {Motome}},\ }\bibfield  {title} {\bibinfo {title} {{Thermal Transport in the Kitaev Model}},\ }\href {https://doi.org/10.1103/PhysRevLett.119.127204} {\bibfield  {journal} {\bibinfo  {journal} {Phys. Rev. Lett.}\ }\textbf {\bibinfo {volume} {119}},\ \bibinfo {pages} {127204} (\bibinfo {year} {2017})}\BibitemShut {NoStop}%
\bibitem [{\citenamefont {Gohlke}\ \emph {et~al.}(2017)\citenamefont {Gohlke}, \citenamefont {Verresen}, \citenamefont {Moessner},\ and\ \citenamefont {Pollmann}}]{gohlke2017PRL}%
  \BibitemOpen
  \bibfield  {author} {\bibinfo {author} {\bibfnamefont {M.}~\bibnamefont {Gohlke}}, \bibinfo {author} {\bibfnamefont {R.}~\bibnamefont {Verresen}}, \bibinfo {author} {\bibfnamefont {R.}~\bibnamefont {Moessner}},\ and\ \bibinfo {author} {\bibfnamefont {F.}~\bibnamefont {Pollmann}},\ }\bibfield  {title} {\bibinfo {title} {{Dynamics of the Kitaev-Heisenberg Model}},\ }\href {https://doi.org/10.1103/PhysRevLett.119.157203} {\bibfield  {journal} {\bibinfo  {journal} {Phys. Rev. Lett.}\ }\textbf {\bibinfo {volume} {119}},\ \bibinfo {pages} {157203} (\bibinfo {year} {2017})}\BibitemShut {NoStop}%
\bibitem [{\citenamefont {Yao}\ and\ \citenamefont {Kivelson}(2007)}]{yao2007PRL}%
  \BibitemOpen
  \bibfield  {author} {\bibinfo {author} {\bibfnamefont {H.}~\bibnamefont {Yao}}\ and\ \bibinfo {author} {\bibfnamefont {S.~A.}\ \bibnamefont {Kivelson}},\ }\bibfield  {title} {\bibinfo {title} {{Exact Chiral Spin Liquid with Non-Abelian Anyons}},\ }\href {https://doi.org/10.1103/PhysRevLett.99.247203} {\bibfield  {journal} {\bibinfo  {journal} {Phys. Rev. Lett.}\ }\textbf {\bibinfo {volume} {99}},\ \bibinfo {pages} {247203} (\bibinfo {year} {2007})}\BibitemShut {NoStop}%
\bibitem [{\citenamefont {Yao}\ \emph {et~al.}(2009)\citenamefont {Yao}, \citenamefont {Zhang},\ and\ \citenamefont {Kivelson}}]{yao2009PRL}%
  \BibitemOpen
  \bibfield  {author} {\bibinfo {author} {\bibfnamefont {H.}~\bibnamefont {Yao}}, \bibinfo {author} {\bibfnamefont {S.-C.}\ \bibnamefont {Zhang}},\ and\ \bibinfo {author} {\bibfnamefont {S.~A.}\ \bibnamefont {Kivelson}},\ }\bibfield  {title} {\bibinfo {title} {{Algebraic Spin Liquid in an Exactly Solvable Spin Model}},\ }\href {https://doi.org/10.1103/PhysRevLett.102.217202} {\bibfield  {journal} {\bibinfo  {journal} {Phys. Rev. Lett.}\ }\textbf {\bibinfo {volume} {102}},\ \bibinfo {pages} {217202} (\bibinfo {year} {2009})}\BibitemShut {NoStop}%
\bibitem [{\citenamefont {Chua}\ \emph {et~al.}(2011)\citenamefont {Chua}, \citenamefont {Yao},\ and\ \citenamefont {Fiete}}]{chua2011PRB}%
  \BibitemOpen
  \bibfield  {author} {\bibinfo {author} {\bibfnamefont {V.}~\bibnamefont {Chua}}, \bibinfo {author} {\bibfnamefont {H.}~\bibnamefont {Yao}},\ and\ \bibinfo {author} {\bibfnamefont {G.~A.}\ \bibnamefont {Fiete}},\ }\bibfield  {title} {\bibinfo {title} {Exact chiral spin liquid with stable spin fermi surface on the kagome lattice},\ }\href {https://doi.org/10.1103/PhysRevB.83.180412} {\bibfield  {journal} {\bibinfo  {journal} {Phys. Rev. B}\ }\textbf {\bibinfo {volume} {83}},\ \bibinfo {pages} {180412} (\bibinfo {year} {2011})}\BibitemShut {NoStop}%
\bibitem [{\citenamefont {Wu}\ \emph {et~al.}(2009)\citenamefont {Wu}, \citenamefont {Arovas},\ and\ \citenamefont {Hung}}]{wu2009PRB}%
  \BibitemOpen
  \bibfield  {author} {\bibinfo {author} {\bibfnamefont {C.}~\bibnamefont {Wu}}, \bibinfo {author} {\bibfnamefont {D.}~\bibnamefont {Arovas}},\ and\ \bibinfo {author} {\bibfnamefont {H.-H.}\ \bibnamefont {Hung}},\ }\bibfield  {title} {\bibinfo {title} {{$\Gamma$-matrix generalization of the Kitaev model}},\ }\href {https://doi.org/10.1103/PhysRevB.79.134427} {\bibfield  {journal} {\bibinfo  {journal} {Phys. Rev. B}\ }\textbf {\bibinfo {volume} {79}},\ \bibinfo {pages} {134427} (\bibinfo {year} {2009})}\BibitemShut {NoStop}%
\bibitem [{\citenamefont {Willans}\ \emph {et~al.}(2011)\citenamefont {Willans}, \citenamefont {Chalker},\ and\ \citenamefont {Moessner}}]{willans2011PRB}%
  \BibitemOpen
  \bibfield  {author} {\bibinfo {author} {\bibfnamefont {A.~J.}\ \bibnamefont {Willans}}, \bibinfo {author} {\bibfnamefont {J.~T.}\ \bibnamefont {Chalker}},\ and\ \bibinfo {author} {\bibfnamefont {R.}~\bibnamefont {Moessner}},\ }\bibfield  {title} {\bibinfo {title} {{Site dilution in the Kitaev honeycomb model}},\ }\href {https://doi.org/10.1103/PhysRevB.84.115146} {\bibfield  {journal} {\bibinfo  {journal} {Phys. Rev. B}\ }\textbf {\bibinfo {volume} {84}},\ \bibinfo {pages} {115146} (\bibinfo {year} {2011})}\BibitemShut {NoStop}%
\bibitem [{\citenamefont {Shackleton}\ and\ \citenamefont {Scheurer}(2024)}]{shackleton2024}%
  \BibitemOpen
  \bibfield  {author} {\bibinfo {author} {\bibfnamefont {H.}~\bibnamefont {Shackleton}}\ and\ \bibinfo {author} {\bibfnamefont {M.~S.}\ \bibnamefont {Scheurer}},\ }\bibfield  {title} {\bibinfo {title} {{Exactly solvable dissipative spin liquid}},\ }\href {https://doi.org/10.1103/PhysRevB.109.085115} {\bibfield  {journal} {\bibinfo  {journal} {Phys. Rev. B}\ }\textbf {\bibinfo {volume} {109}},\ \bibinfo {pages} {085115} (\bibinfo {year} {2024})}\BibitemShut {NoStop}%
\bibitem [{\citenamefont {Gidugu}\ and\ \citenamefont {Arovas}(2024)}]{gidugu2024}%
  \BibitemOpen
  \bibfield  {author} {\bibinfo {author} {\bibfnamefont {J.}~\bibnamefont {Gidugu}}\ and\ \bibinfo {author} {\bibfnamefont {D.~P.}\ \bibnamefont {Arovas}},\ }\bibfield  {title} {\bibinfo {title} {{Dissipative Dirac matrix spin model in two dimensions}},\ }\href {https://doi.org/10.1103/PhysRevA.109.022212} {\bibfield  {journal} {\bibinfo  {journal} {Phys. Rev. A}\ }\textbf {\bibinfo {volume} {109}},\ \bibinfo {pages} {022212} (\bibinfo {year} {2024})}\BibitemShut {NoStop}%
\bibitem [{\citenamefont {Dai}\ \emph {et~al.}(2023)\citenamefont {Dai}, \citenamefont {Song},\ and\ \citenamefont {Wang}}]{dai2023PRB}%
  \BibitemOpen
  \bibfield  {author} {\bibinfo {author} {\bibfnamefont {X.-D.}\ \bibnamefont {Dai}}, \bibinfo {author} {\bibfnamefont {F.}~\bibnamefont {Song}},\ and\ \bibinfo {author} {\bibfnamefont {Z.}~\bibnamefont {Wang}},\ }\bibfield  {title} {\bibinfo {title} {{Solvable BCS-Hubbard Liouvillians in arbitrary dimensions}},\ }\href {https://doi.org/10.1103/PhysRevB.108.115127} {\bibfield  {journal} {\bibinfo  {journal} {Phys. Rev. B}\ }\textbf {\bibinfo {volume} {108}},\ \bibinfo {pages} {115127} (\bibinfo {year} {2023})}\BibitemShut {NoStop}%
\bibitem [{\citenamefont {Hwang}(2024)}]{hwang2024Quantum}%
  \BibitemOpen
  \bibfield  {author} {\bibinfo {author} {\bibfnamefont {K.}~\bibnamefont {Hwang}},\ }\bibfield  {title} {\bibinfo {title} {Mixed-{S}tate {Q}uantum {S}pin {L}iquids and {D}ynamical {A}nyon {C}ondensations in {K}itaev {L}indbladians},\ }\href {https://doi.org/10.22331/q-2024-07-17-1412} {\bibfield  {journal} {\bibinfo  {journal} {{Quantum}}\ }\textbf {\bibinfo {volume} {8}},\ \bibinfo {pages} {1412} (\bibinfo {year} {2024})}\BibitemShut {NoStop}%
\bibitem [{\citenamefont {Pocklington}\ and\ \citenamefont {Clerk}(2025)}]{pocklington2025}%
  \BibitemOpen
  \bibfield  {author} {\bibinfo {author} {\bibfnamefont {A.}~\bibnamefont {Pocklington}}\ and\ \bibinfo {author} {\bibfnamefont {A.~A.}\ \bibnamefont {Clerk}},\ }\bibfield  {title} {\bibinfo {title} {{Efficient Simulation of Nontrivial Dissipative Spin Chains via Stochastic Unraveling}},\ }\href {https://doi.org/10.1103/vptq-xy6h} {\bibfield  {journal} {\bibinfo  {journal} {PRX Quantum}\ }\textbf {\bibinfo {volume} {6}},\ \bibinfo {pages} {030349} (\bibinfo {year} {2025})}\BibitemShut {NoStop}%
\bibitem [{\citenamefont {Lee}\ \emph {et~al.}(2025)\citenamefont {Lee}, \citenamefont {You},\ and\ \citenamefont {Xu}}]{lee2025quantum}%
  \BibitemOpen
  \bibfield  {author} {\bibinfo {author} {\bibfnamefont {J.~Y.}\ \bibnamefont {Lee}}, \bibinfo {author} {\bibfnamefont {Y.-Z.}\ \bibnamefont {You}},\ and\ \bibinfo {author} {\bibfnamefont {C.}~\bibnamefont {Xu}},\ }\bibfield  {title} {\bibinfo {title} {{Symmetry protected topological phases under decoherence}},\ }\href {https://doi.org/10.22331/q-2025-01-23-1607} {\bibfield  {journal} {\bibinfo  {journal} {{Quantum}}\ }\textbf {\bibinfo {volume} {9}},\ \bibinfo {pages} {1607} (\bibinfo {year} {2025})}\BibitemShut {NoStop}%
\bibitem [{\citenamefont {Wang}\ \emph {et~al.}(2024)\citenamefont {Wang}, \citenamefont {Dai}, \citenamefont {Wang},\ and\ \citenamefont {Wang}}]{wang2024SciPost}%
  \BibitemOpen
  \bibfield  {author} {\bibinfo {author} {\bibfnamefont {Z.}~\bibnamefont {Wang}}, \bibinfo {author} {\bibfnamefont {X.-D.}\ \bibnamefont {Dai}}, \bibinfo {author} {\bibfnamefont {H.-R.}\ \bibnamefont {Wang}},\ and\ \bibinfo {author} {\bibfnamefont {Z.}~\bibnamefont {Wang}},\ }\bibfield  {title} {\bibinfo {title} {{Topologically ordered steady states in open quantum systems}},\ }\href {https://doi.org/10.21468/SciPostPhys.17.6.167} {\bibfield  {journal} {\bibinfo  {journal} {SciPost Phys.}\ }\textbf {\bibinfo {volume} {17}},\ \bibinfo {pages} {167} (\bibinfo {year} {2024})}\BibitemShut {NoStop}%
\bibitem [{\citenamefont {Wang}\ \emph {et~al.}(2025{\natexlab{a}})\citenamefont {Wang}, \citenamefont {Wu},\ and\ \citenamefont {Wang}}]{wang2025PRXQ}%
  \BibitemOpen
  \bibfield  {author} {\bibinfo {author} {\bibfnamefont {Z.}~\bibnamefont {Wang}}, \bibinfo {author} {\bibfnamefont {Z.}~\bibnamefont {Wu}},\ and\ \bibinfo {author} {\bibfnamefont {Z.}~\bibnamefont {Wang}},\ }\bibfield  {title} {\bibinfo {title} {{Intrinsic Mixed-State Topological Order}},\ }\href {https://doi.org/10.1103/PRXQuantum.6.010314} {\bibfield  {journal} {\bibinfo  {journal} {PRX Quantum}\ }\textbf {\bibinfo {volume} {6}},\ \bibinfo {pages} {010314} (\bibinfo {year} {2025}{\natexlab{a}})}\BibitemShut {NoStop}%
\bibitem [{\citenamefont {Chen}\ and\ \citenamefont {Grover}(2024)}]{chen2024PRL}%
  \BibitemOpen
  \bibfield  {author} {\bibinfo {author} {\bibfnamefont {Y.-H.}\ \bibnamefont {Chen}}\ and\ \bibinfo {author} {\bibfnamefont {T.}~\bibnamefont {Grover}},\ }\bibfield  {title} {\bibinfo {title} {{Separability Transitions in Topological States Induced by Local Decoherence}},\ }\href {https://doi.org/10.1103/PhysRevLett.132.170602} {\bibfield  {journal} {\bibinfo  {journal} {Phys. Rev. Lett.}\ }\textbf {\bibinfo {volume} {132}},\ \bibinfo {pages} {170602} (\bibinfo {year} {2024})}\BibitemShut {NoStop}%
\bibitem [{\citenamefont {Dai}\ \emph {et~al.}(2025)\citenamefont {Dai}, \citenamefont {Wang}, \citenamefont {Wang},\ and\ \citenamefont {Wang}}]{dai2025PRB}%
  \BibitemOpen
  \bibfield  {author} {\bibinfo {author} {\bibfnamefont {X.-D.}\ \bibnamefont {Dai}}, \bibinfo {author} {\bibfnamefont {Z.}~\bibnamefont {Wang}}, \bibinfo {author} {\bibfnamefont {H.-R.}\ \bibnamefont {Wang}},\ and\ \bibinfo {author} {\bibfnamefont {Z.}~\bibnamefont {Wang}},\ }\bibfield  {title} {\bibinfo {title} {{Steady-state topological order}},\ }\href {https://doi.org/10.1103/PhysRevB.111.115142} {\bibfield  {journal} {\bibinfo  {journal} {Phys. Rev. B}\ }\textbf {\bibinfo {volume} {111}},\ \bibinfo {pages} {115142} (\bibinfo {year} {2025})}\BibitemShut {NoStop}%
\bibitem [{\citenamefont {Sohal}\ and\ \citenamefont {Prem}(2025)}]{sohal2025PRXQ}%
  \BibitemOpen
  \bibfield  {author} {\bibinfo {author} {\bibfnamefont {R.}~\bibnamefont {Sohal}}\ and\ \bibinfo {author} {\bibfnamefont {A.}~\bibnamefont {Prem}},\ }\bibfield  {title} {\bibinfo {title} {{Noisy Approach to Intrinsically Mixed-State Topological Order}},\ }\href {https://doi.org/10.1103/PRXQuantum.6.010313} {\bibfield  {journal} {\bibinfo  {journal} {PRX Quantum}\ }\textbf {\bibinfo {volume} {6}},\ \bibinfo {pages} {010313} (\bibinfo {year} {2025})}\BibitemShut {NoStop}%
\bibitem [{\citenamefont {Sang}\ and\ \citenamefont {Hsieh}(2025)}]{sang2024PRL}%
  \BibitemOpen
  \bibfield  {author} {\bibinfo {author} {\bibfnamefont {S.}~\bibnamefont {Sang}}\ and\ \bibinfo {author} {\bibfnamefont {T.~H.}\ \bibnamefont {Hsieh}},\ }\bibfield  {title} {\bibinfo {title} {{Stability of Mixed-State Quantum Phases via Finite Markov Length}},\ }\href {https://doi.org/10.1103/PhysRevLett.134.070403} {\bibfield  {journal} {\bibinfo  {journal} {Phys. Rev. Lett.}\ }\textbf {\bibinfo {volume} {134}},\ \bibinfo {pages} {070403} (\bibinfo {year} {2025})}\BibitemShut {NoStop}%
\bibitem [{\citenamefont {Li}\ \emph {et~al.}(2025)\citenamefont {Li}, \citenamefont {Lee},\ and\ \citenamefont {Yoshida}}]{li2024arXiv}%
  \BibitemOpen
  \bibfield  {author} {\bibinfo {author} {\bibfnamefont {Z.}~\bibnamefont {Li}}, \bibinfo {author} {\bibfnamefont {D.}~\bibnamefont {Lee}},\ and\ \bibinfo {author} {\bibfnamefont {B.}~\bibnamefont {Yoshida}},\ }\bibfield  {title} {\bibinfo {title} {{How Much Entanglement Is Needed for Topological Codes and Mixed States with Anomalous Symmetry?}},\ }\href {https://doi.org/10.1103/pw12-kdjx} {\bibfield  {journal} {\bibinfo  {journal} {Phys. Rev. X}\ }\textbf {\bibinfo {volume} {15}},\ \bibinfo {pages} {021090} (\bibinfo {year} {2025})}\BibitemShut {NoStop}%
\bibitem [{\citenamefont {Wang}\ \emph {et~al.}(2025{\natexlab{b}})\citenamefont {Wang}, \citenamefont {Song}, \citenamefont {Meng},\ and\ \citenamefont {Grover}}]{wang2025PRXQanalogEE}%
  \BibitemOpen
  \bibfield  {author} {\bibinfo {author} {\bibfnamefont {T.-T.}\ \bibnamefont {Wang}}, \bibinfo {author} {\bibfnamefont {M.}~\bibnamefont {Song}}, \bibinfo {author} {\bibfnamefont {Z.~Y.}\ \bibnamefont {Meng}},\ and\ \bibinfo {author} {\bibfnamefont {T.}~\bibnamefont {Grover}},\ }\bibfield  {title} {\bibinfo {title} {{Analog of Topological Entanglement Entropy for Mixed States}},\ }\href {https://doi.org/10.1103/PRXQuantum.6.010358} {\bibfield  {journal} {\bibinfo  {journal} {PRX Quantum}\ }\textbf {\bibinfo {volume} {6}},\ \bibinfo {pages} {010358} (\bibinfo {year} {2025}{\natexlab{b}})}\BibitemShut {NoStop}%
\bibitem [{\citenamefont {Lessa}\ \emph {et~al.}(2025{\natexlab{a}})\citenamefont {Lessa}, \citenamefont {Sang}, \citenamefont {Lu}, \citenamefont {Hsieh},\ and\ \citenamefont {Wang}}]{lessa2025arXiv}%
  \BibitemOpen
  \bibfield  {author} {\bibinfo {author} {\bibfnamefont {L.~A.}\ \bibnamefont {Lessa}}, \bibinfo {author} {\bibfnamefont {S.}~\bibnamefont {Sang}}, \bibinfo {author} {\bibfnamefont {T.-C.}\ \bibnamefont {Lu}}, \bibinfo {author} {\bibfnamefont {T.~H.}\ \bibnamefont {Hsieh}},\ and\ \bibinfo {author} {\bibfnamefont {C.}~\bibnamefont {Wang}},\ }\bibfield  {title} {\bibinfo {title} {{Higher-form anomaly and long-range entanglement of mixed states}},\ }\href {https://arxiv.org/abs/2503.12792} {\bibfield  {journal} {\bibinfo  {journal} {arXiv:2503.12792}\ } (\bibinfo {year} {2025}{\natexlab{a}})}\BibitemShut {NoStop}%
\bibitem [{\citenamefont {Sang}\ \emph {et~al.}(2024)\citenamefont {Sang}, \citenamefont {Zou},\ and\ \citenamefont {Hsieh}}]{sang2024PRX}%
  \BibitemOpen
  \bibfield  {author} {\bibinfo {author} {\bibfnamefont {S.}~\bibnamefont {Sang}}, \bibinfo {author} {\bibfnamefont {Y.}~\bibnamefont {Zou}},\ and\ \bibinfo {author} {\bibfnamefont {T.~H.}\ \bibnamefont {Hsieh}},\ }\bibfield  {title} {\bibinfo {title} {{Mixed-State Quantum Phases: Renormalization and Quantum Error Correction}},\ }\href {https://doi.org/10.1103/PhysRevX.14.031044} {\bibfield  {journal} {\bibinfo  {journal} {Phys. Rev. X}\ }\textbf {\bibinfo {volume} {14}},\ \bibinfo {pages} {031044} (\bibinfo {year} {2024})}\BibitemShut {NoStop}%
\bibitem [{\citenamefont {Li}\ and\ \citenamefont {Mong}(2025)}]{li2025PRB}%
  \BibitemOpen
  \bibfield  {author} {\bibinfo {author} {\bibfnamefont {Z.}~\bibnamefont {Li}}\ and\ \bibinfo {author} {\bibfnamefont {R.~S.~K.}\ \bibnamefont {Mong}},\ }\bibfield  {title} {\bibinfo {title} {{Replica topological order in quantum mixed states and quantum error correction}},\ }\href {https://doi.org/10.1103/PhysRevB.111.125106} {\bibfield  {journal} {\bibinfo  {journal} {Phys. Rev. B}\ }\textbf {\bibinfo {volume} {111}},\ \bibinfo {pages} {125106} (\bibinfo {year} {2025})}\BibitemShut {NoStop}%
\bibitem [{\citenamefont {Ellison}\ and\ \citenamefont {Cheng}(2025)}]{ellison2025PRXQ}%
  \BibitemOpen
  \bibfield  {author} {\bibinfo {author} {\bibfnamefont {T.~D.}\ \bibnamefont {Ellison}}\ and\ \bibinfo {author} {\bibfnamefont {M.}~\bibnamefont {Cheng}},\ }\bibfield  {title} {\bibinfo {title} {{Toward a Classification of Mixed-State Topological Orders in Two Dimensions}},\ }\href {https://doi.org/10.1103/PRXQuantum.6.010315} {\bibfield  {journal} {\bibinfo  {journal} {PRX Quantum}\ }\textbf {\bibinfo {volume} {6}},\ \bibinfo {pages} {010315} (\bibinfo {year} {2025})}\BibitemShut {NoStop}%
\bibitem [{\citenamefont {Bao}\ \emph {et~al.}(2023)\citenamefont {Bao}, \citenamefont {Fan}, \citenamefont {Vishwanath},\ and\ \citenamefont {Altman}}]{bao2023arXiv}%
  \BibitemOpen
  \bibfield  {author} {\bibinfo {author} {\bibfnamefont {Y.}~\bibnamefont {Bao}}, \bibinfo {author} {\bibfnamefont {R.}~\bibnamefont {Fan}}, \bibinfo {author} {\bibfnamefont {A.}~\bibnamefont {Vishwanath}},\ and\ \bibinfo {author} {\bibfnamefont {E.}~\bibnamefont {Altman}},\ }\bibfield  {title} {\bibinfo {title} {{Mixed-state topological order and the errorfield double formulation of decoherence-induced transitions}},\ }\href {https://arxiv.org/abs/2301.05687} {\bibfield  {journal} {\bibinfo  {journal} {arXiv:2301.05687}\ } (\bibinfo {year} {2023})}\BibitemShut {NoStop}%
\bibitem [{\citenamefont {Fan}\ \emph {et~al.}(2024)\citenamefont {Fan}, \citenamefont {Bao}, \citenamefont {Altman},\ and\ \citenamefont {Vishwanath}}]{fan2024PRXQ}%
  \BibitemOpen
  \bibfield  {author} {\bibinfo {author} {\bibfnamefont {R.}~\bibnamefont {Fan}}, \bibinfo {author} {\bibfnamefont {Y.}~\bibnamefont {Bao}}, \bibinfo {author} {\bibfnamefont {E.}~\bibnamefont {Altman}},\ and\ \bibinfo {author} {\bibfnamefont {A.}~\bibnamefont {Vishwanath}},\ }\bibfield  {title} {\bibinfo {title} {{Diagnostics of Mixed-State Topological Order and Breakdown of Quantum Memory}},\ }\href {https://doi.org/10.1103/PRXQuantum.5.020343} {\bibfield  {journal} {\bibinfo  {journal} {PRX Quantum}\ }\textbf {\bibinfo {volume} {5}},\ \bibinfo {pages} {020343} (\bibinfo {year} {2024})}\BibitemShut {NoStop}%
\bibitem [{\citenamefont {Kim}\ \emph {et~al.}(2024)\citenamefont {Kim}, \citenamefont {Altman},\ and\ \citenamefont {Lee}}]{kim2024arXiv}%
  \BibitemOpen
  \bibfield  {author} {\bibinfo {author} {\bibfnamefont {J.}~\bibnamefont {Kim}}, \bibinfo {author} {\bibfnamefont {E.}~\bibnamefont {Altman}},\ and\ \bibinfo {author} {\bibfnamefont {J.~Y.}\ \bibnamefont {Lee}},\ }\bibfield  {title} {\bibinfo {title} {{Error Threshold of SYK Codes from Strong-to-Weak Parity Symmetry Breaking}},\ }\href {https://arxiv.org/abs/2410.24225} {\bibfield  {journal} {\bibinfo  {journal} {arXiv:2410.24225}\ } (\bibinfo {year} {2024})}\BibitemShut {NoStop}%
\bibitem [{\citenamefont {Lee}\ \emph {et~al.}(2023)\citenamefont {Lee}, \citenamefont {Jian},\ and\ \citenamefont {Xu}}]{lee2023PRXQ}%
  \BibitemOpen
  \bibfield  {author} {\bibinfo {author} {\bibfnamefont {J.~Y.}\ \bibnamefont {Lee}}, \bibinfo {author} {\bibfnamefont {C.-M.}\ \bibnamefont {Jian}},\ and\ \bibinfo {author} {\bibfnamefont {C.}~\bibnamefont {Xu}},\ }\bibfield  {title} {\bibinfo {title} {{Quantum Criticality Under Decoherence or Weak Measurement}},\ }\href {https://doi.org/10.1103/PRXQuantum.4.030317} {\bibfield  {journal} {\bibinfo  {journal} {PRX Quantum}\ }\textbf {\bibinfo {volume} {4}},\ \bibinfo {pages} {030317} (\bibinfo {year} {2023})}\BibitemShut {NoStop}%
\bibitem [{\citenamefont {Ma}\ \emph {et~al.}(2025)\citenamefont {Ma}, \citenamefont {Zhang}, \citenamefont {Bi}, \citenamefont {Cheng},\ and\ \citenamefont {Wang}}]{ma2023arXiv}%
  \BibitemOpen
  \bibfield  {author} {\bibinfo {author} {\bibfnamefont {R.}~\bibnamefont {Ma}}, \bibinfo {author} {\bibfnamefont {J.-H.}\ \bibnamefont {Zhang}}, \bibinfo {author} {\bibfnamefont {Z.}~\bibnamefont {Bi}}, \bibinfo {author} {\bibfnamefont {M.}~\bibnamefont {Cheng}},\ and\ \bibinfo {author} {\bibfnamefont {C.}~\bibnamefont {Wang}},\ }\bibfield  {title} {\bibinfo {title} {{Topological Phases with Average Symmetries: The Decohered, the Disordered, and the Intrinsic}},\ }\href {https://doi.org/10.1103/PhysRevX.15.021062} {\bibfield  {journal} {\bibinfo  {journal} {Phys. Rev. X}\ }\textbf {\bibinfo {volume} {15}},\ \bibinfo {pages} {021062} (\bibinfo {year} {2025})}\BibitemShut {NoStop}%
\bibitem [{\citenamefont {Sala}\ \emph {et~al.}(2024)\citenamefont {Sala}, \citenamefont {Gopalakrishnan}, \citenamefont {Oshikawa},\ and\ \citenamefont {You}}]{sala2024PRB}%
  \BibitemOpen
  \bibfield  {author} {\bibinfo {author} {\bibfnamefont {P.}~\bibnamefont {Sala}}, \bibinfo {author} {\bibfnamefont {S.}~\bibnamefont {Gopalakrishnan}}, \bibinfo {author} {\bibfnamefont {M.}~\bibnamefont {Oshikawa}},\ and\ \bibinfo {author} {\bibfnamefont {Y.}~\bibnamefont {You}},\ }\bibfield  {title} {\bibinfo {title} {{Spontaneous strong symmetry breaking in open systems: Purification perspective}},\ }\href {https://doi.org/10.1103/PhysRevB.110.155150} {\bibfield  {journal} {\bibinfo  {journal} {Phys. Rev. B}\ }\textbf {\bibinfo {volume} {110}},\ \bibinfo {pages} {155150} (\bibinfo {year} {2024})}\BibitemShut {NoStop}%
\bibitem [{\citenamefont {Lessa}\ \emph {et~al.}(2025{\natexlab{b}})\citenamefont {Lessa}, \citenamefont {Ma}, \citenamefont {Zhang}, \citenamefont {Bi}, \citenamefont {Cheng},\ and\ \citenamefont {Wang}}]{lessa2025PRXQ}%
  \BibitemOpen
  \bibfield  {author} {\bibinfo {author} {\bibfnamefont {L.~A.}\ \bibnamefont {Lessa}}, \bibinfo {author} {\bibfnamefont {R.}~\bibnamefont {Ma}}, \bibinfo {author} {\bibfnamefont {J.-H.}\ \bibnamefont {Zhang}}, \bibinfo {author} {\bibfnamefont {Z.}~\bibnamefont {Bi}}, \bibinfo {author} {\bibfnamefont {M.}~\bibnamefont {Cheng}},\ and\ \bibinfo {author} {\bibfnamefont {C.}~\bibnamefont {Wang}},\ }\bibfield  {title} {\bibinfo {title} {{Strong-to-Weak Spontaneous Symmetry Breaking in Mixed Quantum States}},\ }\href {https://doi.org/10.1103/PRXQuantum.6.010344} {\bibfield  {journal} {\bibinfo  {journal} {PRX Quantum}\ }\textbf {\bibinfo {volume} {6}},\ \bibinfo {pages} {010344} (\bibinfo {year} {2025}{\natexlab{b}})}\BibitemShut {NoStop}%
\bibitem [{\citenamefont {Zhang}\ \emph {et~al.}(2025)\citenamefont {Zhang}, \citenamefont {Xu}, \citenamefont {Zhang}, \citenamefont {Xu}, \citenamefont {Bi},\ and\ \citenamefont {Luo}}]{zhang2025PRB}%
  \BibitemOpen
  \bibfield  {author} {\bibinfo {author} {\bibfnamefont {C.}~\bibnamefont {Zhang}}, \bibinfo {author} {\bibfnamefont {Y.}~\bibnamefont {Xu}}, \bibinfo {author} {\bibfnamefont {J.-H.}\ \bibnamefont {Zhang}}, \bibinfo {author} {\bibfnamefont {C.}~\bibnamefont {Xu}}, \bibinfo {author} {\bibfnamefont {Z.}~\bibnamefont {Bi}},\ and\ \bibinfo {author} {\bibfnamefont {Z.-X.}\ \bibnamefont {Luo}},\ }\bibfield  {title} {\bibinfo {title} {{Strong-to-weak spontaneous breaking of 1-form symmetry and intrinsically mixed topological order}},\ }\href {https://doi.org/10.1103/PhysRevB.111.115137} {\bibfield  {journal} {\bibinfo  {journal} {Phys. Rev. B}\ }\textbf {\bibinfo {volume} {111}},\ \bibinfo {pages} {115137} (\bibinfo {year} {2025})}\BibitemShut {NoStop}%
\bibitem [{\citenamefont {Ma}\ and\ \citenamefont {Turzillo}(2025)}]{ma2025PRXQ}%
  \BibitemOpen
  \bibfield  {author} {\bibinfo {author} {\bibfnamefont {R.}~\bibnamefont {Ma}}\ and\ \bibinfo {author} {\bibfnamefont {A.}~\bibnamefont {Turzillo}},\ }\bibfield  {title} {\bibinfo {title} {{Symmetry-Protected Topological Phases of Mixed States in the Doubled Space}},\ }\href {https://doi.org/10.1103/PRXQuantum.6.010348} {\bibfield  {journal} {\bibinfo  {journal} {PRX Quantum}\ }\textbf {\bibinfo {volume} {6}},\ \bibinfo {pages} {010348} (\bibinfo {year} {2025})}\BibitemShut {NoStop}%
\bibitem [{\citenamefont {Moharramipour}\ \emph {et~al.}(2024)\citenamefont {Moharramipour}, \citenamefont {Lessa}, \citenamefont {Wang}, \citenamefont {Hsieh},\ and\ \citenamefont {Sahu}}]{moharramipour2024PRXQ}%
  \BibitemOpen
  \bibfield  {author} {\bibinfo {author} {\bibfnamefont {A.}~\bibnamefont {Moharramipour}}, \bibinfo {author} {\bibfnamefont {L.~A.}\ \bibnamefont {Lessa}}, \bibinfo {author} {\bibfnamefont {C.}~\bibnamefont {Wang}}, \bibinfo {author} {\bibfnamefont {T.~H.}\ \bibnamefont {Hsieh}},\ and\ \bibinfo {author} {\bibfnamefont {S.}~\bibnamefont {Sahu}},\ }\bibfield  {title} {\bibinfo {title} {{Symmetry-Enforced Entanglement in Maximally Mixed States}},\ }\href {https://doi.org/10.1103/PRXQuantum.5.040336} {\bibfield  {journal} {\bibinfo  {journal} {PRX Quantum}\ }\textbf {\bibinfo {volume} {5}},\ \bibinfo {pages} {040336} (\bibinfo {year} {2024})}\BibitemShut {NoStop}%
\bibitem [{\citenamefont {Gu}\ \emph {et~al.}(2024)\citenamefont {Gu}, \citenamefont {Wang},\ and\ \citenamefont {Wang}}]{gu2024arXiv}%
  \BibitemOpen
  \bibfield  {author} {\bibinfo {author} {\bibfnamefont {D.}~\bibnamefont {Gu}}, \bibinfo {author} {\bibfnamefont {Z.}~\bibnamefont {Wang}},\ and\ \bibinfo {author} {\bibfnamefont {Z.}~\bibnamefont {Wang}},\ }\bibfield  {title} {\bibinfo {title} {{Spontaneous symmetry breaking in open quantum systems: strong, weak, and strong-to-weak}},\ }\href {https://arxiv.org/abs/2406.19381} {\bibfield  {journal} {\bibinfo  {journal} {arXiv:2406.19381}\ } (\bibinfo {year} {2024})}\BibitemShut {NoStop}%
\bibitem [{\citenamefont {Liu}\ \emph {et~al.}(2025)\citenamefont {Liu}, \citenamefont {Chen}, \citenamefont {Zhang}, \citenamefont {Zhou},\ and\ \citenamefont {Zhang}}]{liu2024arXiv}%
  \BibitemOpen
  \bibfield  {author} {\bibinfo {author} {\bibfnamefont {Z.}~\bibnamefont {Liu}}, \bibinfo {author} {\bibfnamefont {L.}~\bibnamefont {Chen}}, \bibinfo {author} {\bibfnamefont {Y.}~\bibnamefont {Zhang}}, \bibinfo {author} {\bibfnamefont {S.}~\bibnamefont {Zhou}},\ and\ \bibinfo {author} {\bibfnamefont {P.}~\bibnamefont {Zhang}},\ }\bibfield  {title} {\bibinfo {title} {{Diagnosing strong-to-weak symmetry breaking via Wightman correlators}},\ }\href {https://doi.org/https://doi.org/10.1038/s42005-025-02199-7} {\bibfield  {journal} {\bibinfo  {journal} {Commun. Phys.}\ }\textbf {\bibinfo {volume} {8}},\ \bibinfo {pages} {274} (\bibinfo {year} {2025})}\BibitemShut {NoStop}%
\bibitem [{\citenamefont {Weinstein}(2025)}]{weinstein2025PRL}%
  \BibitemOpen
  \bibfield  {author} {\bibinfo {author} {\bibfnamefont {Z.}~\bibnamefont {Weinstein}},\ }\bibfield  {title} {\bibinfo {title} {{Efficient Detection of Strong-to-Weak Spontaneous Symmetry Breaking via the R\'enyi-1 Correlator}},\ }\href {https://doi.org/10.1103/PhysRevLett.134.150405} {\bibfield  {journal} {\bibinfo  {journal} {Phys. Rev. Lett.}\ }\textbf {\bibinfo {volume} {134}},\ \bibinfo {pages} {150405} (\bibinfo {year} {2025})}\BibitemShut {NoStop}%
\bibitem [{\citenamefont {Ogunnaike}\ \emph {et~al.}(2023)\citenamefont {Ogunnaike}, \citenamefont {Feldmeier},\ and\ \citenamefont {Lee}}]{ogunnaike2023PRL}%
  \BibitemOpen
  \bibfield  {author} {\bibinfo {author} {\bibfnamefont {O.}~\bibnamefont {Ogunnaike}}, \bibinfo {author} {\bibfnamefont {J.}~\bibnamefont {Feldmeier}},\ and\ \bibinfo {author} {\bibfnamefont {J.~Y.}\ \bibnamefont {Lee}},\ }\bibfield  {title} {\bibinfo {title} {{Unifying Emergent Hydrodynamics and Lindbladian Low-Energy Spectra across Symmetries, Constraints, and Long-Range Interactions}},\ }\href {https://doi.org/10.1103/PhysRevLett.131.220403} {\bibfield  {journal} {\bibinfo  {journal} {Phys. Rev. Lett.}\ }\textbf {\bibinfo {volume} {131}},\ \bibinfo {pages} {220403} (\bibinfo {year} {2023})}\BibitemShut {NoStop}%
\bibitem [{\citenamefont {Xu}\ and\ \citenamefont {Jian}(2025)}]{xu2025PRB}%
  \BibitemOpen
  \bibfield  {author} {\bibinfo {author} {\bibfnamefont {Y.}~\bibnamefont {Xu}}\ and\ \bibinfo {author} {\bibfnamefont {C.-M.}\ \bibnamefont {Jian}},\ }\bibfield  {title} {\bibinfo {title} {{Average-exact mixed anomalies and compatible phases}},\ }\href {https://doi.org/10.1103/PhysRevB.111.125128} {\bibfield  {journal} {\bibinfo  {journal} {Phys. Rev. B}\ }\textbf {\bibinfo {volume} {111}},\ \bibinfo {pages} {125128} (\bibinfo {year} {2025})}\BibitemShut {NoStop}%
\bibitem [{\citenamefont {Huang}\ \emph {et~al.}(2025)\citenamefont {Huang}, \citenamefont {Qi}, \citenamefont {Zhang},\ and\ \citenamefont {Lucas}}]{huang2025PRB}%
  \BibitemOpen
  \bibfield  {author} {\bibinfo {author} {\bibfnamefont {X.}~\bibnamefont {Huang}}, \bibinfo {author} {\bibfnamefont {M.}~\bibnamefont {Qi}}, \bibinfo {author} {\bibfnamefont {J.-H.}\ \bibnamefont {Zhang}},\ and\ \bibinfo {author} {\bibfnamefont {A.}~\bibnamefont {Lucas}},\ }\bibfield  {title} {\bibinfo {title} {{Hydrodynamics as the effective field theory of strong-to-weak spontaneous symmetry breaking}},\ }\href {https://doi.org/10.1103/PhysRevB.111.125147} {\bibfield  {journal} {\bibinfo  {journal} {Phys. Rev. B}\ }\textbf {\bibinfo {volume} {111}},\ \bibinfo {pages} {125147} (\bibinfo {year} {2025})}\BibitemShut {NoStop}%
\bibitem [{\citenamefont {Zhang}\ \emph {et~al.}(2024{\natexlab{a}})\citenamefont {Zhang}, \citenamefont {Xu},\ and\ \citenamefont {Xu}}]{zhang2024arXiv}%
  \BibitemOpen
  \bibfield  {author} {\bibinfo {author} {\bibfnamefont {J.-H.}\ \bibnamefont {Zhang}}, \bibinfo {author} {\bibfnamefont {C.}~\bibnamefont {Xu}},\ and\ \bibinfo {author} {\bibfnamefont {Y.}~\bibnamefont {Xu}},\ }\bibfield  {title} {\bibinfo {title} {{Fluctuation-dissipation theorem and information geometry in open quantum systems}},\ }\href {https://arxiv.org/abs/2409.18944} {\bibfield  {journal} {\bibinfo  {journal} {arXiv:2409.18944}\ } (\bibinfo {year} {2024}{\natexlab{a}})}\BibitemShut {NoStop}%
\bibitem [{\citenamefont {Guo}\ and\ \citenamefont {Yang}(2025)}]{guo2024arXiv}%
  \BibitemOpen
  \bibfield  {author} {\bibinfo {author} {\bibfnamefont {Y.}~\bibnamefont {Guo}}\ and\ \bibinfo {author} {\bibfnamefont {S.}~\bibnamefont {Yang}},\ }\bibfield  {title} {\bibinfo {title} {{Strong-to-weak spontaneous symmetry breaking meets average symmetry-protected topological order}},\ }\href {https://doi.org/10.1103/PhysRevB.111.L201108} {\bibfield  {journal} {\bibinfo  {journal} {Phys. Rev. B}\ }\textbf {\bibinfo {volume} {111}},\ \bibinfo {pages} {L201108} (\bibinfo {year} {2025})}\BibitemShut {NoStop}%
\bibitem [{\citenamefont {Ando}\ \emph {et~al.}(2024)\citenamefont {Ando}, \citenamefont {Ryu},\ and\ \citenamefont {Watanabe}}]{ando2024arXiv}%
  \BibitemOpen
  \bibfield  {author} {\bibinfo {author} {\bibfnamefont {T.}~\bibnamefont {Ando}}, \bibinfo {author} {\bibfnamefont {S.}~\bibnamefont {Ryu}},\ and\ \bibinfo {author} {\bibfnamefont {M.}~\bibnamefont {Watanabe}},\ }\bibfield  {title} {\bibinfo {title} {{Gauge theory and mixed state criticality}},\ }\href {https://arxiv.org/abs/2411.04360} {\bibfield  {journal} {\bibinfo  {journal} {arXiv:2411.04360}\ } (\bibinfo {year} {2024})}\BibitemShut {NoStop}%
\bibitem [{\citenamefont {Chen}\ \emph {et~al.}(2025)\citenamefont {Chen}, \citenamefont {Sun},\ and\ \citenamefont {Zhang}}]{chen2025PRB}%
  \BibitemOpen
  \bibfield  {author} {\bibinfo {author} {\bibfnamefont {L.}~\bibnamefont {Chen}}, \bibinfo {author} {\bibfnamefont {N.}~\bibnamefont {Sun}},\ and\ \bibinfo {author} {\bibfnamefont {P.}~\bibnamefont {Zhang}},\ }\bibfield  {title} {\bibinfo {title} {{Strong-to-weak symmetry breaking and entanglement transitions}},\ }\href {https://doi.org/10.1103/PhysRevB.111.L060304} {\bibfield  {journal} {\bibinfo  {journal} {Phys. Rev. B}\ }\textbf {\bibinfo {volume} {111}},\ \bibinfo {pages} {L060304} (\bibinfo {year} {2025})}\BibitemShut {NoStop}%
\bibitem [{\citenamefont {Sun}\ \emph {et~al.}(2025)\citenamefont {Sun}, \citenamefont {Zhang},\ and\ \citenamefont {Feng}}]{sun2024arXiv}%
  \BibitemOpen
  \bibfield  {author} {\bibinfo {author} {\bibfnamefont {N.}~\bibnamefont {Sun}}, \bibinfo {author} {\bibfnamefont {P.}~\bibnamefont {Zhang}},\ and\ \bibinfo {author} {\bibfnamefont {L.}~\bibnamefont {Feng}},\ }\bibfield  {title} {\bibinfo {title} {{Scheme to Detect the Strong-to-Weak Symmetry Breaking via Randomized Measurements}},\ }\href {https://doi.org/10.1103/7p5x-7yqb} {\bibfield  {journal} {\bibinfo  {journal} {Phys. Rev. Lett.}\ }\textbf {\bibinfo {volume} {135}},\ \bibinfo {pages} {090403} (\bibinfo {year} {2025})}\BibitemShut {NoStop}%
\bibitem [{\citenamefont {Orito}\ \emph {et~al.}(2025)\citenamefont {Orito}, \citenamefont {Kuno},\ and\ \citenamefont {Ichinose}}]{orito2025PRB}%
  \BibitemOpen
  \bibfield  {author} {\bibinfo {author} {\bibfnamefont {T.}~\bibnamefont {Orito}}, \bibinfo {author} {\bibfnamefont {Y.}~\bibnamefont {Kuno}},\ and\ \bibinfo {author} {\bibfnamefont {I.}~\bibnamefont {Ichinose}},\ }\bibfield  {title} {\bibinfo {title} {{Strong and weak symmetries and their spontaneous symmetry breaking in mixed states emerging from the quantum Ising model under multiple decoherence}},\ }\href {https://doi.org/10.1103/PhysRevB.111.054106} {\bibfield  {journal} {\bibinfo  {journal} {Phys. Rev. B}\ }\textbf {\bibinfo {volume} {111}},\ \bibinfo {pages} {054106} (\bibinfo {year} {2025})}\BibitemShut {NoStop}%
\bibitem [{\citenamefont {Lu}\ \emph {et~al.}(2024)\citenamefont {Lu}, \citenamefont {Zhu},\ and\ \citenamefont {Lu}}]{lu2024arXiv}%
  \BibitemOpen
  \bibfield  {author} {\bibinfo {author} {\bibfnamefont {S.}~\bibnamefont {Lu}}, \bibinfo {author} {\bibfnamefont {P.}~\bibnamefont {Zhu}},\ and\ \bibinfo {author} {\bibfnamefont {Y.-M.}\ \bibnamefont {Lu}},\ }\bibfield  {title} {\bibinfo {title} {{Bilayer construction for mixed state phenomena with strong, weak symmetries and symmetry breakings}},\ }\href {https://arxiv.org/abs/2411.07174} {\bibfield  {journal} {\bibinfo  {journal} {arXiv:2411.07174}\ } (\bibinfo {year} {2024})}\BibitemShut {NoStop}%
\bibitem [{\citenamefont {Feng}\ \emph {et~al.}(2025)\citenamefont {Feng}, \citenamefont {Cheng},\ and\ \citenamefont {Ippoliti}}]{feng2025arXiv}%
  \BibitemOpen
  \bibfield  {author} {\bibinfo {author} {\bibfnamefont {X.}~\bibnamefont {Feng}}, \bibinfo {author} {\bibfnamefont {Z.}~\bibnamefont {Cheng}},\ and\ \bibinfo {author} {\bibfnamefont {M.}~\bibnamefont {Ippoliti}},\ }\bibfield  {title} {\bibinfo {title} {{Hardness of observing strong-to-weak symmetry breaking}},\ }\href {https://arxiv.org/abs/2504.12233} {\bibfield  {journal} {\bibinfo  {journal} {arXiv:2504.12233}\ } (\bibinfo {year} {2025})}\BibitemShut {NoStop}%
\bibitem [{\citenamefont {Prosen}(2008)}]{prosen2008}%
  \BibitemOpen
  \bibfield  {author} {\bibinfo {author} {\bibfnamefont {T.}~\bibnamefont {Prosen}},\ }\bibfield  {title} {\bibinfo {title} {{Third quantization: a general method to solve master equations for quadratic open Fermi systems}},\ }\href {https://doi.org/10.1088/1367-2630/10/4/043026} {\bibfield  {journal} {\bibinfo  {journal} {New J. Phys.}\ }\textbf {\bibinfo {volume} {10}},\ \bibinfo {pages} {043026} (\bibinfo {year} {2008})}\BibitemShut {NoStop}%
\bibitem [{\citenamefont {Prosen}\ and\ \citenamefont {Pi{\v{z}}orn}(2008)}]{prosen2008prl}%
  \BibitemOpen
  \bibfield  {author} {\bibinfo {author} {\bibfnamefont {T.}~\bibnamefont {Prosen}}\ and\ \bibinfo {author} {\bibfnamefont {I.}~\bibnamefont {Pi{\v{z}}orn}},\ }\bibfield  {title} {\bibinfo {title} {{Quantum Phase Transition in a Far-from-Equilibrium Steady State of an $XY$ Spin Chain}},\ }\href {https://doi.org/10.1103/PhysRevLett.101.105701} {\bibfield  {journal} {\bibinfo  {journal} {Phys. Rev. Lett.}\ }\textbf {\bibinfo {volume} {101}},\ \bibinfo {pages} {105701} (\bibinfo {year} {2008})}\BibitemShut {NoStop}%
\bibitem [{\citenamefont {Prosen}(2010)}]{prosen2010jstat}%
  \BibitemOpen
  \bibfield  {author} {\bibinfo {author} {\bibfnamefont {T.}~\bibnamefont {Prosen}},\ }\bibfield  {title} {\bibinfo {title} {{Spectral theorem for the Lindblad equation for quadratic open fermionic systems}},\ }\href {https://doi.org/10.1088/1742-5468/2010/07/P07020} {\bibfield  {journal} {\bibinfo  {journal} {J. Stat. Mech.}\ }\textbf {\bibinfo {volume} {2010}},\ \bibinfo {pages} {P07020} (\bibinfo {year} {2010})}\BibitemShut {NoStop}%
\bibitem [{\citenamefont {Prosen}\ and\ \citenamefont {{\v{Z}}unkovi{\v{c}}}(2010)}]{prosen2010njp}%
  \BibitemOpen
  \bibfield  {author} {\bibinfo {author} {\bibfnamefont {T.}~\bibnamefont {Prosen}}\ and\ \bibinfo {author} {\bibfnamefont {B.}~\bibnamefont {{\v{Z}}unkovi{\v{c}}}},\ }\bibfield  {title} {\bibinfo {title} {{Exact solution of Markovian master equations for quadratic Fermi systems: thermal baths, open {XY} spin chains and non-equilibrium phase transition}},\ }\href {https://doi.org/10.1088/1367-2630/12/2/025016} {\bibfield  {journal} {\bibinfo  {journal} {New J. Phys.}\ }\textbf {\bibinfo {volume} {12}},\ \bibinfo {pages} {025016} (\bibinfo {year} {2010})}\BibitemShut {NoStop}%
\bibitem [{\citenamefont {{\v{Z}}unkovi{\v{c}}}\ and\ \citenamefont {Prosen}(2010)}]{zunkovivc2010}%
  \BibitemOpen
  \bibfield  {author} {\bibinfo {author} {\bibfnamefont {B.}~\bibnamefont {{\v{Z}}unkovi{\v{c}}}}\ and\ \bibinfo {author} {\bibfnamefont {T.}~\bibnamefont {Prosen}},\ }\bibfield  {title} {\bibinfo {title} {{Explicit solution of the Lindblad equation for nearly isotropic boundary driven XY spin 1/2 chain}},\ }\href {https://doi.org/10.1088/1742-5468/2010/08/P08016} {\bibfield  {journal} {\bibinfo  {journal} {J. Stat. Mech.}\ }\textbf {\bibinfo {volume} {2010}},\ \bibinfo {pages} {P08016} (\bibinfo {year} {2010})}\BibitemShut {NoStop}%
\bibitem [{\citenamefont {Prosen}\ and\ \citenamefont {Seligman}(2010)}]{prosen2010jphysa}%
  \BibitemOpen
  \bibfield  {author} {\bibinfo {author} {\bibfnamefont {T.}~\bibnamefont {Prosen}}\ and\ \bibinfo {author} {\bibfnamefont {T.~H.}\ \bibnamefont {Seligman}},\ }\bibfield  {title} {\bibinfo {title} {{Quantization over boson operator spaces}},\ }\href {https://doi.org/10.1088/1751-8113/43/39/392004} {\bibfield  {journal} {\bibinfo  {journal} {J. Phys. A}\ }\textbf {\bibinfo {volume} {43}},\ \bibinfo {pages} {392004} (\bibinfo {year} {2010})}\BibitemShut {NoStop}%
\bibitem [{\citenamefont {{\v{Z}}nidari{\v{c}}}(2010{\natexlab{a}})}]{znidaric2010jphysa}%
  \BibitemOpen
  \bibfield  {author} {\bibinfo {author} {\bibfnamefont {M.}~\bibnamefont {{\v{Z}}nidari{\v{c}}}},\ }\bibfield  {title} {\bibinfo {title} {{A matrix product solution for a nonequilibrium steady state of an XX chain}},\ }\href {https://doi.org/10.1088/1751-8113/43/41/415004} {\bibfield  {journal} {\bibinfo  {journal} {J. Phys. A}\ }\textbf {\bibinfo {volume} {43}},\ \bibinfo {pages} {415004} (\bibinfo {year} {2010}{\natexlab{a}})}\BibitemShut {NoStop}%
\bibitem [{\citenamefont {Prosen}\ and\ \citenamefont {Ilievski}(2011)}]{prosen2011ilievski}%
  \BibitemOpen
  \bibfield  {author} {\bibinfo {author} {\bibfnamefont {T.}~\bibnamefont {Prosen}}\ and\ \bibinfo {author} {\bibfnamefont {E.}~\bibnamefont {Ilievski}},\ }\bibfield  {title} {\bibinfo {title} {{Nonequilibrium Phase Transition in a Periodically Driven $XY$ Spin Chain}},\ }\href {https://doi.org/10.1103/PhysRevLett.107.060403} {\bibfield  {journal} {\bibinfo  {journal} {Phys. Rev. Lett.}\ }\textbf {\bibinfo {volume} {107}},\ \bibinfo {pages} {060403} (\bibinfo {year} {2011})}\BibitemShut {NoStop}%
\bibitem [{\citenamefont {{\v{Z}}unkovi{\v{c}}}(2014)}]{zunkovic2014}%
  \BibitemOpen
  \bibfield  {author} {\bibinfo {author} {\bibfnamefont {B.}~\bibnamefont {{\v{Z}}unkovi{\v{c}}}},\ }\bibfield  {title} {\bibinfo {title} {{Closed hierarchy of correlations in Markovian open quantum systems}},\ }\href {https://doi.org/10.1088/1367-2630/16/1/013042} {\bibfield  {journal} {\bibinfo  {journal} {New J. Phys.}\ }\textbf {\bibinfo {volume} {16}},\ \bibinfo {pages} {013042} (\bibinfo {year} {2014})}\BibitemShut {NoStop}%
\bibitem [{\citenamefont {{\v{Z}}nidari{\v{c}}}(2010{\natexlab{b}})}]{znidaric2010}%
  \BibitemOpen
  \bibfield  {author} {\bibinfo {author} {\bibfnamefont {M.}~\bibnamefont {{\v{Z}}nidari{\v{c}}}},\ }\bibfield  {title} {\bibinfo {title} {{Exact solution for a diffusive nonequilibrium steady state of an open quantum chain}},\ }\href {https://doi.org/10.1088/1742-5468/2010/05/L05002} {\bibfield  {journal} {\bibinfo  {journal} {J. Stat. Mech.}\ }\textbf {\bibinfo {volume} {2010}},\ \bibinfo {pages} {L05002} (\bibinfo {year} {2010}{\natexlab{b}})}\BibitemShut {NoStop}%
\bibitem [{\citenamefont {{\v{Z}}nidari{\v{c}}}(2011)}]{znidaric2011}%
  \BibitemOpen
  \bibfield  {author} {\bibinfo {author} {\bibfnamefont {M.}~\bibnamefont {{\v{Z}}nidari{\v{c}}}},\ }\bibfield  {title} {\bibinfo {title} {{Solvable quantum nonequilibrium model exhibiting a phase transition and a matrix product representation}},\ }\href {https://doi.org/10.1103/PhysRevE.83.011108} {\bibfield  {journal} {\bibinfo  {journal} {Phys. Rev. E}\ }\textbf {\bibinfo {volume} {83}},\ \bibinfo {pages} {011108} (\bibinfo {year} {2011})}\BibitemShut {NoStop}%
\bibitem [{\citenamefont {Temme}\ \emph {et~al.}(2012)\citenamefont {Temme}, \citenamefont {Wolf},\ and\ \citenamefont {Verstraete}}]{temme2012}%
  \BibitemOpen
  \bibfield  {author} {\bibinfo {author} {\bibfnamefont {K.}~\bibnamefont {Temme}}, \bibinfo {author} {\bibfnamefont {M.~M.}\ \bibnamefont {Wolf}},\ and\ \bibinfo {author} {\bibfnamefont {F.}~\bibnamefont {Verstraete}},\ }\bibfield  {title} {\bibinfo {title} {{Stochastic exclusion processes versus coherent transport}},\ }\href {https://doi.org/10.1088/1367-2630/14/7/075004} {\bibfield  {journal} {\bibinfo  {journal} {New J. Phys.}\ }\textbf {\bibinfo {volume} {14}},\ \bibinfo {pages} {075004} (\bibinfo {year} {2012})}\BibitemShut {NoStop}%
\bibitem [{\citenamefont {Eisler}(2011)}]{eisler2011}%
  \BibitemOpen
  \bibfield  {author} {\bibinfo {author} {\bibfnamefont {V.}~\bibnamefont {Eisler}},\ }\bibfield  {title} {\bibinfo {title} {{Crossover between ballistic and diffusive transport: the quantum exclusion process}},\ }\href {https://doi.org/10.1088/1742-5468/2011/06/P06007} {\bibfield  {journal} {\bibinfo  {journal} {J. Stat. Mech.}\ }\textbf {\bibinfo {volume} {2011}},\ \bibinfo {pages} {P06007} (\bibinfo {year} {2011})}\BibitemShut {NoStop}%
\bibitem [{\citenamefont {Prosen}(2011{\natexlab{a}})}]{prosen2011a}%
  \BibitemOpen
  \bibfield  {author} {\bibinfo {author} {\bibfnamefont {T.}~\bibnamefont {Prosen}},\ }\bibfield  {title} {\bibinfo {title} {{Open $XXZ$ Spin Chain: Nonequilibrium Steady State and a Strict Bound on Ballistic Transport}},\ }\href {https://doi.org/10.1103/PhysRevLett.106.217206} {\bibfield  {journal} {\bibinfo  {journal} {Phys. Rev. Lett.}\ }\textbf {\bibinfo {volume} {106}},\ \bibinfo {pages} {217206} (\bibinfo {year} {2011}{\natexlab{a}})}\BibitemShut {NoStop}%
\bibitem [{\citenamefont {Prosen}(2011{\natexlab{b}})}]{prosen2011b}%
  \BibitemOpen
  \bibfield  {author} {\bibinfo {author} {\bibfnamefont {T.}~\bibnamefont {Prosen}},\ }\bibfield  {title} {\bibinfo {title} {{Exact Nonequilibrium Steady State of a Strongly Driven Open XXZ Chain}},\ }\href {http://dx.doi.org/10.1103/PhysRevLett.107.137201} {\bibfield  {journal} {\bibinfo  {journal} {Phys. Rev. Lett.}\ }\textbf {\bibinfo {volume} {107}},\ \bibinfo {pages} {137201} (\bibinfo {year} {2011}{\natexlab{b}})}\BibitemShut {NoStop}%
\bibitem [{\citenamefont {Prosen}\ and\ \citenamefont {Ilievski}(2013)}]{prosen2013prl}%
  \BibitemOpen
  \bibfield  {author} {\bibinfo {author} {\bibfnamefont {T.}~\bibnamefont {Prosen}}\ and\ \bibinfo {author} {\bibfnamefont {E.}~\bibnamefont {Ilievski}},\ }\bibfield  {title} {\bibinfo {title} {{Families of Quasilocal Conservation Laws and Quantum Spin Transport}},\ }\href {https://doi.org/10.1103/PhysRevLett.111.057203} {\bibfield  {journal} {\bibinfo  {journal} {Phys. Rev. Lett.}\ }\textbf {\bibinfo {volume} {111}},\ \bibinfo {pages} {057203} (\bibinfo {year} {2013})}\BibitemShut {NoStop}%
\bibitem [{\citenamefont {Karevski}\ \emph {et~al.}(2013)\citenamefont {Karevski}, \citenamefont {Popkov},\ and\ \citenamefont {Sch{\"u}tz}}]{karevski2013}%
  \BibitemOpen
  \bibfield  {author} {\bibinfo {author} {\bibfnamefont {D.}~\bibnamefont {Karevski}}, \bibinfo {author} {\bibfnamefont {V.}~\bibnamefont {Popkov}},\ and\ \bibinfo {author} {\bibfnamefont {G.~M.}\ \bibnamefont {Sch{\"u}tz}},\ }\bibfield  {title} {\bibinfo {title} {{Exact Matrix Product Solution for the Boundary-Driven Lindblad XXZ Chain}},\ }\href {http://dx.doi.org/10.1103/PhysRevLett.110.047201} {\bibfield  {journal} {\bibinfo  {journal} {Phys. Rev. Lett.}\ }\textbf {\bibinfo {volume} {110}},\ \bibinfo {pages} {047201} (\bibinfo {year} {2013})}\BibitemShut {NoStop}%
\bibitem [{\citenamefont {Prosen}(2014)}]{prosen2014}%
  \BibitemOpen
  \bibfield  {author} {\bibinfo {author} {\bibfnamefont {T.}~\bibnamefont {Prosen}},\ }\bibfield  {title} {\bibinfo {title} {{Exact nonequilibrium steady state of an open Hubbard chain}},\ }\href {http://dx.doi.org/10.1103/PhysRevLett.112.030603} {\bibfield  {journal} {\bibinfo  {journal} {Phys. Rev. Lett.}\ }\textbf {\bibinfo {volume} {112}},\ \bibinfo {pages} {030603} (\bibinfo {year} {2014})}\BibitemShut {NoStop}%
\bibitem [{\citenamefont {Popkov}\ and\ \citenamefont {Prosen}(2015)}]{popkov2015}%
  \BibitemOpen
  \bibfield  {author} {\bibinfo {author} {\bibfnamefont {V.}~\bibnamefont {Popkov}}\ and\ \bibinfo {author} {\bibfnamefont {T.}~\bibnamefont {Prosen}},\ }\bibfield  {title} {\bibinfo {title} {{Infinitely Dimensional Lax Structure for the One-Dimensional Hubbard Model}},\ }\href {https://doi.org/10.1103/PhysRevLett.114.127201} {\bibfield  {journal} {\bibinfo  {journal} {Phys. Rev. Lett.}\ }\textbf {\bibinfo {volume} {114}},\ \bibinfo {pages} {127201} (\bibinfo {year} {2015})}\BibitemShut {NoStop}%
\bibitem [{\citenamefont {Prosen}(2015)}]{prosen2015REVIEW}%
  \BibitemOpen
  \bibfield  {author} {\bibinfo {author} {\bibfnamefont {T.}~\bibnamefont {Prosen}},\ }\bibfield  {title} {\bibinfo {title} {{Matrix product solutions of boundary driven quantum chains}},\ }\href {https://doi.org/10.1088/1751-8113/48/37/373001} {\bibfield  {journal} {\bibinfo  {journal} {J. Phys. A}\ }\textbf {\bibinfo {volume} {48}},\ \bibinfo {pages} {373001} (\bibinfo {year} {2015})}\BibitemShut {NoStop}%
\bibitem [{\citenamefont {Bu{\v{c}}a}\ and\ \citenamefont {Prosen}(2018)}]{buca2018}%
  \BibitemOpen
  \bibfield  {author} {\bibinfo {author} {\bibfnamefont {B.}~\bibnamefont {Bu{\v{c}}a}}\ and\ \bibinfo {author} {\bibfnamefont {T.}~\bibnamefont {Prosen}},\ }\bibfield  {title} {\bibinfo {title} {{Strongly correlated non-equilibrium steady states with currents--quantum and classical picture}},\ }\href {https://doi.org/10.1140/epjst/e2018-00100-9} {\bibfield  {journal} {\bibinfo  {journal} {Eur. Phys. J. Spec. Top.}\ }\textbf {\bibinfo {volume} {227}},\ \bibinfo {pages} {421} (\bibinfo {year} {2018})}\BibitemShut {NoStop}%
\bibitem [{\citenamefont {Popkov}\ \emph {et~al.}(2020)\citenamefont {Popkov}, \citenamefont {Prosen},\ and\ \citenamefont {Zadnik}}]{popkov2020prl}%
  \BibitemOpen
  \bibfield  {author} {\bibinfo {author} {\bibfnamefont {V.}~\bibnamefont {Popkov}}, \bibinfo {author} {\bibfnamefont {T.}~\bibnamefont {Prosen}},\ and\ \bibinfo {author} {\bibfnamefont {L.}~\bibnamefont {Zadnik}},\ }\bibfield  {title} {\bibinfo {title} {{Exact Nonequilibrium Steady State of Open $XXZ/XYZ$ Spin-$1/2$ Chain with Dirichlet Boundary Conditions}},\ }\href {https://doi.org/10.1103/PhysRevLett.124.160403} {\bibfield  {journal} {\bibinfo  {journal} {Phys. Rev. Lett.}\ }\textbf {\bibinfo {volume} {124}},\ \bibinfo {pages} {160403} (\bibinfo {year} {2020})}\BibitemShut {NoStop}%
\bibitem [{\citenamefont {Medvedyeva}\ \emph {et~al.}(2016)\citenamefont {Medvedyeva}, \citenamefont {Essler},\ and\ \citenamefont {Prosen}}]{medvedyeva2016}%
  \BibitemOpen
  \bibfield  {author} {\bibinfo {author} {\bibfnamefont {M.~V.}\ \bibnamefont {Medvedyeva}}, \bibinfo {author} {\bibfnamefont {F.~H.~L.}\ \bibnamefont {Essler}},\ and\ \bibinfo {author} {\bibfnamefont {T.}~\bibnamefont {Prosen}},\ }\bibfield  {title} {\bibinfo {title} {{Exact Bethe ansatz spectrum of a tight-binding chain with dephasing noise}},\ }\href {http://dx.doi.org/10.1103/PhysRevLett.117.137202} {\bibfield  {journal} {\bibinfo  {journal} {Phys. Rev. Lett.}\ }\textbf {\bibinfo {volume} {117}},\ \bibinfo {pages} {137202} (\bibinfo {year} {2016})}\BibitemShut {NoStop}%
\bibitem [{\citenamefont {Rowlands}\ and\ \citenamefont {Lamacraft}(2018)}]{rowlands2018}%
  \BibitemOpen
  \bibfield  {author} {\bibinfo {author} {\bibfnamefont {D.~A.}\ \bibnamefont {Rowlands}}\ and\ \bibinfo {author} {\bibfnamefont {A.}~\bibnamefont {Lamacraft}},\ }\bibfield  {title} {\bibinfo {title} {{Noisy Spins and the Richardson-Gaudin Model}},\ }\href {http://dx.doi.org/10.1103/PhysRevLett.120.090401} {\bibfield  {journal} {\bibinfo  {journal} {Phys. Rev. Lett.}\ }\textbf {\bibinfo {volume} {120}},\ \bibinfo {pages} {090401} (\bibinfo {year} {2018})}\BibitemShut {NoStop}%
\bibitem [{\citenamefont {Shibata}\ and\ \citenamefont {Katsura}(2019{\natexlab{a}})}]{shibata2019a}%
  \BibitemOpen
  \bibfield  {author} {\bibinfo {author} {\bibfnamefont {N.}~\bibnamefont {Shibata}}\ and\ \bibinfo {author} {\bibfnamefont {H.}~\bibnamefont {Katsura}},\ }\bibfield  {title} {\bibinfo {title} {{Dissipative spin chain as a non-Hermitian Kitaev ladder}},\ }\href {https://doi.org/10.1103/PhysRevB.99.174303} {\bibfield  {journal} {\bibinfo  {journal} {Phys. Rev. B}\ }\textbf {\bibinfo {volume} {99}},\ \bibinfo {pages} {174303} (\bibinfo {year} {2019}{\natexlab{a}})}\BibitemShut {NoStop}%
\bibitem [{\citenamefont {Shibata}\ and\ \citenamefont {Katsura}(2019{\natexlab{b}})}]{shibata2019b}%
  \BibitemOpen
  \bibfield  {author} {\bibinfo {author} {\bibfnamefont {N.}~\bibnamefont {Shibata}}\ and\ \bibinfo {author} {\bibfnamefont {H.}~\bibnamefont {Katsura}},\ }\bibfield  {title} {\bibinfo {title} {{Dissipative quantum Ising chain as a non-Hermitian Ashkin-Teller model}},\ }\href {https://doi.org/10.1103/PhysRevB.99.224432} {\bibfield  {journal} {\bibinfo  {journal} {Phys. Rev. B}\ }\textbf {\bibinfo {volume} {99}},\ \bibinfo {pages} {224432} (\bibinfo {year} {2019}{\natexlab{b}})}\BibitemShut {NoStop}%
\bibitem [{\citenamefont {Ziolkowska}\ and\ \citenamefont {Essler}(2020)}]{ziolkowska2020SciPost}%
  \BibitemOpen
  \bibfield  {author} {\bibinfo {author} {\bibfnamefont {A.~A.}\ \bibnamefont {Ziolkowska}}\ and\ \bibinfo {author} {\bibfnamefont {F.~H.~L.}\ \bibnamefont {Essler}},\ }\bibfield  {title} {\bibinfo {title} {{Yang-Baxter integrable Lindblad equations}},\ }\href {https://doi.org/10.21468/SciPostPhys.8.3.044} {\bibfield  {journal} {\bibinfo  {journal} {SciPost Phys.}\ }\textbf {\bibinfo {volume} {8}},\ \bibinfo {pages} {044} (\bibinfo {year} {2020})}\BibitemShut {NoStop}%
\bibitem [{\citenamefont {Ribeiro}\ and\ \citenamefont {Prosen}(2019)}]{ribeiro2019}%
  \BibitemOpen
  \bibfield  {author} {\bibinfo {author} {\bibfnamefont {P.}~\bibnamefont {Ribeiro}}\ and\ \bibinfo {author} {\bibfnamefont {T.}~\bibnamefont {Prosen}},\ }\bibfield  {title} {\bibinfo {title} {{Integrable Quantum Dynamics of Open Collective Spin Models}},\ }\href {http://dx.doi.org/10.1103/PhysRevLett.122.010401} {\bibfield  {journal} {\bibinfo  {journal} {Phys. Rev. Lett.}\ }\textbf {\bibinfo {volume} {122}},\ \bibinfo {pages} {010401} (\bibinfo {year} {2019})}\BibitemShut {NoStop}%
\bibitem [{\citenamefont {Torres}(2014)}]{torres2014PRA}%
  \BibitemOpen
  \bibfield  {author} {\bibinfo {author} {\bibfnamefont {J.~M.}\ \bibnamefont {Torres}},\ }\bibfield  {title} {\bibinfo {title} {{Closed-form solution of Lindblad master equations without gain}},\ }\href {https://doi.org/10.1103/PhysRevA.89.052133} {\bibfield  {journal} {\bibinfo  {journal} {Phys. Rev. A}\ }\textbf {\bibinfo {volume} {89}},\ \bibinfo {pages} {052133} (\bibinfo {year} {2014})}\BibitemShut {NoStop}%
\bibitem [{\citenamefont {Nakagawa}\ \emph {et~al.}(2021)\citenamefont {Nakagawa}, \citenamefont {Kawakami},\ and\ \citenamefont {Ueda}}]{nakagawa2020PRL}%
  \BibitemOpen
  \bibfield  {author} {\bibinfo {author} {\bibfnamefont {M.}~\bibnamefont {Nakagawa}}, \bibinfo {author} {\bibfnamefont {N.}~\bibnamefont {Kawakami}},\ and\ \bibinfo {author} {\bibfnamefont {M.}~\bibnamefont {Ueda}},\ }\bibfield  {title} {\bibinfo {title} {{Exact Liouvillian Spectrum of a One-Dimensional Dissipative Hubbard Model}},\ }\href {https://doi.org/10.1103/PhysRevLett.126.110404} {\bibfield  {journal} {\bibinfo  {journal} {Phys. Rev. Lett.}\ }\textbf {\bibinfo {volume} {126}},\ \bibinfo {pages} {110404} (\bibinfo {year} {2021})}\BibitemShut {NoStop}%
\bibitem [{\citenamefont {Bu\v{c}a}\ \emph {et~al.}(2020)\citenamefont {Bu\v{c}a}, \citenamefont {Booker}, \citenamefont {Medenjak},\ and\ \citenamefont {Jaksch}}]{buca2020NJP}%
  \BibitemOpen
  \bibfield  {author} {\bibinfo {author} {\bibfnamefont {B.}~\bibnamefont {Bu\v{c}a}}, \bibinfo {author} {\bibfnamefont {C.}~\bibnamefont {Booker}}, \bibinfo {author} {\bibfnamefont {M.}~\bibnamefont {Medenjak}},\ and\ \bibinfo {author} {\bibfnamefont {D.}~\bibnamefont {Jaksch}},\ }\bibfield  {title} {\bibinfo {title} {{Bethe ansatz approach for dissipation: exact solutions of quantum many-body dynamics under loss}},\ }\href {https://doi.org/10.1088/1367-2630/abd124} {\bibfield  {journal} {\bibinfo  {journal} {New J. Phys.}\ }\textbf {\bibinfo {volume} {22}},\ \bibinfo {pages} {123040} (\bibinfo {year} {2020})}\BibitemShut {NoStop}%
\bibitem [{\citenamefont {S\'a}\ \emph {et~al.}(2021)\citenamefont {S\'a}, \citenamefont {Ribeiro},\ and\ \citenamefont {Prosen}}]{sa2021PRB}%
  \BibitemOpen
  \bibfield  {author} {\bibinfo {author} {\bibfnamefont {L.}~\bibnamefont {S\'a}}, \bibinfo {author} {\bibfnamefont {P.}~\bibnamefont {Ribeiro}},\ and\ \bibinfo {author} {\bibfnamefont {T.}~\bibnamefont {Prosen}},\ }\bibfield  {title} {\bibinfo {title} {{Integrable nonunitary open quantum circuits}},\ }\href {https://doi.org/10.1103/PhysRevB.103.115132} {\bibfield  {journal} {\bibinfo  {journal} {Phys. Rev. B}\ }\textbf {\bibinfo {volume} {103}},\ \bibinfo {pages} {115132} (\bibinfo {year} {2021})}\BibitemShut {NoStop}%
\bibitem [{\citenamefont {de~Leeuw}\ \emph {et~al.}(2021)\citenamefont {de~Leeuw}, \citenamefont {Paletta},\ and\ \citenamefont {Pozsgay}}]{deleeuw2021PRL}%
  \BibitemOpen
  \bibfield  {author} {\bibinfo {author} {\bibfnamefont {M.}~\bibnamefont {de~Leeuw}}, \bibinfo {author} {\bibfnamefont {C.}~\bibnamefont {Paletta}},\ and\ \bibinfo {author} {\bibfnamefont {B.}~\bibnamefont {Pozsgay}},\ }\bibfield  {title} {\bibinfo {title} {{Constructing Integrable Lindblad Superoperators}},\ }\href {https://doi.org/10.1103/PhysRevLett.126.240403} {\bibfield  {journal} {\bibinfo  {journal} {Phys. Rev. Lett.}\ }\textbf {\bibinfo {volume} {126}},\ \bibinfo {pages} {240403} (\bibinfo {year} {2021})}\BibitemShut {NoStop}%
\bibitem [{\citenamefont {de~Leeuw}\ and\ \citenamefont {Paletta}(2022)}]{deleeuw2022ARXIV}%
  \BibitemOpen
  \bibfield  {author} {\bibinfo {author} {\bibfnamefont {M.}~\bibnamefont {de~Leeuw}}\ and\ \bibinfo {author} {\bibfnamefont {C.}~\bibnamefont {Paletta}},\ }\bibfield  {title} {\bibinfo {title} {{The Bethe ansatz for a new integrable open quantum system}},\ }\href {https://arxiv.org/abs/2207.14193} {\bibfield  {journal} {\bibinfo  {journal} {arXiv:2207.14193}\ } (\bibinfo {year} {2022})}\BibitemShut {NoStop}%
\bibitem [{\citenamefont {Su}\ and\ \citenamefont {Martin}(2022)}]{su2022PRB}%
  \BibitemOpen
  \bibfield  {author} {\bibinfo {author} {\bibfnamefont {L.}~\bibnamefont {Su}}\ and\ \bibinfo {author} {\bibfnamefont {I.}~\bibnamefont {Martin}},\ }\bibfield  {title} {\bibinfo {title} {{Integrable nonunitary quantum circuits}},\ }\href {https://doi.org/10.1103/PhysRevB.106.134312} {\bibfield  {journal} {\bibinfo  {journal} {Phys. Rev. B}\ }\textbf {\bibinfo {volume} {106}},\ \bibinfo {pages} {134312} (\bibinfo {year} {2022})}\BibitemShut {NoStop}%
\bibitem [{\citenamefont {\v{Z}nidari\v{c}}\ \emph {et~al.}(2025)\citenamefont {\v{Z}nidari\v{c}}, \citenamefont {Duh},\ and\ \citenamefont {Zadnik}}]{znidaric2024b}%
  \BibitemOpen
  \bibfield  {author} {\bibinfo {author} {\bibfnamefont {M.}~\bibnamefont {\v{Z}nidari\v{c}}}, \bibinfo {author} {\bibfnamefont {U.}~\bibnamefont {Duh}},\ and\ \bibinfo {author} {\bibfnamefont {L.}~\bibnamefont {Zadnik}},\ }\bibfield  {title} {\bibinfo {title} {{Integrability is generic in homogeneous U(1)-invariant nearest-neighbor qubit circuits}},\ }\href {https://doi.org/10.1103/tqy8-ynpd} {\bibfield  {journal} {\bibinfo  {journal} {Phys. Rev. B}\ }\textbf {\bibinfo {volume} {112}},\ \bibinfo {pages} {L020302} (\bibinfo {year} {2025})}\BibitemShut {NoStop}%
\bibitem [{\citenamefont {Wang}\ \emph {et~al.}(2020)\citenamefont {Wang}, \citenamefont {Piazza},\ and\ \citenamefont {Luitz}}]{wang2020PRL}%
  \BibitemOpen
  \bibfield  {author} {\bibinfo {author} {\bibfnamefont {K.}~\bibnamefont {Wang}}, \bibinfo {author} {\bibfnamefont {F.}~\bibnamefont {Piazza}},\ and\ \bibinfo {author} {\bibfnamefont {D.~J.}\ \bibnamefont {Luitz}},\ }\bibfield  {title} {\bibinfo {title} {{Hierarchy of Relaxation Timescales in Local Random Liouvillians}},\ }\href {https://doi.org/10.1103/PhysRevLett.124.100604} {\bibfield  {journal} {\bibinfo  {journal} {Phys. Rev. Lett.}\ }\textbf {\bibinfo {volume} {124}},\ \bibinfo {pages} {100604} (\bibinfo {year} {2020})}\BibitemShut {NoStop}%
\bibitem [{\citenamefont {Sommer}\ \emph {et~al.}(2021)\citenamefont {Sommer}, \citenamefont {Piazza},\ and\ \citenamefont {Luitz}}]{sommer2021PRR}%
  \BibitemOpen
  \bibfield  {author} {\bibinfo {author} {\bibfnamefont {O.~E.}\ \bibnamefont {Sommer}}, \bibinfo {author} {\bibfnamefont {F.}~\bibnamefont {Piazza}},\ and\ \bibinfo {author} {\bibfnamefont {D.~J.}\ \bibnamefont {Luitz}},\ }\bibfield  {title} {\bibinfo {title} {{Many-body hierarchy of dissipative timescales in a quantum computer}},\ }\href {https://doi.org/10.1103/PhysRevResearch.3.023190} {\bibfield  {journal} {\bibinfo  {journal} {Phys. Rev. Res.}\ }\textbf {\bibinfo {volume} {3}},\ \bibinfo {pages} {023190} (\bibinfo {year} {2021})}\BibitemShut {NoStop}%
\bibitem [{\citenamefont {Li}\ \emph {et~al.}(2022)\citenamefont {Li}, \citenamefont {Rose}, \citenamefont {Garrahan},\ and\ \citenamefont {Luitz}}]{li2022PRB}%
  \BibitemOpen
  \bibfield  {author} {\bibinfo {author} {\bibfnamefont {J.~L.}\ \bibnamefont {Li}}, \bibinfo {author} {\bibfnamefont {D.~C.}\ \bibnamefont {Rose}}, \bibinfo {author} {\bibfnamefont {J.~P.}\ \bibnamefont {Garrahan}},\ and\ \bibinfo {author} {\bibfnamefont {D.~J.}\ \bibnamefont {Luitz}},\ }\bibfield  {title} {\bibinfo {title} {{Random matrix theory for quantum and classical metastability in local Liouvillians}},\ }\href {https://doi.org/10.1103/PhysRevB.105.L180201} {\bibfield  {journal} {\bibinfo  {journal} {Phys. Rev. B}\ }\textbf {\bibinfo {volume} {105}},\ \bibinfo {pages} {L180201} (\bibinfo {year} {2022})}\BibitemShut {NoStop}%
\bibitem [{\citenamefont {Hartmann}\ \emph {et~al.}(2024)\citenamefont {Hartmann}, \citenamefont {Li},\ and\ \citenamefont {Luitz}}]{hartmann2024}%
  \BibitemOpen
  \bibfield  {author} {\bibinfo {author} {\bibfnamefont {N.~D.}\ \bibnamefont {Hartmann}}, \bibinfo {author} {\bibfnamefont {J.~L.}\ \bibnamefont {Li}},\ and\ \bibinfo {author} {\bibfnamefont {D.~J.}\ \bibnamefont {Luitz}},\ }\bibfield  {title} {\bibinfo {title} {{Fate of dissipative hierarchy of timescales in the presence of unitary dynamics}},\ }\href {https://doi.org/10.1103/PhysRevB.109.054203} {\bibfield  {journal} {\bibinfo  {journal} {Phys. Rev. B}\ }\textbf {\bibinfo {volume} {109}},\ \bibinfo {pages} {054203} (\bibinfo {year} {2024})}\BibitemShut {NoStop}%
\bibitem [{\citenamefont {Popkov}\ and\ \citenamefont {Presilla}(2021)}]{popkov2021PRL}%
  \BibitemOpen
  \bibfield  {author} {\bibinfo {author} {\bibfnamefont {V.}~\bibnamefont {Popkov}}\ and\ \bibinfo {author} {\bibfnamefont {C.}~\bibnamefont {Presilla}},\ }\bibfield  {title} {\bibinfo {title} {{Full Spectrum of the Liouvillian of Open Dissipative Quantum Systems in the Zeno Limit}},\ }\href {https://doi.org/10.1103/PhysRevLett.126.190402} {\bibfield  {journal} {\bibinfo  {journal} {Phys. Rev. Lett.}\ }\textbf {\bibinfo {volume} {126}},\ \bibinfo {pages} {190402} (\bibinfo {year} {2021})}\BibitemShut {NoStop}%
\bibitem [{\citenamefont {Zhou}\ \emph {et~al.}(2021)\citenamefont {Zhou}, \citenamefont {Mao},\ and\ \citenamefont {Zhai}}]{Zhou2021PRR}%
  \BibitemOpen
  \bibfield  {author} {\bibinfo {author} {\bibfnamefont {Y.-N.}\ \bibnamefont {Zhou}}, \bibinfo {author} {\bibfnamefont {L.}~\bibnamefont {Mao}},\ and\ \bibinfo {author} {\bibfnamefont {H.}~\bibnamefont {Zhai}},\ }\bibfield  {title} {\bibinfo {title} {{R\'enyi entropy dynamics and Lindblad spectrum for open quantum systems}},\ }\href {https://doi.org/10.1103/PhysRevResearch.3.043060} {\bibfield  {journal} {\bibinfo  {journal} {Phys. Rev. Res.}\ }\textbf {\bibinfo {volume} {3}},\ \bibinfo {pages} {043060} (\bibinfo {year} {2021})}\BibitemShut {NoStop}%
\bibitem [{\citenamefont {S\'a}\ \emph {et~al.}(2022)\citenamefont {S\'a}, \citenamefont {Ribeiro},\ and\ \citenamefont {Prosen}}]{sa2022PRR}%
  \BibitemOpen
  \bibfield  {author} {\bibinfo {author} {\bibfnamefont {L.}~\bibnamefont {S\'a}}, \bibinfo {author} {\bibfnamefont {P.}~\bibnamefont {Ribeiro}},\ and\ \bibinfo {author} {\bibfnamefont {T.}~\bibnamefont {Prosen}},\ }\bibfield  {title} {\bibinfo {title} {{Lindbladian dissipation of strongly-correlated quantum matter}},\ }\href {https://doi.org/10.1103/PhysRevResearch.4.L022068} {\bibfield  {journal} {\bibinfo  {journal} {Phys. Rev. Res.}\ }\textbf {\bibinfo {volume} {4}},\ \bibinfo {pages} {L022068} (\bibinfo {year} {2022})}\BibitemShut {NoStop}%
\bibitem [{\citenamefont {Garc\'{\i}a-Garc\'{\i}a}\ \emph {et~al.}(2023)\citenamefont {Garc\'{\i}a-Garc\'{\i}a}, \citenamefont {S\'a}, \citenamefont {Verbaarschot},\ and\ \citenamefont {Zheng}}]{garcia2023PRD2}%
  \BibitemOpen
  \bibfield  {author} {\bibinfo {author} {\bibfnamefont {A.~M.}\ \bibnamefont {Garc\'{\i}a-Garc\'{\i}a}}, \bibinfo {author} {\bibfnamefont {L.}~\bibnamefont {S\'a}}, \bibinfo {author} {\bibfnamefont {J.~J.~M.}\ \bibnamefont {Verbaarschot}},\ and\ \bibinfo {author} {\bibfnamefont {J.~P.}\ \bibnamefont {Zheng}},\ }\bibfield  {title} {\bibinfo {title} {{Keldysh wormholes and anomalous relaxation in the dissipative Sachdev-Ye-Kitaev model}},\ }\href {https://doi.org/10.1103/PhysRevD.107.106006} {\bibfield  {journal} {\bibinfo  {journal} {Phys. Rev. D}\ }\textbf {\bibinfo {volume} {107}},\ \bibinfo {pages} {106006} (\bibinfo {year} {2023})}\BibitemShut {NoStop}%
\bibitem [{\citenamefont {Mori}(2024)}]{mori2024}%
  \BibitemOpen
  \bibfield  {author} {\bibinfo {author} {\bibfnamefont {T.}~\bibnamefont {Mori}},\ }\bibfield  {title} {\bibinfo {title} {{Liouvillian-gap analysis of open quantum many-body systems in the weak dissipation limit}},\ }\href {https://doi.org/10.1103/PhysRevB.109.064311} {\bibfield  {journal} {\bibinfo  {journal} {Phys. Rev. B}\ }\textbf {\bibinfo {volume} {109}},\ \bibinfo {pages} {064311} (\bibinfo {year} {2024})}\BibitemShut {NoStop}%
\bibitem [{\citenamefont {Yoshimura}\ and\ \citenamefont {Sá}(2024)}]{yoshimura2024}%
  \BibitemOpen
  \bibfield  {author} {\bibinfo {author} {\bibfnamefont {T.}~\bibnamefont {Yoshimura}}\ and\ \bibinfo {author} {\bibfnamefont {L.}~\bibnamefont {Sá}},\ }\bibfield  {title} {\bibinfo {title} {{Robustness of quantum chaos and anomalous relaxation in open quantum circuits}},\ }\href {http://dx.doi.org/10.1038/s41467-024-54164-7} {\bibfield  {journal} {\bibinfo  {journal} {Nat. Commun.}\ }\textbf {\bibinfo {volume} {15}},\ \bibinfo {pages} {9808} (\bibinfo {year} {2024})}\BibitemShut {NoStop}%
\bibitem [{\citenamefont {Yoshimura}\ and\ \citenamefont {S\'a}(2025)}]{yoshimura2025}%
  \BibitemOpen
  \bibfield  {author} {\bibinfo {author} {\bibfnamefont {T.}~\bibnamefont {Yoshimura}}\ and\ \bibinfo {author} {\bibfnamefont {L.}~\bibnamefont {S\'a}},\ }\bibfield  {title} {\bibinfo {title} {{Theory of irreversibility in quantum many-body systems}},\ }\href {https://doi.org/10.1103/82f6-qdyd} {\bibfield  {journal} {\bibinfo  {journal} {Phys. Rev. E}\ }\textbf {\bibinfo {volume} {111}},\ \bibinfo {pages} {064135} (\bibinfo {year} {2025})}\BibitemShut {NoStop}%
\bibitem [{\citenamefont {Bu{\v{c}}a}\ and\ \citenamefont {Prosen}(2012)}]{buca2012}%
  \BibitemOpen
  \bibfield  {author} {\bibinfo {author} {\bibfnamefont {B.}~\bibnamefont {Bu{\v{c}}a}}\ and\ \bibinfo {author} {\bibfnamefont {T.}~\bibnamefont {Prosen}},\ }\bibfield  {title} {\bibinfo {title} {{A note on symmetry reductions of the Lindblad equation: transport in constrained open spin chains}},\ }\href {https://doi.org/10.1088/1367-2630/14/7/073007} {\bibfield  {journal} {\bibinfo  {journal} {New J. Phys.}\ }\textbf {\bibinfo {volume} {14}},\ \bibinfo {pages} {073007} (\bibinfo {year} {2012})}\BibitemShut {NoStop}%
\bibitem [{\citenamefont {Gamayun}\ and\ \citenamefont {Lychkovskiy}(2022)}]{gamayun2022}%
  \BibitemOpen
  \bibfield  {author} {\bibinfo {author} {\bibfnamefont {O.}~\bibnamefont {Gamayun}}\ and\ \bibinfo {author} {\bibfnamefont {O.}~\bibnamefont {Lychkovskiy}},\ }\bibfield  {title} {\bibinfo {title} {{Out-of-equilibrium dynamics of the Kitaev model on the Bethe lattice via coupled Heisenberg equations}},\ }\href {https://doi.org/10.21468/SciPostPhys.12.5.175} {\bibfield  {journal} {\bibinfo  {journal} {SciPost Phys.}\ }\textbf {\bibinfo {volume} {12}},\ \bibinfo {pages} {175} (\bibinfo {year} {2022})}\BibitemShut {NoStop}%
\bibitem [{\citenamefont {Teretenkov}\ and\ \citenamefont {Lychkovskiy}(2024)}]{terentenkov2024}%
  \BibitemOpen
  \bibfield  {author} {\bibinfo {author} {\bibfnamefont {A.}~\bibnamefont {Teretenkov}}\ and\ \bibinfo {author} {\bibfnamefont {O.}~\bibnamefont {Lychkovskiy}},\ }\bibfield  {title} {\bibinfo {title} {{Exact dynamics of quantum dissipative XX models: Wannier-Stark localization in the fragmented operator space}},\ }\href {https://doi.org/10.1103/PhysRevB.109.L140302} {\bibfield  {journal} {\bibinfo  {journal} {Phys. Rev. B}\ }\textbf {\bibinfo {volume} {109}},\ \bibinfo {pages} {L140302} (\bibinfo {year} {2024})}\BibitemShut {NoStop}%
\bibitem [{\citenamefont {Belavin}\ \emph {et~al.}(1969)\citenamefont {Belavin}, \citenamefont {Zeldovich}, \citenamefont {Perelomov},\ and\ \citenamefont {Popov}}]{belavin1969}%
  \BibitemOpen
  \bibfield  {author} {\bibinfo {author} {\bibfnamefont {A.~A.}\ \bibnamefont {Belavin}}, \bibinfo {author} {\bibfnamefont {B.~Y.}\ \bibnamefont {Zeldovich}}, \bibinfo {author} {\bibfnamefont {A.~M.}\ \bibnamefont {Perelomov}},\ and\ \bibinfo {author} {\bibfnamefont {V.~S.}\ \bibnamefont {Popov}},\ }\bibfield  {title} {\bibinfo {title} {{Relaxation of quantum systems with equidistant spectra}},\ }\href {http://jetp.ras.ru/cgi-bin/e/index/e/29/1/p145?a=list} {\bibfield  {journal} {\bibinfo  {journal} {Sov. Phys. JETP}\ }\textbf {\bibinfo {volume} {29}},\ \bibinfo {pages} {145} (\bibinfo {year} {1969})}\BibitemShut {NoStop}%
\bibitem [{\citenamefont {Lindblad}(1976)}]{lindblad1976}%
  \BibitemOpen
  \bibfield  {author} {\bibinfo {author} {\bibfnamefont {G.}~\bibnamefont {Lindblad}},\ }\bibfield  {title} {\bibinfo {title} {{On the generators of quantum dynamical semigroups}},\ }\href {https://doi.org/https://doi.org/10.1007/BF01608499} {\bibfield  {journal} {\bibinfo  {journal} {Commun. Math. Phys.}\ }\textbf {\bibinfo {volume} {48}},\ \bibinfo {pages} {119} (\bibinfo {year} {1976})}\BibitemShut {NoStop}%
\bibitem [{\citenamefont {Gorini}\ \emph {et~al.}(1976)\citenamefont {Gorini}, \citenamefont {Kossakowski},\ and\ \citenamefont {Sudarshan}}]{gorini1976}%
  \BibitemOpen
  \bibfield  {author} {\bibinfo {author} {\bibfnamefont {V.}~\bibnamefont {Gorini}}, \bibinfo {author} {\bibfnamefont {A.}~\bibnamefont {Kossakowski}},\ and\ \bibinfo {author} {\bibfnamefont {E.~C.~G.}\ \bibnamefont {Sudarshan}},\ }\bibfield  {title} {\bibinfo {title} {{Completely positive dynamical semigroups of $N$-level systems}},\ }\href {https://doi.org/https://doi.org/10.1063/1.522979} {\bibfield  {journal} {\bibinfo  {journal} {J. Math. Phys.}\ }\textbf {\bibinfo {volume} {17}},\ \bibinfo {pages} {821} (\bibinfo {year} {1976})}\BibitemShut {NoStop}%
\bibitem [{\citenamefont {Wen}(1990)}]{wen1990}%
  \BibitemOpen
  \bibfield  {author} {\bibinfo {author} {\bibfnamefont {X.-G.}\ \bibnamefont {Wen}},\ }\bibfield  {title} {\bibinfo {title} {{Topological Orders in Rigid States}},\ }\href {https://www.worldscientific.com/doi/abs/10.1142/s0217979290000139} {\bibfield  {journal} {\bibinfo  {journal} {Int. J. Mod. Phys. B}\ }\textbf {\bibinfo {volume} {4}},\ \bibinfo {pages} {239} (\bibinfo {year} {1990})}\BibitemShut {NoStop}%
\bibitem [{\citenamefont {Wen}(1995)}]{wen1995}%
  \BibitemOpen
  \bibfield  {author} {\bibinfo {author} {\bibfnamefont {X.-G.}\ \bibnamefont {Wen}},\ }\bibfield  {title} {\bibinfo {title} {{Topological orders and edge excitations in fractional quantum Hall states}},\ }\href {https://doi.org/10.1080/00018739500101566} {\bibfield  {journal} {\bibinfo  {journal} {Adv. Phys.}\ }\textbf {\bibinfo {volume} {44}},\ \bibinfo {pages} {405} (\bibinfo {year} {1995})}\BibitemShut {NoStop}%
\bibitem [{\citenamefont {Wen}(2003)}]{wen2003PRL}%
  \BibitemOpen
  \bibfield  {author} {\bibinfo {author} {\bibfnamefont {X.-G.}\ \bibnamefont {Wen}},\ }\bibfield  {title} {\bibinfo {title} {{Quantum Orders in an Exact Soluble Model}},\ }\href {https://doi.org/10.1103/PhysRevLett.90.016803} {\bibfield  {journal} {\bibinfo  {journal} {Phys. Rev. Lett.}\ }\textbf {\bibinfo {volume} {90}},\ \bibinfo {pages} {016803} (\bibinfo {year} {2003})}\BibitemShut {NoStop}%
\bibitem [{\citenamefont {Levin}\ and\ \citenamefont {Wen}(2005)}]{levin2005PRB}%
  \BibitemOpen
  \bibfield  {author} {\bibinfo {author} {\bibfnamefont {M.~A.}\ \bibnamefont {Levin}}\ and\ \bibinfo {author} {\bibfnamefont {X.-G.}\ \bibnamefont {Wen}},\ }\bibfield  {title} {\bibinfo {title} {{String-net condensation: A physical mechanism for topological phases}},\ }\href {https://doi.org/10.1103/PhysRevB.71.045110} {\bibfield  {journal} {\bibinfo  {journal} {Phys. Rev. B}\ }\textbf {\bibinfo {volume} {71}},\ \bibinfo {pages} {045110} (\bibinfo {year} {2005})}\BibitemShut {NoStop}%
\bibitem [{\citenamefont {Wen}(2017)}]{wen2017RMP}%
  \BibitemOpen
  \bibfield  {author} {\bibinfo {author} {\bibfnamefont {X.-G.}\ \bibnamefont {Wen}},\ }\bibfield  {title} {\bibinfo {title} {{Colloquium: Zoo of quantum-topological phases of matter}},\ }\href {https://doi.org/10.1103/RevModPhys.89.041004} {\bibfield  {journal} {\bibinfo  {journal} {Rev. Mod. Phys.}\ }\textbf {\bibinfo {volume} {89}},\ \bibinfo {pages} {041004} (\bibinfo {year} {2017})}\BibitemShut {NoStop}%
\bibitem [{\citenamefont {Simon}(2023)}]{simon2023book}%
  \BibitemOpen
  \bibfield  {author} {\bibinfo {author} {\bibfnamefont {S.~H.}\ \bibnamefont {Simon}},\ }\href@noop {} {\emph {\bibinfo {title} {Topological Quantum}}}\ (\bibinfo  {publisher} {Oxford University Press},\ \bibinfo {address} {Oxford},\ \bibinfo {year} {2023})\BibitemShut {NoStop}%
\bibitem [{\citenamefont {Hastings}\ and\ \citenamefont {Wen}(2005)}]{hastings2005PRB}%
  \BibitemOpen
  \bibfield  {author} {\bibinfo {author} {\bibfnamefont {M.~B.}\ \bibnamefont {Hastings}}\ and\ \bibinfo {author} {\bibfnamefont {X.-G.}\ \bibnamefont {Wen}},\ }\bibfield  {title} {\bibinfo {title} {{Quasiadiabatic continuation of quantum states: The stability of topological ground-state degeneracy and emergent gauge invariance}},\ }\href {https://doi.org/10.1103/PhysRevB.72.045141} {\bibfield  {journal} {\bibinfo  {journal} {Phys. Rev. B}\ }\textbf {\bibinfo {volume} {72}},\ \bibinfo {pages} {045141} (\bibinfo {year} {2005})}\BibitemShut {NoStop}%
\bibitem [{\citenamefont {Nussinov}\ and\ \citenamefont {Ortiz}(2009)}]{nussinov2009}%
  \BibitemOpen
  \bibfield  {author} {\bibinfo {author} {\bibfnamefont {Z.}~\bibnamefont {Nussinov}}\ and\ \bibinfo {author} {\bibfnamefont {G.}~\bibnamefont {Ortiz}},\ }\bibfield  {title} {\bibinfo {title} {{A symmetry principle for topological quantum order}},\ }\href {https://doi.org/https://doi.org/10.1016/j.aop.2008.11.002} {\bibfield  {journal} {\bibinfo  {journal} {Ann. Phys.}\ }\textbf {\bibinfo {volume} {324}},\ \bibinfo {pages} {977} (\bibinfo {year} {2009})}\BibitemShut {NoStop}%
\bibitem [{\citenamefont {Gaiotto}\ \emph {et~al.}(2015)\citenamefont {Gaiotto}, \citenamefont {Kapustin}, \citenamefont {Seiberg},\ and\ \citenamefont {Willett}}]{gaiotto2015JHEP}%
  \BibitemOpen
  \bibfield  {author} {\bibinfo {author} {\bibfnamefont {D.}~\bibnamefont {Gaiotto}}, \bibinfo {author} {\bibfnamefont {A.}~\bibnamefont {Kapustin}}, \bibinfo {author} {\bibfnamefont {N.}~\bibnamefont {Seiberg}},\ and\ \bibinfo {author} {\bibfnamefont {B.}~\bibnamefont {Willett}},\ }\bibfield  {title} {\bibinfo {title} {{Generalized global symmetries}},\ }\href {https://link.springer.com/article/10.1007/JHEP02(2015)172} {\bibfield  {journal} {\bibinfo  {journal} {J. High Energy Phys{.}}\ }\textbf {\bibinfo {volume} {2015}},\ \bibinfo {pages} {172} (\bibinfo {year} {2015})}\BibitemShut {NoStop}%
\bibitem [{\citenamefont {McGreevy}(2023)}]{mcgreevy2023Annu}%
  \BibitemOpen
  \bibfield  {author} {\bibinfo {author} {\bibfnamefont {J.}~\bibnamefont {McGreevy}},\ }\bibfield  {title} {\bibinfo {title} {{Generalized Symmetries in Condensed Matter}},\ }\href {https://doi.org/https://doi.org/10.1146/annurev-conmatphys-040721-021029} {\bibfield  {journal} {\bibinfo  {journal} {Annu. Rev. Condens. Matter Phys.}\ }\textbf {\bibinfo {volume} {14}},\ \bibinfo {pages} {57} (\bibinfo {year} {2023})}\BibitemShut {NoStop}%
\bibitem [{\citenamefont {Bravyi}\ and\ \citenamefont {Terhal}(2009)}]{bravyi2009NJP}%
  \BibitemOpen
  \bibfield  {author} {\bibinfo {author} {\bibfnamefont {S.}~\bibnamefont {Bravyi}}\ and\ \bibinfo {author} {\bibfnamefont {B.}~\bibnamefont {Terhal}},\ }\bibfield  {title} {\bibinfo {title} {{A no-go theorem for a two-dimensional self-correcting quantum memory based on stabilizer codes}},\ }\href {https://doi.org/10.1088/1367-2630/11/4/043029} {\bibfield  {journal} {\bibinfo  {journal} {New J. Phys.}\ }\textbf {\bibinfo {volume} {11}},\ \bibinfo {pages} {043029} (\bibinfo {year} {2009})}\BibitemShut {NoStop}%
\bibitem [{\citenamefont {Kitaev}(2001)}]{kitaev2001Wire}%
  \BibitemOpen
  \bibfield  {author} {\bibinfo {author} {\bibfnamefont {A.~Y.}\ \bibnamefont {Kitaev}},\ }\bibfield  {title} {\bibinfo {title} {{Unpaired Majorana fermions in quantum wires}},\ }\href {https://doi.org/10.1070/1063-7869/44/10S/S29} {\bibfield  {journal} {\bibinfo  {journal} {Phys.-Uspekhi}\ }\textbf {\bibinfo {volume} {44}},\ \bibinfo {pages} {131} (\bibinfo {year} {2001})}\BibitemShut {NoStop}%
\bibitem [{\citenamefont {Prosen}(2002)}]{prosen2002}%
  \BibitemOpen
  \bibfield  {author} {\bibinfo {author} {\bibfnamefont {T.}~\bibnamefont {Prosen}},\ }\bibfield  {title} {\bibinfo {title} {{Ruelle resonances in quantum many-body dynamics}},\ }\href {https://doi.org/10.1088/0305-4470/35/48/102} {\bibfield  {journal} {\bibinfo  {journal} {J. Phys. A}\ }\textbf {\bibinfo {volume} {35}},\ \bibinfo {pages} {L737} (\bibinfo {year} {2002})}\BibitemShut {NoStop}%
\bibitem [{\citenamefont {Prosen}(2004)}]{prosen2004}%
  \BibitemOpen
  \bibfield  {author} {\bibinfo {author} {\bibfnamefont {T.}~\bibnamefont {Prosen}},\ }\bibfield  {title} {\bibinfo {title} {{Ruelle resonances in kicked quantum spin chain}},\ }\href {https://doi.org/https://doi.org/10.1016/j.physd.2003.09.017} {\bibfield  {journal} {\bibinfo  {journal} {Physica D}\ }\textbf {\bibinfo {volume} {187}},\ \bibinfo {pages} {244} (\bibinfo {year} {2004})}\BibitemShut {NoStop}%
\bibitem [{\citenamefont {Prosen}(2007)}]{prosen2007}%
  \BibitemOpen
  \bibfield  {author} {\bibinfo {author} {\bibfnamefont {T.}~\bibnamefont {Prosen}},\ }\bibfield  {title} {\bibinfo {title} {{Chaos and complexity of quantum motion}},\ }\href {https://doi.org/10.1088/1751-8113/40/28/S02} {\bibfield  {journal} {\bibinfo  {journal} {J. Phys. A}\ }\textbf {\bibinfo {volume} {40}},\ \bibinfo {pages} {7881} (\bibinfo {year} {2007})}\BibitemShut {NoStop}%
\bibitem [{\citenamefont {\v{Z}nidari\v{c}}(2024)}]{znidaric2024}%
  \BibitemOpen
  \bibfield  {author} {\bibinfo {author} {\bibfnamefont {M.}~\bibnamefont {\v{Z}nidari\v{c}}},\ }\bibfield  {title} {\bibinfo {title} {{Momentum-dependent quantum Ruelle-Pollicott resonances in translationally invariant many-body systems}},\ }\href {https://doi.org/10.1103/PhysRevE.110.054204} {\bibfield  {journal} {\bibinfo  {journal} {Phys. Rev. E}\ }\textbf {\bibinfo {volume} {110}},\ \bibinfo {pages} {054204} (\bibinfo {year} {2024})}\BibitemShut {NoStop}%
\bibitem [{\citenamefont {Jacoby}\ \emph {et~al.}(2025)\citenamefont {Jacoby}, \citenamefont {Huse},\ and\ \citenamefont {Gopalakrishnan}}]{jacoby2024}%
  \BibitemOpen
  \bibfield  {author} {\bibinfo {author} {\bibfnamefont {J.~A.}\ \bibnamefont {Jacoby}}, \bibinfo {author} {\bibfnamefont {D.~A.}\ \bibnamefont {Huse}},\ and\ \bibinfo {author} {\bibfnamefont {S.}~\bibnamefont {Gopalakrishnan}},\ }\bibfield  {title} {\bibinfo {title} {{Spectral gaps of local quantum channels in the weak-dissipation limit}},\ }\href {https://doi.org/10.1103/PhysRevB.111.104303} {\bibfield  {journal} {\bibinfo  {journal} {Phys. Rev. B}\ }\textbf {\bibinfo {volume} {111}},\ \bibinfo {pages} {104303} (\bibinfo {year} {2025})}\BibitemShut {NoStop}%
\bibitem [{\citenamefont {Zhang}\ \emph {et~al.}(2024{\natexlab{b}})\citenamefont {Zhang}, \citenamefont {Nie},\ and\ \citenamefont {von Keyserlingk}}]{zhang2024}%
  \BibitemOpen
  \bibfield  {author} {\bibinfo {author} {\bibfnamefont {C.}~\bibnamefont {Zhang}}, \bibinfo {author} {\bibfnamefont {L.}~\bibnamefont {Nie}},\ and\ \bibinfo {author} {\bibfnamefont {C.}~\bibnamefont {von Keyserlingk}},\ }\bibfield  {title} {\bibinfo {title} {{Thermalization rates and quantum Ruelle-Pollicott resonances: insights from operator hydrodynamics}},\ }\href {https://arxiv.org/abs/2409.17251} {\bibfield  {journal} {\bibinfo  {journal} {arXiv:2409.17251}\ } (\bibinfo {year} {2024}{\natexlab{b}})}\BibitemShut {NoStop}%
\bibitem [{\citenamefont {Bendixson}(1902)}]{bendixson1902}%
  \BibitemOpen
  \bibfield  {author} {\bibinfo {author} {\bibfnamefont {I.}~\bibnamefont {Bendixson}},\ }\bibfield  {title} {\bibinfo {title} {{Sur les racines d'une équation fondamentale}},\ }\href {https://doi.org/10.1007/BF02419030} {\bibfield  {journal} {\bibinfo  {journal} {Acta Math.}\ }\textbf {\bibinfo {volume} {25}},\ \bibinfo {pages} {359 } (\bibinfo {year} {1902})}\BibitemShut {NoStop}%
\bibitem [{\citenamefont {Mirsky}(1955)}]{mirskylinear}%
  \BibitemOpen
  \bibfield  {author} {\bibinfo {author} {\bibfnamefont {L.}~\bibnamefont {Mirsky}},\ }\href@noop {} {\emph {\bibinfo {title} {{An Introduction to Linear Algebra}}}}\ (\bibinfo  {publisher} {Clarendon Press},\ \bibinfo {address} {Oxford},\ \bibinfo {year} {1955})\BibitemShut {NoStop}%
\bibitem [{\citenamefont {Lieb}(1994)}]{lieb1994}%
  \BibitemOpen
  \bibfield  {author} {\bibinfo {author} {\bibfnamefont {E.~H.}\ \bibnamefont {Lieb}},\ }\bibfield  {title} {\bibinfo {title} {{Flux Phase of the Half-Filled Band}},\ }\href {https://doi.org/10.1103/PhysRevLett.73.2158} {\bibfield  {journal} {\bibinfo  {journal} {Phys. Rev. Lett.}\ }\textbf {\bibinfo {volume} {73}},\ \bibinfo {pages} {2158} (\bibinfo {year} {1994})}\BibitemShut {NoStop}%
\bibitem [{\citenamefont {Klocke}\ \emph {et~al.}(2024)\citenamefont {Klocke}, \citenamefont {Simm}, \citenamefont {Zhu}, \citenamefont {Trebst},\ and\ \citenamefont {Buchhold}}]{klocke2024arXiv}%
  \BibitemOpen
  \bibfield  {author} {\bibinfo {author} {\bibfnamefont {K.}~\bibnamefont {Klocke}}, \bibinfo {author} {\bibfnamefont {D.}~\bibnamefont {Simm}}, \bibinfo {author} {\bibfnamefont {G.-Y.}\ \bibnamefont {Zhu}}, \bibinfo {author} {\bibfnamefont {S.}~\bibnamefont {Trebst}},\ and\ \bibinfo {author} {\bibfnamefont {M.}~\bibnamefont {Buchhold}},\ }\bibfield  {title} {\bibinfo {title} {{Entanglement dynamics in monitored Kitaev circuits: Loop models, symmetry classification, and quantum Lifshitz scaling}},\ }\href {https://arxiv.org/abs/2409.02171} {\bibfield  {journal} {\bibinfo  {journal} {arXiv:2409.02171}\ } (\bibinfo {year} {2024})}\BibitemShut {NoStop}%
\bibitem [{\citenamefont {Bernard}\ and\ \citenamefont {LeClair}(2002)}]{bernard2002}%
  \BibitemOpen
  \bibfield  {author} {\bibinfo {author} {\bibfnamefont {D.}~\bibnamefont {Bernard}}\ and\ \bibinfo {author} {\bibfnamefont {A.}~\bibnamefont {LeClair}},\ }\bibfield  {title} {\bibinfo {title} {{A classification of non-Hermitian random matrices}},\ }in\ \href {https://link.springer.com/chapter/10.1007/978-94-010-0514-2_19} {\emph {\bibinfo {booktitle} {{Statistical Field Theories}}}},\ \bibinfo {editor} {edited by\ \bibinfo {editor} {\bibfnamefont {A.}~\bibnamefont {Cappelli}}\ and\ \bibinfo {editor} {\bibfnamefont {G.}~\bibnamefont {Mussardo}}}\ (\bibinfo  {publisher} {Springer},\ \bibinfo {address} {Dordrecht},\ \bibinfo {year} {2002})\ pp.\ \bibinfo {pages} {207--214}\BibitemShut {NoStop}%
\bibitem [{\citenamefont {Kawabata}\ \emph {et~al.}(2019)\citenamefont {Kawabata}, \citenamefont {Shiozaki}, \citenamefont {Ueda},\ and\ \citenamefont {Sato}}]{kawabata2019PRX}%
  \BibitemOpen
  \bibfield  {author} {\bibinfo {author} {\bibfnamefont {K.}~\bibnamefont {Kawabata}}, \bibinfo {author} {\bibfnamefont {K.}~\bibnamefont {Shiozaki}}, \bibinfo {author} {\bibfnamefont {M.}~\bibnamefont {Ueda}},\ and\ \bibinfo {author} {\bibfnamefont {M.}~\bibnamefont {Sato}},\ }\bibfield  {title} {\bibinfo {title} {Symmetry and topology in {Non-Hermitian} physics},\ }\href {https://doi.org/10.1103/PhysRevX.9.041015} {\bibfield  {journal} {\bibinfo  {journal} {Phys. Rev. X}\ }\textbf {\bibinfo {volume} {9}},\ \bibinfo {pages} {041015} (\bibinfo {year} {2019})}\BibitemShut {NoStop}%
\bibitem [{\citenamefont {Zhou}\ and\ \citenamefont {Lee}(2019)}]{zhou2019PRB}%
  \BibitemOpen
  \bibfield  {author} {\bibinfo {author} {\bibfnamefont {H.}~\bibnamefont {Zhou}}\ and\ \bibinfo {author} {\bibfnamefont {J.~Y.}\ \bibnamefont {Lee}},\ }\bibfield  {title} {\bibinfo {title} {{Periodic table for topological bands with non-Hermitian symmetries}},\ }\href {https://doi.org/10.1103/PhysRevB.99.235112} {\bibfield  {journal} {\bibinfo  {journal} {Phys. Rev. B}\ }\textbf {\bibinfo {volume} {99}},\ \bibinfo {pages} {235112} (\bibinfo {year} {2019})}\BibitemShut {NoStop}%
\bibitem [{\citenamefont {S\'a}\ \emph {et~al.}(2023)\citenamefont {S\'a}, \citenamefont {Ribeiro},\ and\ \citenamefont {Prosen}}]{sa2023PRX}%
  \BibitemOpen
  \bibfield  {author} {\bibinfo {author} {\bibfnamefont {L.}~\bibnamefont {S\'a}}, \bibinfo {author} {\bibfnamefont {P.}~\bibnamefont {Ribeiro}},\ and\ \bibinfo {author} {\bibfnamefont {T.}~\bibnamefont {Prosen}},\ }\bibfield  {title} {\bibinfo {title} {{Symmetry Classification of Many-Body Lindbladians: Tenfold Way and Beyond}},\ }\href {https://doi.org/10.1103/PhysRevX.13.031019} {\bibfield  {journal} {\bibinfo  {journal} {Phys. Rev. X}\ }\textbf {\bibinfo {volume} {13}},\ \bibinfo {pages} {031019} (\bibinfo {year} {2023})}\BibitemShut {NoStop}%
\bibitem [{\citenamefont {Kawabata}\ \emph {et~al.}(2023)\citenamefont {Kawabata}, \citenamefont {Kulkarni}, \citenamefont {Li}, \citenamefont {Numasawa},\ and\ \citenamefont {Ryu}}]{kawabata2023PRXQ}%
  \BibitemOpen
  \bibfield  {author} {\bibinfo {author} {\bibfnamefont {K.}~\bibnamefont {Kawabata}}, \bibinfo {author} {\bibfnamefont {A.}~\bibnamefont {Kulkarni}}, \bibinfo {author} {\bibfnamefont {J.}~\bibnamefont {Li}}, \bibinfo {author} {\bibfnamefont {T.}~\bibnamefont {Numasawa}},\ and\ \bibinfo {author} {\bibfnamefont {S.}~\bibnamefont {Ryu}},\ }\bibfield  {title} {\bibinfo {title} {{Symmetry of Open Quantum Systems: Classification of Dissipative Quantum Chaos}},\ }\href {https://doi.org/10.1103/PRXQuantum.4.030328} {\bibfield  {journal} {\bibinfo  {journal} {PRX Quantum}\ }\textbf {\bibinfo {volume} {4}},\ \bibinfo {pages} {030328} (\bibinfo {year} {2023})}\BibitemShut {NoStop}%
\bibitem [{\citenamefont {Garc\'{\i}a-Garc\'{\i}a}\ \emph {et~al.}(2024)\citenamefont {Garc\'{\i}a-Garc\'{\i}a}, \citenamefont {S\'a}, \citenamefont {Verbaarschot},\ and\ \citenamefont {Yin}}]{garcia2024PRD}%
  \BibitemOpen
  \bibfield  {author} {\bibinfo {author} {\bibfnamefont {A.~M.}\ \bibnamefont {Garc\'{\i}a-Garc\'{\i}a}}, \bibinfo {author} {\bibfnamefont {L.}~\bibnamefont {S\'a}}, \bibinfo {author} {\bibfnamefont {J.~J.~M.}\ \bibnamefont {Verbaarschot}},\ and\ \bibinfo {author} {\bibfnamefont {C.}~\bibnamefont {Yin}},\ }\bibfield  {title} {\bibinfo {title} {{Toward a classification of PT-symmetric quantum systems: From dissipative dynamics to topology and wormholes}},\ }\href {https://doi.org/10.1103/PhysRevD.109.105017} {\bibfield  {journal} {\bibinfo  {journal} {Phys. Rev. D}\ }\textbf {\bibinfo {volume} {109}},\ \bibinfo {pages} {105017} (\bibinfo {year} {2024})}\BibitemShut {NoStop}%
\bibitem [{\citenamefont {Lieu}\ \emph {et~al.}(2020)\citenamefont {Lieu}, \citenamefont {McGinley},\ and\ \citenamefont {Cooper}}]{lieu2020PRL}%
  \BibitemOpen
  \bibfield  {author} {\bibinfo {author} {\bibfnamefont {S.}~\bibnamefont {Lieu}}, \bibinfo {author} {\bibfnamefont {M.}~\bibnamefont {McGinley}},\ and\ \bibinfo {author} {\bibfnamefont {N.~R.}\ \bibnamefont {Cooper}},\ }\bibfield  {title} {\bibinfo {title} {{Tenfold Way for Quadratic Lindbladians}},\ }\href {https://doi.org/10.1103/PhysRevLett.124.040401} {\bibfield  {journal} {\bibinfo  {journal} {Phys. Rev. Lett.}\ }\textbf {\bibinfo {volume} {124}},\ \bibinfo {pages} {040401} (\bibinfo {year} {2020})}\BibitemShut {NoStop}%
\bibitem [{\citenamefont {Altland}\ \emph {et~al.}(2021)\citenamefont {Altland}, \citenamefont {Fleischhauer},\ and\ \citenamefont {Diehl}}]{altland2021PRX}%
  \BibitemOpen
  \bibfield  {author} {\bibinfo {author} {\bibfnamefont {A.}~\bibnamefont {Altland}}, \bibinfo {author} {\bibfnamefont {M.}~\bibnamefont {Fleischhauer}},\ and\ \bibinfo {author} {\bibfnamefont {S.}~\bibnamefont {Diehl}},\ }\bibfield  {title} {\bibinfo {title} {{Symmetry Classes of Open Fermionic Quantum Matter}},\ }\href {https://doi.org/10.1103/PhysRevX.11.021037} {\bibfield  {journal} {\bibinfo  {journal} {Phys. Rev. X}\ }\textbf {\bibinfo {volume} {11}},\ \bibinfo {pages} {021037} (\bibinfo {year} {2021})}\BibitemShut {NoStop}%
\bibitem [{\citenamefont {Kawasaki}\ \emph {et~al.}(2022)\citenamefont {Kawasaki}, \citenamefont {Mochizuki},\ and\ \citenamefont {Obuse}}]{kawasaki2022PRB}%
  \BibitemOpen
  \bibfield  {author} {\bibinfo {author} {\bibfnamefont {M.}~\bibnamefont {Kawasaki}}, \bibinfo {author} {\bibfnamefont {K.}~\bibnamefont {Mochizuki}},\ and\ \bibinfo {author} {\bibfnamefont {H.}~\bibnamefont {Obuse}},\ }\bibfield  {title} {\bibinfo {title} {{Topological phases protected by shifted sublattice symmetry in dissipative quantum systems}},\ }\href {https://doi.org/10.1103/PhysRevB.106.035408} {\bibfield  {journal} {\bibinfo  {journal} {Phys. Rev. B}\ }\textbf {\bibinfo {volume} {106}},\ \bibinfo {pages} {035408} (\bibinfo {year} {2022})}\BibitemShut {NoStop}%
\end{thebibliography}%

\end{document}